\newcommand{\E}{\mathbb{E}}
\newcommand{\vect}[1]{\boldsymbol{#1}}
\DeclareMathOperator{\diag}{diag}
\DeclareMathOperator{\argmin}{argmin}
\DeclareMathOperator{\argmax}{argmax}
\newcommand{\bigtriangleq}{\mathbin{\setstackgap{S}{0pt}\stackMath\Shortstack{\smalltriangleup\\ =}}}
\newcommand*\rfrac[2]{{}^{#1}\!/_{#2}}
\newtheorem{theorem}{Theorem}
\newtheorem{lemma}[theorem]{Lemma}
\begin{document}
\title{Distributed User Clustering and  Resource Allocation for Imperfect NOMA in Heterogeneous Networks }
\author{
Abdulkadir~Celik,~\IEEEmembership{Member,~IEEE,} Ming-Cheng~Tsai,~\IEEEmembership{Student Member,~IEEE,} \\ Redha~M.~Radaydeh,~\IEEEmembership{Senior Member,~IEEE,} Fawaz~S.~Al-Qahtani,~\IEEEmembership{Member,~IEEE,} \\ and~Mohamed-Slim~Alouini,~\IEEEmembership{Fellow,~IEEE}

\thanks{ 
A. Celik, Ming-Cheng~Tsai, and M-S. Alouini are with Computer, Electrical, and Mathematical Sciences and Engineering Division at King Abdullah University of Science and Technology (KAUST), Thuwal, KSA. Corresponding author: Abdulkadir Celik (abdulkadir.celik@kaust.edu.sa).

R. M. Radaydeh is with the Electrical Engineering Program, Department of Engineering and Technology, Texas A\& M University-Commerce, , Commerce,
TX 75428 USA.

F. S. Al-Qahtani is with Research, Development, Innovation (RDI) Division of Qatar Foundation, Doha, Qatar.  

This work was supported by NPRP from the Qatar National Research Fund under grant no. 8-1545-2-657. A part of this paper has been presented at IEEE GLOBECOM 2018, Singapore \cite{celik2017ULNOMA}. 
}}

\markboth{ IEEE Transactions on Communications}{Celik \MakeLowercase{\textit{et al.}}: Distributed Cluster Formation and Power-Bandwidth Allocation for Imperfect NOMA in DL-HetNets}
\maketitle
\begin{abstract}
In this paper, we propose a distributed cluster formation (CF) and resource allocation (RA) framework for non-ideal non-orthogonal multiple access (NOMA) schemes in heterogeneous networks. The imperfection of the underlying NOMA scheme is due to the receiver sensitivity and interference residue from non-ideal successive interference cancellation (SIC), which is generally characterized by a fractional error factor (FEF). Our analytical findings first show that several factors have a significant impact on the achievable NOMA gain. Then, we investigate fundamental limits on NOMA cluster size as a function of FEF levels, cluster bandwidth, and quality of service (QoS) demands of user equipments (UEs). Thereafter, a clustering algorithm is developed by taking feasible cluster size and channel gain disparity of UEs into account. Finally, we develop a distributed $\alpha$-fair RA framework where $\alpha$ governs the trade-off between maximum throughput and proportional fairness objectives. Based on the derived closed-form optimal power levels, the proposed distributed solution iteratively updates bandwidths, clusters, and UEs' transmission powers. Numerical results demonstrate that proposed solutions deliver a higher spectral and energy efficiency than traditionally adopted basic NOMA cluster size of two. We also show that an imperfect NOMA cannot always provide better performance than orthogonal multiple access under certain conditions. Finally, our numerical investigations reveal that NOMA gain is maximized under  downlink/uplink decoupled (DUDe) UE association.

\end{abstract}

\IEEEpeerreviewmaketitle
\begin{IEEEkeywords} 
Downlink uplink decoupling, alpha fairness, proportional fairness, imperfect SIC, residual interference.
\end{IEEEkeywords}

\section{Introduction}
\label{sec:intro}
%\subsection{Motivation}
%\label{sec:motto}
\lettrine{E}{ver} increasing number of communications devices with the ambitious quality of service (QoS) demands puts forward challenging goals for fifth-generation (5G) networks such as massive connectivity, enhanced mobile broadband, ultra-reliability, low-latency,  etc. To fulfill such demands,  ultra-dense heterogeneous networks (HetNets) have already been considered as a promising solution since densification of the network has the ability to boost the network coverage and capacity while reducing the operational and capital expenditures of mobile operators  \cite{celik2017ULNOMA}. However, traditional orthogonal multiple access (OMA) schemes employed by today's HetNets dedicate radio resources to a certain user either in time, frequency, or code domains, which is not adequately spectral efficient for the expected massive number of users. 
%Furthermore, traditional HetNets associate both downlink (DL) and uplink (UL) user equipment (UEs) with base stations (BSs) as per a biased received signal strength information (RSSI) such that heavy traffic is offloaded from highly congested macro BSs (MBSs) to small BSs (SBSs). Nevertheless, this coupled approach may not always yield a desirable performance for UL transmission that has different traffic conditions. Alternatively, decoupling the DL and UL user associations are recently considered as an effective approach for an improved traffic offloading and UL performance \cite{boccardi2016decouple, celik2017joint}. 
Having its root in multi-user detection theory, non-orthogonal multiple access (NOMA) can momentarily serve multiple users on the same radio resource by multiplexing them either in power or code domain \cite{Dai15}. As a result, it has recently gained attention with its ability to serve toward higher spectral efficiency and massive connectivity goals of the next generation networks. In particular, power domain NOMA ensures a certain reception power for each user such that some users operate in low power levels in order to cancel dominant interference using successive interference cancellation (SIC) while some others transmit at high power levels at the expense of limited interference cancellation (IC) opportunity. Even though such a strategy paves the way for a notion of fairness embedded in NOMA, the impacts of fair bandwidth scheduling on the network performance is still an interesting phenomenon to be investigated.

In order to extract the desired signal, the SIC receiver first decodes the strongest interference, then re-generates the transmitted signal, and finally subtracts it from the received composite signal, which is repeated for succeeding interference components. However, a real-life NOMA system is required to account for the following challenges of a practical SIC receiver: First, a more complicated power control policy is necessary since decoder needs to observe a minimum SINR at each cancellation stage, which is mainly characterized by the receiver sensitivity \cite{Andrews2003}. Thus, the optimal power control strategy must comply with the resulting disparity of the received power levels, which is also referred to as \textit{SIC constraints}. Second, system performance can substantially be deteriorated due to the amplitude and phase estimation errors which determine the residual interference after SIC and is often quantified by a \textit{fractional error factor} (FEF) \cite{SIC_OFDMA}. Therefore, it is necessary to develop an optimal power and bandwidth allocation scheme which accounts for these practical challenges.   

Furthermore, cluster formation (CF) is of the utmost importance to maximize the gain achieved by a NOMA scheme. Ongoing research efforts typically consider a perfect NOMA scheme for basic cluster of size two, which directly reduces the clustering to a pairing problem. However, clustering more user equipments (UEs) to share the same bandwidth provides an improved spectral efficiency at the cost of SIC delay which linearly increases with the cluster size. Therefore, CF problem involves two main tasks: 1) Determining the optimal cluster sizes and 2) Grouping UEs to maximize the overall network performance. Accordingly, this paper addresses a distributed framework for cluster formation (CF) and $\alpha$-fair resource allocation for UL-HetNets under imperfect NOMA scheme.

\subsection{Related Works}
Related works on NOMA can be exemplified as follows: The impact of UE grouping is investigated in \cite{Ding2015TVT} for a two-UE DL-NOMA system with fixed and cognitive radio inspired power allocation schemes. The work in \cite{Liu2016fairness} proposed three different sub-optimal approaches for max-min fair UE clustering problem. Authors of \cite{Liu2017Joint} iteratively built clusters where each iteration jointly optimizes beam-forming and power allocation for given clusters. User pairing for UL-NOMA is investigated in \cite{Sedaghat2018} which divides the set of UEs into disjunct pairs and assigns the available resources to these pairs by considering some predefined power allocation schemes. In \cite{Kang2018}, authors study the optimal user pairing for the NOMA system in the sense of maximizing the total sum rate. In \cite{Elbamby2017}, authors study the problem of resource optimization, mode selection, and power allocation subject to queue stability constraints under the assumption of in-band full duplex base stations (BSs). Beam-forming and power allocation of a multiple-input-multiple-output (MIMO) NOMA system are investigated in \cite{Ali2017beam} where two-UE clusters are formed from high and low channel gain UEs with the consideration of channel gain correlations. The work in \cite{Ali2016} proposed a UL power allocation scheme by first grouping users into a single cluster and then optimizing the power allocation. 

In \cite{Ding2019Improved}, Ding et. al. proposed a cluster beamforming strategy to jointly optimize beamforming vectors and power allocation coefficients for MIMO-NOMA clustering with the purpose of energy efficiency. The sum rate maximization problem of mm-wave-NOMA systems is investigated in \cite{Cui2018Unsupervised} where authors also develop a K-means-based machine learning algorithm for user clustering. In \cite{Tsai2018Quality}, a suboptimal quality-balanced clustering approach is proposed to optimize the total sum rate in a system. The impacts of channel state information (CSI) imperfections on the NOMA performance are investigated in  \cite{Fang2017Joint, Wei2017Optimal, Xiao2018Reinforcement}: Energy efficient resource scheduling is addressed in \cite{Fang2017Joint} where authors also account for imperfect CSI for a generic cluster size. In \cite{Wei2017Optimal}, power-efficient resource allocation is studied for multicarrier NOMA systems.  Accounting for the imperfection of CSI at transmitter side, a solution is proposed to  jointly design the power-rate allocation, user scheduling, and SIC decoding policy for minimizing the total transmit power.  An interesting problem is investigated in \cite{Xiao2018Reinforcement} for NOMA systems vulnerable to jamming attacks. Authors proposed a reinforcement learning-based power control scheme without being aware of the jamming and CSI. 
Except \cite{Ali2016,Ali2017beam, Fang2017Joint,Ding2019Improved,Cui2018Unsupervised,Tsai2018Quality}, proposed methods in these works mostly focus on the basic form of a NOMA (i.e., pairing only two UEs where power allocation is analytically more tractable) by ignoring the benefits of incorporating more UEs. 

Considering the massive connectivity goal of the future networks, it is important to investigate NOMA schemes that allow larger cluster sizes for the sake of spectral efficiency and increased connectivity. Since it is possible to employ sophisticated SIC receivers at BSs with high computational power, possible IC delay of larger cluster sizes can be mitigated in order to enhance UL-NOMA performance.

In \cite{Tabassum2017modeling}, multi-cell uplink NOMA systems are analyzed using the theory of Poisson cluster process. The impact of channel gain disparity on DL-NOMA is investigated in \cite{Ding2015TVT} for a two-user system with fixed  and cognitive radio inspired power allocation. A near optimal solution was proposed by combining Lagrangian duality and dynamic programming for joint power and channel allocation in \cite{Lei2016}. In \cite{Zhang2016}, the authors derived closed-form expressions for the outage probability of two-user UL-NOMA assuming fixed powers of different users.  A simple power and rate allocation scheme for UL-NOMA is developed for a multicarrier system in \cite{Choi2018} where a practical modulation and coding scheme is employed at each UE. In \cite{Pappi2017},  a distributed UL-NOMA scheme is proposed for cloud radio access networks, which can offer substantial improvement over benchmark schemes. In these works, authors mostly consider a basic NOMA cluster of size two  except \cite{Ali2016} where authors develop a general DL and UL power control framework for a generic cluster size.

In \cite{Yang2016}, a dynamic power allocation scheme is proposed for both DL and UL NOMA scenarios with two users with various QoS requirements. User clustering and power-bandwidth allocation of HetNets is studied in \cite{Celik2017DLNOMA, celik2017ULNOMA} where clusters are formed according to different objectives such as maximum sum-rate, max-min fairness, and energy-spectrum cost minimization. The work in \cite{Celik2017DLNOMA} is further extended for for DL-HetNets in \cite{Celik2019Distributed} where a user clustering and power-bandwidth allocation is proposed. The main limitation of this work is treating the cluster size as a given design parameter. However, it is necessary to analyze the maximum permissible cluster size for UL-HetNets since the next-generation networks are expected to accommodate the massive connectivity. Therefore, allowing more low-power users on the same resource block is desirable for serving a large number of users and increasing the spectral efficiency of the NOMA scheme. In this regard, the proposed user clustering in this paper is apart from that of \cite{Celik2019Distributed} such that we analytically derive the maximum cluster size as a function of QoS demands, channel quality, cluster bandwidth, and SIC efficiency. Accordingly, the proposed clustering algorithm forms the clusters jointly with bandwidth allocation and ensures the QoS demand of each cluster is satisfied. Noting that \cite{Celik2019Distributed} is not involved with resource allocation fairness, our closed-form power allocations are also different since UL-UEs do not compete for a common power source. Excluding \cite{Celik2017DLNOMA, celik2017ULNOMA,Celik2019Distributed}, these works also do not consider the residual interference caused by the error propagation during the IC process.

\subsection{Main Contributions}
Our main contributions can be summarized as follows: 
\begin{itemize}
\item 
An imperfect NOMA scheme is investigated in order to account for practical SIC constraints due to the receiver sensitivity and residual interference. Our analytical findings show that decoding order, SIC constraints, residual interference, and channel gain disparity of cluster members have a significant impact on the achievable NOMA gain. These findings are then supported with numerical results which clearly demonstrate that NOMA cannot always provide a better performance than OMA depending upon the SINR requirements and FEF levels (i.e, residual interference).  
\item
Existing works on NOMA typically assume a basic cluster size of two and simply pair UEs to form clusters. However, cluster sum-rates and spectral efficiency increase with the cluster size as more UEs share the same bandwidth.  Hence, the largest feasible cluster size is first analytically obtained as a function of FEF levels, cluster bandwidth, and QoS requirements of cluster members. Then, we show that a given bandwidth can accommodate more UEs as the SINR requirements and FEF levels decrease, which is especially beneficial to outline capabilities of NOMA to serve for the massive connectivity. Thereafter, we propose a cluster formation method where each BS first determines the largest feasible cluster sizes and then iteratively matches its UEs with clusters to maximize the channel gain disparity for an improved NOMA gain. 
\item 
Based on the developed cluster formation method, we develop a distributed $\alpha$-fair resource allocation methodology where $\alpha \in [0,1]$ manages the balance between maximum throughput and proportional fairness. Resource allocation is decomposed into power control and bandwidth allocation problems. Based on the derived closed-form power control expression of the considered imperfect NOMA scheme, the proposed algorithm iteratively updates bandwidth allocations, cluster sizes, and transmission powers to maximize the $\alpha$-fair network objective. Finally, the performance gain of the developed algorithm is compared with OMA and basic NOMA schemes under different system network parameters such as BS/UE density, traffic offloading factor, FEF, and QoS requirements.   
\end{itemize}

\subsection{Notations and Paper Organization}
Throughout the paper, sets and their cardinality are denoted with calligraphic and regular uppercase letters (e.g., $| \mathcal{A}|=A$), respectively. Vectors and matrices are represented in lowercase and uppercase boldfaces (e.g., $\vect{a}$ and $\vect{A}$), respectively. Superscripts $b$, $c$, and $i$ are used for indexing BSs/cells, clusters, and UEs, respectively. 
%The optimal values of variables are always marked with superscript $\star$.
% e.g., $x_{c,r}^{i,\star}$ and $y_{c,r}^\star$. For their convenience, we refer readers to Table \ref{Ts:notations} for a list of frequently used notations.
The remainder of the paper is organized as follows: Section \ref{sec:sys_mod} introduces the network model and UE association schemes. Section \ref{sec:SIC} addresses constraints, decoding order, and imperfections of practical SIC receivers and their impacts on achievable NOMA gain. Section \ref{sec:CF-PBA} first provides the problem statement and then gives an overview of proposed solution methodology. Section \ref{sec:CF} analyses the feasible cluster size and develops a clustering algorithm.  Section \ref{sec:PBA} addresses proposed $\alpha$-fair distributed power and bandwidth allocation along with the algorithm of overall solution. Numerical results are presented in Section \ref{sec:res} and Section \ref{sec:conc} concludes the paper with a few remarks.

\section{Network Model}
\label{sec:sys_mod}

We consider UL transmission of a 2-tiered HetNet that operates on a single-input and single-output NOMA scheme. Each tier represents a particular cell class, i.e., tier-1 consists of a single macrocell and tier-2 comprises of smallcells. The index set of all BSs is denoted by $\mathcal{B}=\{b \: | \: 0 \leq b \leq B \}$ where $B$ denotes the number of small base staions (SBSs), $b=0$ is for the macro base station (MBS) index, and $1 \leq b \leq B$ are indices for SBSs, respectively\footnote{The terms BS, cell, and their indices are used interchangeably throughout the paper.}.  Maximum transmission powers of UEs and BSs are denoted as  $\bar{P}_u$ and $\bar{P}_c$, respectively, where $\bar{P}_c$ equals to $\bar{P}_m$ and $\bar{P}_s$ for the MBS and SBSs, respectively.

Association of UEs with the BS can be done either in a DL/UL coupled (DUCo) or decoupled (DUDe) fashion. Conventional DUCo scheme associates UEs with the same BS for both DL and UL transmission based on received signal strength information (RSSI), which yields a significant traffic load on macrocells due to MBS's high transmission power. Therefore, UE association is typically done by introducing a bias factor, $0 \leq \flat \leq 1$, in order to offload DL traffic from MBSs to SBSs. Nonetheless, requiring UEs to follow the same association in both UL and DL may always not yield a desirable performance. While eeping DL association method the same as in DUCo, DUDe scheme alternatively determines the UL association based on channel gain such that a UE can be associated with a nearby SBS in the UL even if it is associated with the MBS in the DL \cite{boccardi2016decouple,celik2017joint}.
% \begin{figure}[t!]
%     \centering
%         \includegraphics[width=0.5\textwidth]{figures/DUDe.png}
%         \caption{Demonstration of DL/UL decoupling and user clustering.}
%         \label{fig:DUDe}
% \end{figure}
% Denoting the composite channel gain from UE$_u$ to BS$_b$ as $h_u^b$, DUDe association of UE$_u$ is determined by $\Psi_u =\underset{\forall b \in \mathcal{B}}{\argmin} \left \{h_u^b \right \}$.
% For a bias factor of $\flat=0.7$, a simple DUDe scenario is depicted in Fig. \ref{fig:DUDe} where decoupled UEs, shown as circled hexstars, are associated with the MBS and nearby SBS for DL and UL, respectively. In DUCo scheme, on the other hand, they are associated with MBS for both DL and UL. 

Contingent upon the user associations, index set of all $U \triangleq \sum_b U_b$ UEs is given as $\mathcal{U}\triangleq\bigcup_b \mathcal{U}_b$ where $\mathcal{U}_b$ is the set of $U_b$ UEs associated with BS$_b$. Each BS partitions $\mathcal{U}_b$ into disjoint $C_b$ clusters such that $\mathcal{K}_b^c$ symbolizes the set of $K_b^c$ UEs within cluster $c$, i.e., $U_b=\sum_{c\in \mathcal{C}_b} K_b^c$. Similarly, the set of all $C$ clusters are denoted as $\mathcal{C}\triangleq \bigcup_b \mathcal{C}_b$ where $\mathcal{C}_b$ is the set of $C_b$ clusters of BS$_b$. Entire UL bandwidth is divided into $\Theta$ resource blocks (RBs) each of which has a bandwidth of $W$ Hz. The available set of RBs can be exploited by $C$ clusters based on an $\alpha$-fair resource allocation policy. The number of RBs allocated to $\mathcal{K}_b^c$ is denoted as $\theta_b^c \in [0, \Theta]$, $\sum_{b,c}\theta_b^c \leq \Theta$. For the remainder of the paper, we assume that a UE can be associated with exactly one cluster at a time and allocated RBs are dedicated to the corresponding clusters.

\section{Impacts of Constraints on NOMA Gain}
\label{sec:SIC}
In this section, we first introduce the constraints and imperfections of a practical SIC receiver, then analyze the impacts of decoding order and receiver sensitivity on the achievable NOMA gain. 

\subsection{Constraints and Imperfections of SIC Receivers} 
  Let us now focus on a generic cluster of BS$_b$ $\mathcal{K}_b^c = \{i| \: i \in \mathcal{U}_b, h_{i-1}^b \geq h^b_i \geq h^b_{i+1}, \delta_{b,c}^i=1 \}$ where $\delta_{b,c}^i \in \{ 0,1 \}$ is a binary indicator for cluster membership. For the UL-NOMA transmission, we consider the following decoding order 
\begin{equation}
\label{eq:order}
\underbrace{\omega_{b,c}^{K_b^c} h_{K_b^c}^b < ... < }_{\substack{\text{Lower Rank Decoding Order} \\ \mathcal{O}_i^\ell\text{ that can not be cancelled}}} \omega_{b,c}^i h_i^b\underbrace{< ... < \omega_{b,c}^1 h_1^b}_{\substack{\text{Higher Rank Decoding Order} \\ \mathcal{O}_i^h \text{ that can be cancelled}}},
\end{equation}
where $h_i^b $ is the composite channel gain from UE$_i$ to BS$_b$, $\omega_{b,c}^i h_i^b $ is the power received from UE$_i$ which is normalized by the maximum transmission power $\bar{P}_u$, $\omega_{b,c}^i $ is the power allocation weight, $\mathcal{O}_i^\ell=\{i+1, \ldots, K_b^c \}$ is the lower rank decoding order set, and $\mathcal{O}_i^h=\{1, \ldots, i-1\}$ is the higher rank decoding order set for UE$_i$. Notice that the UL decoding order is the reverse of DL order considered in the literature, which is addressed in the next section. Accordingly, a generic SINR representation of the imperfect SIC receiver can be given by
\begin{align}
\label{eq:sinr_c_dl}
\Gamma_{b,c}^{i}& =\frac{\delta_{b,c}^{i} \omega_{b,c}^{i} h_i^b}{\epsilon_b  \sum_{\substack{j =1 \\j \in \mathcal{K}_b^c}}^{i-1} \delta_{b,c}^{j}  \omega_{b,c}^{j} h_j^b
+ \sum_{\substack{k =i+1 \\j \in \mathcal{K}_b^c}}^{K_b^c} \delta_{b,c}^{k}  \omega_{b,c}^{k} h_k^b
+\varrho_b^c},
\end{align} 
where $0 \leq \epsilon_b \leq 1$ is the FEF of BS$_b$ which characterizes the residual interference, $\varrho_b^c \bigtriangleq \sigma \theta_b^c /\bar{P}_u$, $\sigma=N_0 \theta_b^c W$ is the thermal receiver noise power, and $N_0$ is the noise power spectral density. The first term of the denominator represents the residual interference after cancellation which can indeed be linked to the SIC efficiency, i.e., $(1-\epsilon_b)$. On the other hand, the second term of denominator represents the uncancelled lower rank interference. 

The first term of the denominator represent the residual interference after cancellation which can indeed be linked to the SIC efficiency, i.e., $(1-\epsilon_b)$. On the other hand, the second term of denominator represents the uncancelled lower rank interference. The residual interference is primarily caused by amplitude, phase, and channel estimation errors, which lead to imperfect regeneration of the received signals. Another source of SIC imperfections is erroneous bit decisions in the previously decoded users. Under low bit-error rate requirements ($<10^{-5}$), error propagation of bit decisions is also a result of the imperfect estimations \cite{Hasan2004cancellation}. Multistage detection, error correction coding, iterative detection, enhanced channel estimation are among the key techniques to ameliorate FEF levels of SIC receivers\cite{Andrews2005SIC}. That is, the FEF is an important hardware parameter to be taken into account in power allocation strategy because being FEF agnostic can substantially deteriorate the NOMA performance as $\epsilon_b \rightarrow 1$.

For a desirable performance, a cluster member should be able to cancel the dominant interference while tolerating the SIC imperfection and interference induced from lower rank UEs.  Following from \eqref{eq:sinr_c_dl}, the achievable data rate of UE$_i$ is given by 
\begin{equation}
\label{eq:capacity}
\mathrm{R}_i=W \theta_b^c \log_2(1+ \Gamma_{b,c}^i), \forall i \in \mathcal{K}_b^c.
\end{equation}

$\mathrm{R}_i$ is generally required to be higher than a certain service rate agreement, $\mathrm{R}_i \geq \overline{\mathrm{R}}_i, \forall i$, which is referred to as \textit{QoS constraint}\footnote{Instead of the inelastic traffic conditions where users require a minimum instantaneous throughput requirements, we are interested in elastic users with average QoS demands over a long time period.} and given by 

\begin{equation}
\label{eq:QoSC}
\Gamma_{b,c}^i \geq 2^{\frac{\bar{\mathrm{R}}_i}{\theta_b^cW}}-1, \forall i \in \mathcal{K}_b^c, \forall b, \forall c.
\end{equation}
On the other hand, SIC constraints are given by
\begin{equation}
\label{eq:SICconst}
\Gamma_{b,c}^{i} \geq 10^{\frac{\S_b}{10}}, \forall i \in \mathcal{K}_b^c, \forall b, \forall c,
\end{equation}
where  $\S_b$ is the receiver sensitivity of BS$_b$ which is often given in units of dB. These two constraints can be combined and projected onto SINRs as a unified constraint as follows
\begin{equation}
\label{eq:Gammabar}
\Gamma_{b,c}^i  \geq \bar{\Gamma}_{b,c}^i (\theta_b^c)=\max \left( {10^{\frac{\S_b}{10}},2^{\frac{\bar{\mathrm{R}}_i}{\theta_b^cW}}-1} \right), \forall i \in \mathcal{K}_b^c,
\end{equation}
which is referred to as \textit{composite SINR constraints} (CSCs) in the remainder of paper. 

\subsection{Impacts of SIC Constraints and Imperfections} 
\label{sec:order_analysis}
Even though DL-NOMA decodes UE signals in descending order of their channel gains, employing the same order in UL-NOMA may not give the desired performance under CSCs. To be more specific, let us consider a basic NOMA cluster of UE$_k$ and UE$_\ell$ with channel gains $h_k$ and $h_\ell$, $h_k \geq h_\ell$, and composite SINR demands $\bar{\Gamma}_k$ and $\bar{\Gamma}_\ell$, respectively. 
\subsubsection{Descending Order}
Employing the descending decoding order as in the DL case (i.e., UE$_k$ cancel the interference of UE$_\ell$), OMA and NOMA sum-rates can be respectively given as
\begin{align}
\label{eq:CO}
&\mathrm{R}_{\downarrow}^O= \rfrac{1}{2}\left\{ \log_2 \left( 1+ \rho h_k \right) + \log_2 \left( 1+ \rho h_\ell \right)\right \}, \\
\label{eq:CN}
& \mathrm{R}_{\downarrow}^N = \log_2 \left( 1+ \frac{\omega_k h_k }{ \epsilon \omega_\ell h_\ell + \rfrac{1}{\rho}}\right) + \log_2 \left( 1+ \frac{\omega_\ell h_\ell }{ \omega_k h_k + \rfrac{1}{\rho}}\right),
\end{align}
where $0 \leq \omega_k,  \omega_\ell  \leq 1$ are power weights and $\rho=\rfrac{\bar{P}_u}{\sigma}$. As $\rho \rightarrow \infty$ and $\epsilon \rightarrow 0$, asymptotic capacity of OMA and perfect NOMA can be respectively expressed as
\begin{align}
\label{eq:CO_asymp}
  \lim_{\substack{\rho \rightarrow \infty \\ \epsilon \rightarrow 0 }}  \mathrm{R}_{\downarrow}^O & \simeq \rfrac{1}{2} \{ \log_2 \left( \rho h_k \right) + \log_2 \left( \rho h_\ell \right) \}= \rfrac{1}{2} \log_2 \left( \rho^2 h_k h_\ell \right),\\
\label{eq:CN_asymp}
\lim_{\substack{\rho \rightarrow \infty \\ \epsilon \rightarrow 0 }} \mathrm{R}_{\downarrow}^N & \simeq  \log_2 \left(\rho \omega_k h_k \right) + \log_2 \left( 1+ \frac{\omega_\ell h_\ell }{ \omega_k h_k}\right) \simeq   \log_2 \left(\rho \omega_k h_k \right),
\end{align}
where \eqref{eq:CO_asymp} and  \eqref{eq:CN_asymp}  follow from the facts that $(1+ \rho h_k) \simeq \rho h_k$ as $\rho \rightarrow \infty$ and the second term of \eqref{eq:CN} becomes negligible as $\rho \rightarrow \infty$, respectively. Accordingly, asymptotic gain of NOMA scheme can be given by
\begin{align}
 \nonumber {\Delta}_{\downarrow} &\bigtriangleq  \lim_{\substack{\rho \rightarrow \infty \\ \epsilon \rightarrow 0 }}  \left( \mathrm{R}_{\downarrow}^N -\mathrm{R}_{\downarrow}^O \right) = \log_2 \left(\rho \omega_k h_k \right)  - \rfrac{1}{2} \log_2 \left( \rho^2 h_k h_\ell \right)\\
& = \log_2 \left( \frac{\rho \omega_k h_k}{\rho \sqrt{h_k h_\ell }}\right) = \log_2 \left( \omega_k \sqrt{\frac{h_k}{h_\ell}} \right)
\end{align}
In the descending order, the SIC constraint requires $   \lim_{\rho \rightarrow \infty } \frac{\omega_\ell h_\ell }{ \omega_k h_k + \rfrac{1}{\rho}} \underset{}{\geq} 10^{\frac{\S_b}{10}}$ that reduces to a power disparity constraint, i.e., $\frac{\omega_\ell h_\ell }{ \omega_k h_k} \underset{}{\geq} 10^{\frac{\S_b}{10}}$. Even for a SIC receiver with perfect sensitivity, i.e., $\S_b \rightarrow 0$, power disparity constraint constitutes $\frac{\omega_\ell  h_\ell}{ h_k } \geq \omega_k$, thus the upper bound on ${\Delta}_{\downarrow}$ is given by
\begin{equation}
\label{eq:delta_down_asymp2}
{\Delta}_{\downarrow} \leq \log_2 \left( \omega_\ell \sqrt{\frac{h_\ell}{h_k}} \right) \leq \rfrac{1}{2} \left\{ \log_2 (h_\ell) -\log_2 (h_k)\right\} ,
\end{equation}
which is always non-positive due to $\rfrac{h_\ell}{h_k}\leq 1$. That is, sum-rate of UL-OMA and the descending ordered UL-NOMA perform the same for users with equal channel gains. For non-equal channel gain cases, UL-NOMA provides a worse performance which is deteriorated even further for imperfect NOMA case as $\epsilon \rightarrow 1$.

\subsubsection{Ascending Order} Following the similar steps in (\ref{eq:CO})-(\ref{eq:CN_asymp}), asymptotic NOMA gain for the ascending order case (i.e., UE$_\ell$ cancel the interference of UE$_k$) can be obtained as 
\begin{align}
\nonumber {\Delta}_{\uparrow} &\bigtriangleq \lim_{\substack{\rho \rightarrow \infty \\ \epsilon \rightarrow 0 }} \left( \mathrm{R}_{\uparrow}^N -\mathrm{R}_{\uparrow}^O \right)= \log_2 \left(\rho \omega_\ell h_\ell \right)  - \rfrac{1}{2} \log_2 \left( \rho^2 h_k h_\ell \right)\\
\label{eq:delta_up_asymp}
& = \log_2 \left( \frac{\rho \omega_\ell h_\ell}{\rho \sqrt{h_k h_\ell }}\right) = \log_2 \left( \omega_\ell \sqrt{\frac{h_\ell}{h_k}} \right).
\end{align}
In the ascending order, the SIC constraint requires $   \lim_{\rho \rightarrow \infty } \frac{\omega_k h_k }{ \omega_\ell h_\ell + \rfrac{1}{\rho}} \underset{}{\geq} 10^{\frac{\S_b}{10}}$ that reduces to a power disparity constraint, i.e., $\frac{ \omega_k h_k}{\omega_\ell h_\ell } \underset{}{\geq} 10^{\frac{\S_b}{10}}$. For a SIC receiver with perfect sensitivity, i.e., $\S_b \rightarrow 0$, power disparity constraint constitutes $\frac{\omega_k h_k}{h_\ell} \geq \omega_\ell  $, thus the upper bound on ${\Delta}_{\downarrow}$ is given by
\begin{equation}
\label{eq:delta_up_asymp2}
{\Delta}_{\uparrow} \leq \log_2 \left( \omega_k \sqrt{\frac{h_k}{h_\ell}} \right) \leq \rfrac{1}{2} \left\{ \log_2 (h_k) -\log_2 (h_\ell)\right\},
\end{equation}
which is always non-negative due to  $\rfrac{h_k}{h_\ell} \geq 1$. That is, sum-rate of UL-OMA and the descending ordered UL-NOMA perform the same for users with equal channel gains whereas UL-NOMA provides a superior performance proportional to the channel gain disparity of users. Unfortunately, this desirable performance gain obtained by channel gain disparity of users  naturally diminishes as $\epsilon$ increases and NOMA yields a worse performance than OMA after a certain point, which is investigated in the remainder of the paper.

\section{Cluster formation and  Resource Allocation}
\label{sec:CF-PBA}
Centralized CF and RA is a combinatorial problem whose solution requires impractical time complexity even for moderate size of HetNets. Since a fast yet high performance solution is of the essence to employ NOMA in large-scale HetNets, this section first makes a problem statement by formulating a centralized problem then outlines the proposed distributed solution methodology to mitigate the high communication and computational overhead of centralized solutions.

\subsection{Centralized Problem Formulation}
\label{sec:optimal}
In order to investigate fair power and bandwidth allocation schemes, we adopt a generalized throughput formulation that has been proposed by the nominal work in \cite{origin_alpha} where the degree of fairness is adjusted by a single parameter $\alpha \in [0,1]$. In other words,  $\alpha$ manages the compromise between throughput maximization and fairness by means of the generalized $\alpha$-fair function which can be expressed as 
\begin{align}
\pi_{b,c}^i=\begin{cases}
\frac{1}{1-\alpha}R_i^{1-\alpha}\left(\delta_{b,c}^{i}, \theta_b^c, \omega_{b,c}^{i} \right), &\text{ for } 0\leq \alpha <1 \\
\log\left[R_i \left(\delta_{b,c}^{i}, \theta_b^c, \omega_{b,c}^{i} \right)\right], & \text{ for } \alpha = 1 
\end{cases},
\end{align}
which corresponds to the throughput maximization if $\alpha=0$ and proportional fairness if $\alpha \rightarrow1$. For the sake of a unified and continuous form of the fairness function, we exploit the following $\alpha$-fair objective function  \cite{general_alpha}
\begin{align}
\nonumber \Pi \left( \vect{\delta}, \vect{\theta}, \vect{\omega} \right)&=\sum_{\forall (b,c,i)} \pi_{b,c}^i \left(\delta_{b,c}^{i}, \theta_b^c, \omega_{b,c}^{i} \right)\\
&\hspace{-3pt} =\sum_{\forall (b,c,i)}\frac{1}{1-\alpha}\left(R_i^{1-\alpha}\left(\delta_{b,c}^{i}, \theta_b^c, \omega_{b,c}^{i} \right)-1\right).
\end{align}

Accordingly, a centralized CF and RA problem can be formulated as in $\mathrm{P_o}$ where $\mathrm{C}_o^1$ ensures that UEs are  assigned to exactly one cluster and $\mathrm{C}_o^2$  limits the number of UEs within a cluster by $K_b^c$.  $\mathrm{C}_o^3$ constraints the total number of RB allocation to available number of RBs, $\Theta$. The power weight limitation on UE$_i$ is introduced in $\mathrm{C}_o^4$ where the power allocation for UE$_i$ on cluster $c$ is set to zero if UE$_i \notin \mathcal{K}_b^c$. CSCs are given by $\mathrm{C}_o^5$ in order to account for QoS and SIC constraints. Finally, $\mathrm{C}_o^6$ indicates the variable domains.

\begin{equation}
\hspace*{0pt}
 \begin{aligned}
 & \hspace*{0pt} \mathrm{P_o}:   \underset{\vect{\delta}, \vect{\theta}, \vect{\omega}}{\max}
& & \hspace*{3 pt}  \Pi(\vect{\delta}, \vect{\theta}, \vect{\omega})\\
& \hspace*{0pt} \mbox{$\mathrm{C}_o^1$:}\hspace*{20pt} \text{s.t.}
&&  \sum_{c} \delta_{b,c}^i = 1, \textbf{ } \hspace{26 pt} \forall b,i\\ 
 &
 \hspace*{0 pt}\mbox{$\mathrm{C}_o^2$: } & & \sum_{i} \delta_{b,c}^i\leq K_b^c, \textbf{ } \hspace{18 pt} \forall b,c \\
    &
\hspace*{0 pt}\mbox{$\mathrm{C}_o^3$: } & & \sum_{b,c}\theta_b^c\leq \Theta,\\
    &
\hspace*{0 pt}\mbox{$\mathrm{C}_o^4$: } & & \omega_{b,c}^i  \leq \delta_{b,c}^i, \hspace{18 pt} \forall b,c,i \\
& 
  \hspace*{0 pt}\mbox{$\mathrm{C}_o^5$: } & &\bar{\Gamma}_{b,c}^i (\theta_b^c)  \leq \Gamma_{b,c}^i , \hspace{18pt} \forall b,c,i  \\
&
  \hspace*{0 pt}\mbox{$\mathrm{C}_o^6$: } & &  \delta_{b,c}^i \in \{0,1\}, K_b^c \in [0, \rfrac{U}{2} ], \theta_b^c \in [0,1] \hspace{2 pt} 
\end{aligned}
\end{equation}

\subsection{Hierarchically Distributed Solution}
In $\mathrm{P_o}$, obtaining optimal integer valued cluster sizes and binary valued UE-cluster associations yields an NP-Hard mixed integer non-linear programming (MINLP) problem whose time complexity exponentially increases with the number of network entities. Moreover, highly non-convex nature of resource allocation problem puts an additional degree of complexity. Also noting the undesirable communication overhead of centralized solutions, developing fast yet near optimal distributed solutions is of interest to be employed in practice. 

As shown in Fig. \ref{fig:dist}, we develop a distributed solution methodology where we first decouple the CF and power allocation problems by considering the channel gain disparity of cluster members as the main credential of cluster formation policy. This is primarily motivated by the analytical findings of Section \ref{sec:SIC} which shows that NOMA gain is determined by the channel gain disparity of the cluster members. In this way, each BS can independently form its own clusters since they are generally aware of the channel states of associated UEs. Notice that the CF problem is still coupled by the bandwidth allocations since the maximum permissible cluster size is a function of the cluster bandwidth as explained in the next section.

\begin{figure}[t!]
\centering
 \includegraphics[width=0.4\textwidth]{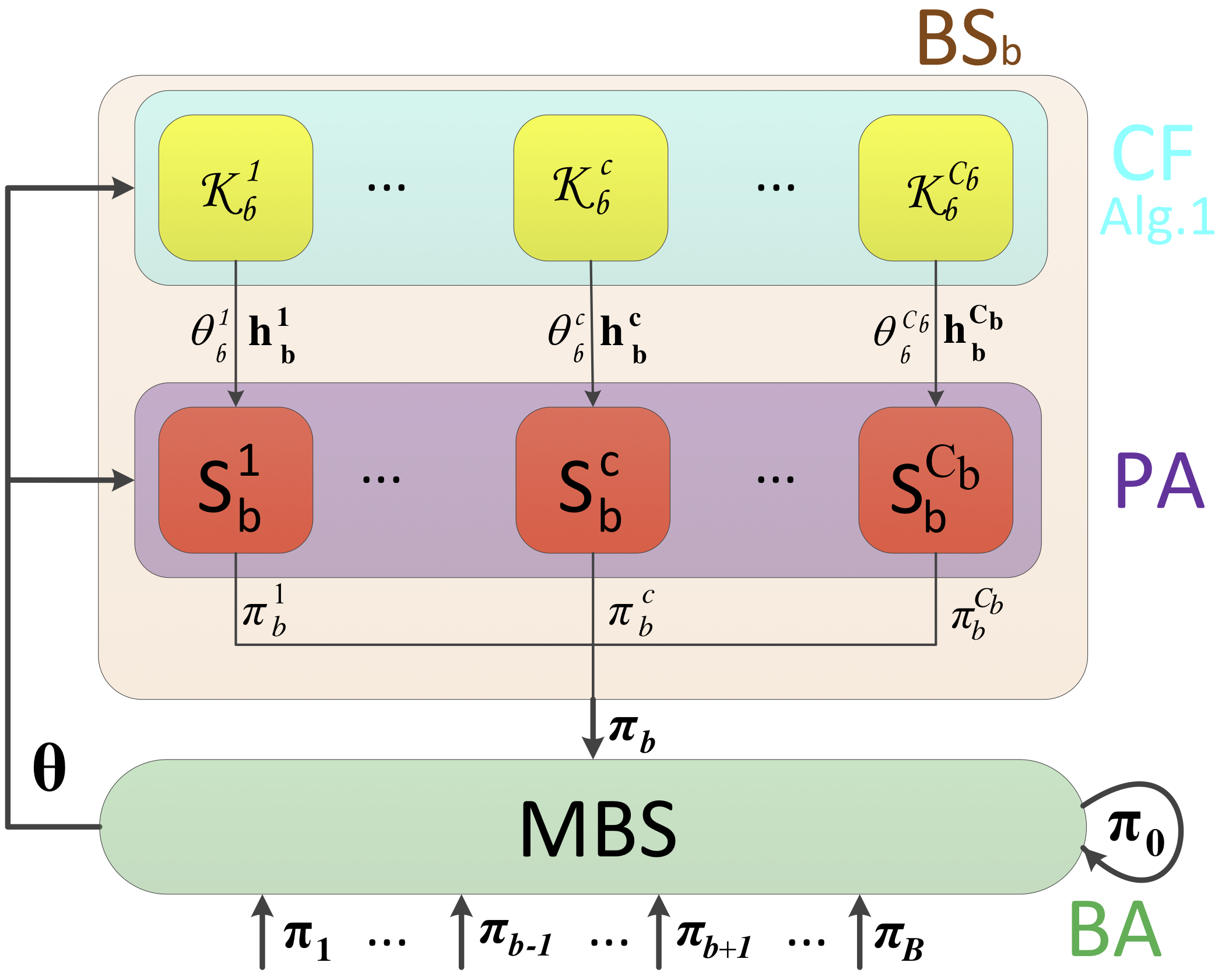}
\caption{Illustration of the proposed distributed clustering and resource allocation scheme [c.f. Algorithm \ref{alg:HD}]}
\label{fig:dist}
\end{figure}

On the other hand, resource allocation problem is further decomposed into slave and master problems which are responsible for power and bandwidth allocation, respectively. Given cluster members and bandwidths, each slave problem is accountable for obtaining an optimal power control policy for imperfect NOMA scheme subject to CSCs. Thereafter, achieved cluster utilities are shared by a central unit (preferably the MBS) that accordingly updates the cluster bandwidths for the next iteration, which is followed by another round of cluster formation and power allocation, so on and so forth. The details of the proposed  distributed solution methodology are addressed in the following sections.

\section{Design and Analysis of NOMA Clusters}
\label{sec:CF}
NOMA clustering involves two main design tasks; 1) determining the number of clusters and their size and 2) assigning UEs to the clusters. Accordingly, this section first analyzes the cluster size based on random matrix theory and derives the largest feasible cluster size as a closed-form function of the FEF levels, CSCs, and cluster bandwidth. Then, we propose a weighted maximum matching based CF method by weighting edges with channel gain disparity of UEs.

\subsection{Fundamental Limits of NOMA Clusters}
\label{sec:CSanalysis}
Without loss of generality, let us consider a cluster of size $K$ and bandwidth allocation $\theta$, whose CSCs can be written in the matrix form as 
\begin{equation}
\label{eq:SINR-matrix}
\left(\vect{\mathrm{I}}-\vect{\mathrm{\Gamma}}(\theta) \vect{\mathrm{H}}\right)\vect{\mathrm{p}} \geq \vect{\bar{\mathrm{\Gamma}}}(\theta) \vect{\mathrm{\sigma}}, \text{ s.t. } \vect{\mathrm{p}> 0} ,
\end{equation}
where vectors are of size $1 \times K$, matrices are of size $K \times K$, $\vect{\mathrm{I}}$ is the identity matrix, $\vect{\bar{\mathrm{\Gamma}}}(\theta)=\diag(\bar{\mathrm{\Gamma}}_1(\theta), \ldots ,\bar{\mathrm{\Gamma}}_k(\theta),\ldots ,\bar{\mathrm{\Gamma}}_K(\theta))$ is the diagonal matrix of the composite SINR demands, $\vect{\mathrm{p}}$ is the column vector of the received powers, $\vect{\mathrm{\sigma}}$ is the column vector of the receiver noise, and $\vect{\mathrm{H}}$ is the interference channel gain matrix with entries
\begin{equation}
H_i^j=\begin{cases}1,& i < j\\
0,& i=j\\
\epsilon,& i>j 
 \end{cases},
\end{equation}
where cases correspond to uncancelled interference, self interference, and residual interference, respectively. Notice that $\vect{\mathrm{H}}$ has non-negative elements and is generally considered to be irreducible \cite{Andrews2005iterative}\footnote{ We assume that $\epsilon$ cannot be zero in practice. To evaluate the numerical results for $\epsilon \rightarrow 0$, we employ the smallest positive normalized floating-point number based on the IEEE Standard for floating-point arithmetic (IEEE 754), i.e., $\epsilon=2.2251\mathrm{e}{-308 }$ \cite{kahan1996ieee}.}. For a non-negative irreducible matrix, \textit{Perron-Frobenius theorem} teaches us that the maximum eigenvalue of $\vect{\mathrm{H}}$ is real-positive and eigenvector corresponding to the maximum eigenvalue is non-negative \cite{horn1990matrix}. Following from the facts known from the standard matrix theory, a necessary and sufficient condition for the existence of a feasible solution to \eqref{eq:SINR-matrix} requires the magnitude of the maximum eigenvalue of $\vect{\mathrm{F}}\bigtriangleq \vect{\mathrm{\bar{\Gamma}}}(\theta)\vect{\mathrm{H}}$ to be less than unity, i.e., $\lambda_F<1$ \cite{Andrews2005iterative}. Assuming the existence of a feasible solution, a Pareto-optimal solution to \eqref{eq:SINR-matrix} is then given by $\vect{\mathrm{p}}^* = \left(\vect{\mathrm{I}}-\vect{\mathrm{\Gamma}}(\theta) \vect{\mathrm{H}}\right)^{-1} \vect{\bar{\mathrm{\Gamma}}}(\theta) \vect{\mathrm{\sigma}}$ where any other feasible $\vect{\mathrm{p}}$ satisfying \eqref{eq:SINR-matrix} would require more power than $\vect{\mathrm{p}}^*$, i.e., $\vect{\mathrm{p}} \geq \vect{\mathrm{p}}^*$. From energy efficiency point of view, we stick with the minimum power consuming solution $\vect{\mathrm{p}}^*$. Based on these discussion, we introduce following lemmas for the largest feasible cluster size as a function of the FEF and CSCs. 

\begin{lemma}[Energy unconstrained cluster size] \label{lem:CS}
 For a cluster of energy unconstrained UEs, the largest feasible cluster size falls within the range of $K_{\min} (\epsilon, \theta)\leq K(\epsilon, \theta) \leq K_{\max}(\epsilon, \theta)$, i.e.,
\begin{equation}
\label{eq:lem1}
 \left \lceil \frac{\ln(\epsilon)}{ \ln \left(\frac{1+\epsilon \max_i (\bar{\Gamma}_{i}(\theta))}{1+\max_i (\bar{\Gamma}_{i}(\theta))} \right)} \right \rceil \leq K (\epsilon, \theta) \leq 
 \left \lfloor \frac{\ln(\epsilon)}{\ln \left(\frac{1+\epsilon \min_i (\bar{\Gamma}_{i}(\theta))}{1+\min_i (\bar{\Gamma}_{i}(\theta))} \right)} \right \rfloor. 
\end{equation}
Accordingly, $K^\star(\epsilon, \theta)=K_{\min}(\epsilon, \theta)$ is the largest feasible cluster size which is mainly determined by the user with the highest composite SINR demand. 
\end{lemma}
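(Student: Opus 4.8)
The plan is to reduce the feasibility question for the heterogeneous cluster to the spectral condition already recorded before the lemma — for energy‑unconstrained UEs a cluster of size $K$ is feasible exactly when the Pareto‑optimal power vector solving \eqref{eq:SINR-matrix} is positive, i.e.\ when $\lambda_F<1$ for $\vect{F}=\vect{\bar{\Gamma}}(\theta)\vect{H}$ — and then to sandwich $\lambda_F$ between two \emph{homogeneous} instances that admit closed‑form thresholds. Writing $\bar{\Gamma}_{\min}\bigtriangleq\min_i\bar{\Gamma}_i(\theta)$ and $\bar{\Gamma}_{\max}\bigtriangleq\max_i\bar{\Gamma}_i(\theta)$, the diagonal matrix $\vect{\bar{\Gamma}}(\theta)$ gives the entrywise bounds $\bar{\Gamma}_{\min}\vect{H}\le\vect{F}\le\bar{\Gamma}_{\max}\vect{H}$; since $\vect{H}$ is nonnegative and irreducible, monotonicity of the Perron--Frobenius eigenvalue yields $\bar{\Gamma}_{\min}\lambda_H(K)\le\lambda_F\le\bar{\Gamma}_{\max}\lambda_H(K)$, where $\lambda_H(K)$ is the maximal eigenvalue of the $K\times K$ matrix $\vect{H}$ and is increasing in $K$ (adding a user only enlarges the nonnegative interference matrix). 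Hence the cluster is feasible whenever $\bar{\Gamma}_{\max}\lambda_H(K)<1$ and only if $\bar{\Gamma}_{\min}\lambda_H(K)<1$, so the true largest feasible size lies between the two integer thresholds that these conditions define.

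The core step is to evaluate the homogeneous condition $\gamma\,\lambda_H(K)<1$ — feasibility of a cluster in which every user carries the common CSC demand $\gamma$ — in closed form. I would obtain it from the Pareto‑optimal (equality) power‑control equations $p_i=\gamma\bigl(\epsilon\sum_{j<i}p_j+\sum_{k>i}p_k+\varrho\bigr)$, $i=1,\dots,K$, for the normalized received powers $p_i=\omega_i h_i$. Subtracting the equation for index $i{+}1$ from the one for index $i$ telescopes the two coupled interference sums into a single two‑term recursion, $(1+\gamma)p_{i+1}=(1+\epsilon\gamma)p_i$, so $p_i=p_1 r^{i-1}$ with ratio $r\bigtriangleq\frac{1+\epsilon\gamma}{1+\gamma}\in(0,1)$; feeding this geometric profile back into one boundary equation (say $i=K$, where $\sum_{k>K}(\cdot)=0$) and using $1-r=\frac{\gamma(1-\epsilon)}{1+\gamma}$ together with $1+\epsilon\gamma=r(1+\gamma)$, the requirement $p_1>0$ — which forces all $p_i>0$ — collapses after cancellation to $r^{K}>\epsilon$. (Equivalently, rewriting the eigen‑equation $\vect{H}\vect{v}=\lambda\vect{v}$ through the prefix sums $a_i=\sum_{j<i}v_j$ reduces it to $\bigl(\frac{\lambda+\epsilon}{1+\lambda}\bigr)^{K}=\epsilon$, whence $\lambda_H(K)=\frac{\epsilon^{1/K}-\epsilon}{1-\epsilon^{1/K}}$ and $\gamma\lambda_H(K)<1\iff\epsilon^{1/K}<r\iff r^{K}>\epsilon$.)

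It then remains to invert $r^{K}>\epsilon$ for $K$: taking logarithms and dividing through by $\ln r<0$ flips it to $K<\frac{\ln\epsilon}{\ln r}=\frac{\ln\epsilon}{\ln\left(\frac{1+\epsilon\gamma}{1+\gamma}\right)}$, a positive quantity because both logarithms are negative. Specializing $\gamma=\bar{\Gamma}_{\max}$ (the ``sufficient'' direction) gives the lower estimate $K_{\min}$ and $\gamma=\bar{\Gamma}_{\min}$ (the ``necessary'' direction) gives the upper estimate $K_{\max}$; passing to integer sizes produces the stated ceiling and floor, and noting that $\frac{\ln\epsilon}{\ln\left(\frac{1+\epsilon\gamma}{1+\gamma}\right)}$ is decreasing in $\gamma$ confirms $K_{\min}\le K_{\max}$. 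Since the all‑$\bar{\Gamma}_{\max}$ cluster is the hardest instance with those demand extremes, a cluster of size $K_{\min}$ is feasible for every admissible demand profile, which is precisely the claim $K^\star(\epsilon,\theta)=K_{\min}(\epsilon,\theta)$ and explains why it is dictated by the highest‑demand user.

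I expect the main obstacle to be the closed‑form evaluation of $\lambda_H(K)$ — equivalently, solving the coupled $K\times K$ power‑control system. The telescoping/prefix‑sum manoeuvre that turns it into a single geometric recursion is the decisive trick; once $p_i=p_1 r^{i-1}$ is in hand the rest is algebraic bookkeeping. The remaining points need only care, not ingenuity: justifying the entrywise sandwich and applying Perron--Frobenius monotonicity to $\lambda_F$, keeping the inequality directions straight when dividing by the negative quantities $\ln r$ and $\ln\epsilon$, and selecting the correct rounding when specializing $\gamma$ to $\bar{\Gamma}_{\max}$ and $\bar{\Gamma}_{\min}$.
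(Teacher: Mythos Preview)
Your proposal is correct and lands on the same decisive identity as the paper, $r^{K}>\epsilon$ with $r=\frac{1+\epsilon\gamma}{1+\gamma}$, which inverts to the stated threshold. The route differs slightly from Appendix~\ref{app_lem_1}: the paper works directly with the eigen-equation $\vect{F}\vect{\nu}=\lambda_F\vect{\nu}$ for the \emph{heterogeneous} matrix, derives the product recursion $\nu_k=\nu_1\prod_{i=2}^{k}\frac{\epsilon+\lambda_F/\bar{\Gamma}_{i-1}}{1+\lambda_F/\bar{\Gamma}_i}$, substitutes back to obtain the characteristic relation \eqref{eq:eigen4}, and only then bounds by replacing every $\bar{\Gamma}_i$ with $\max_i\bar{\Gamma}_i$ or $\min_i\bar{\Gamma}_i$ at $\lambda_F=1$. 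You instead apply the sandwich at the matrix level via Perron--Frobenius monotonicity, $\bar{\Gamma}_{\min}\lambda_H(K)\le\lambda_F\le\bar{\Gamma}_{\max}\lambda_H(K)$, and then solve only the homogeneous instance --- either through the power-control equations (your telescoping to $p_{i+1}=rp_i$) or through the closed form $\lambda_H(K)=\frac{\epsilon^{1/K}-\epsilon}{1-\epsilon^{1/K}}$. Your ordering cleanly separates the bounding step from the computation and, as a bonus, your power-control derivation is exactly what the paper reuses in Appendix~\ref{app_lem_3} for the energy-constrained case; the paper's ordering, by contrast, makes the heterogeneous eigenvector structure explicit. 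Either way the telescoping trick you flagged is indeed the crux.
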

\begin{proof}
Please see Appendix \ref{app_lem_1}.
\end{proof}
\begin{lemma}[Cluster size for identical CSCs]
\label{lem:CSspec}
As a special case, $ \bar{\Gamma}_i(\theta)=\bar{\Gamma}(\theta), \forall i $, the range in \eqref{eq:lem1} tightens to an exact size of $K^*(\epsilon, \theta)= \left \lfloor \frac{\ln(\epsilon)}{\ln \left(\frac{1+\epsilon \bar{\Gamma}(\theta)}{1+\bar{\Gamma}(\theta)} \right)} \right \rfloor$ which corresponds to an achievable rate of $\bar{\Gamma}^*(K(\epsilon,\theta))=\frac{e^{\ln(\epsilon)/K^*(\epsilon, \theta)}-1}{\epsilon-e^{\ln(\epsilon)/K^*(\epsilon, \theta)}}$.
\end{lemma}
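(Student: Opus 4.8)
The plan is to use the fact that, once every composite SINR demand coincides, the interference system \eqref{eq:SINR-matrix} has $\vect{\mathrm{F}}=\bar{\Gamma}(\theta)\vect{\mathrm{H}}$, so its spectral radius separates as $\lambda_F=\bar{\Gamma}(\theta)\lambda_H$ and the spread between $\min_i\bar{\Gamma}_i$ and $\max_i\bar{\Gamma}_i$ that produced the two-sided estimate \eqref{eq:lem1} disappears, leaving a single feasibility threshold on $K$. Rather than computing $\lambda_H$ abstractly, I would read that threshold directly off the Pareto-optimal (minimum-power) solution $\vect{\mathrm{p}}^{\ast}$ of \eqref{eq:SINR-matrix}: a cluster of size $K$ is feasible iff $\vect{\mathrm{p}}^{\ast}>\vect{0}$ componentwise, and with identical demands every CSC in \eqref{eq:Gammabar} is tight at $\vect{\mathrm{p}}^{\ast}$.

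Concretely, writing $q_i\triangleq\omega_{b,c}^i h_i^b$ for the normalized received powers, $A_i\triangleq\sum_{j<i}q_j$ (so $A_1=0$), and $S\triangleq A_{K+1}=\sum_{j=1}^{K}q_j$, I would impose equality in \eqref{eq:sinr_c_dl} with $\bar{\Gamma}_i=\bar{\Gamma}$ and eliminate the lower-rank interference via $S$; this collapses the $i$-th stationarity relation to $q_i=\tfrac{\bar{\Gamma}}{1+\bar{\Gamma}}\bigl(S+\varrho_b^c-(1-\epsilon)A_i\bigr)$ and turns $A_{i+1}=A_i+q_i$ into the linear recursion $A_{i+1}=\mu A_i+\tfrac{\bar{\Gamma}}{1+\bar{\Gamma}}(S+\varrho_b^c)$ with $\mu\triangleq\tfrac{1+\epsilon\bar{\Gamma}}{1+\bar{\Gamma}}\in(\epsilon,1)$. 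Solving from $A_1=0$ gives $A_i=\tfrac{S+\varrho_b^c}{1-\epsilon}(1-\mu^{i-1})$; the self-consistency $A_{K+1}=S$ then yields $S=\varrho_b^c\,\tfrac{1-\mu^{K}}{\mu^{K}-\epsilon}$, and back-substitution gives the compact form $q_i=\tfrac{\bar{\Gamma}}{1+\bar{\Gamma}}(S+\varrho_b^c)\,\mu^{i-1}$. Since $\varrho_b^c>0$ and $0<\mu<1$, positivity of the whole vector reduces to $S>0$, i.e.\ to $\mu^{K}>\epsilon$, equivalently $K<\ln(\epsilon)/\ln\mu$. Hence the largest feasible integer cluster size is $K^{\ast}(\epsilon,\theta)=\bigl\lfloor\ln(\epsilon)/\ln\!\bigl(\tfrac{1+\epsilon\bar{\Gamma}(\theta)}{1+\bar{\Gamma}(\theta)}\bigr)\bigr\rfloor$, which is exactly what both endpoints of \eqref{eq:lem1} evaluate to once $\min_i\bar{\Gamma}_i=\max_i\bar{\Gamma}_i$.

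For the ``achievable rate'' part I would invert the roles of $K$ and $\bar{\Gamma}$. The map $\bar{\Gamma}\mapsto\mu(\bar{\Gamma})=\tfrac{1+\epsilon\bar{\Gamma}}{1+\bar{\Gamma}}$ is strictly decreasing, from $1$ at $\bar{\Gamma}=0$ down to $\epsilon$ as $\bar{\Gamma}\to\infty$, so the feasibility inequality $\mu(\bar{\Gamma})^{K^{\ast}}>\epsilon$ is equivalent to $\bar{\Gamma}<\bar{\Gamma}^{\ast}$, where $\bar{\Gamma}^{\ast}$ is the unique root of $\mu(\bar{\Gamma}^{\ast})=\epsilon^{1/K^{\ast}}=e^{\ln(\epsilon)/K^{\ast}}$. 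Solving this single affine equation for $\bar{\Gamma}^{\ast}$ gives $\bar{\Gamma}^{\ast}(K^{\ast})=\dfrac{e^{\ln(\epsilon)/K^{\ast}}-1}{\epsilon-e^{\ln(\epsilon)/K^{\ast}}}$, i.e.\ the stated value, understood as the supremum of the demand a size-$K^{\ast}$ cluster can sustain (approached, not attained, since the required power grows without bound as $\bar{\Gamma}\uparrow\bar{\Gamma}^{\ast}$).

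I expect two points to need care. The first is the passage from ``$S>0$'' to ``$q_i>0$ for every $i$'': a crude bound on $\vect{\mathrm{p}}^{\ast}$ does not suffice, which is why the recursion must be solved in closed form — the expression $q_i=\tfrac{\bar{\Gamma}}{1+\bar{\Gamma}}(S+\varrho_b^c)\mu^{i-1}$ makes this step immediate once $S>0$. The second is the floor/ceiling and strict/non-strict boundary bookkeeping: the feasibility criterion is strict ($\lambda_F<1$), so $\mu^{K}=\epsilon$ (equivalently $\bar{\Gamma}=\bar{\Gamma}^{\ast}$) is the borderline at which $\vect{\mathrm{p}}^{\ast}$ ceases to be finite, and the clean identity $K^{\ast}=\lfloor\cdot\rfloor$ is exact precisely when $\ln(\epsilon)/\ln\mu\notin\mathbb{Z}$, the generic situation. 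Everything else is routine algebra on a geometric sum.
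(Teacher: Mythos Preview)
Your proof is correct and the algebra checks out: the recursion $A_{i+1}=\mu A_i+\tfrac{\bar{\Gamma}}{1+\bar{\Gamma}}(S+\varrho_b^c)$ solves exactly as you claim, the self-consistency $A_{K+1}=S$ gives $S+\varrho_b^c=\varrho_b^c\,\tfrac{1-\epsilon}{\mu^{K}-\epsilon}$, and since each $q_i$ is a positive multiple of $S+\varrho_b^c$, positivity is equivalent to $\mu^{K}>\epsilon$. The inversion for $\bar{\Gamma}^{\ast}$ is also fine.

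The route, however, differs from the paper's. The paper proves Lemma~\ref{lem:CSspec} jointly with Lemma~\ref{lem:CS} via the Perron--Frobenius eigenvalue of $\vect{\mathrm{F}}=\vect{\bar{\Gamma}}(\theta)\vect{\mathrm{H}}$: it writes the eigen-system recursively, obtains the relation \eqref{eq:eigen4}, and reads the bounds off the condition $\lambda_F<1$; the identical-CSC case is then the point where $\min_i\bar{\Gamma}_i=\max_i\bar{\Gamma}_i$ collapses the two bounds. You instead bypass the spectral-radius criterion entirely and work with the Pareto-optimal power vector $\vect{\mathrm{p}}^{\ast}$ itself, deriving its closed form as a geometric sequence and checking componentwise positivity. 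This is essentially the technique the paper reserves for Lemma~\ref{lem:CSconstrained} in Appendix~\ref{app_lem_3} (cf.\ \eqref{eq:pk_special}), transplanted to the energy-unconstrained setting. Your approach is more elementary and self-contained for this special case---no appeal to Perron--Frobenius or irreducibility is needed---while the paper's eigenvalue framework is what yields the two-sided estimate of Lemma~\ref{lem:CS} for heterogeneous demands, from which Lemma~\ref{lem:CSspec} drops out as a corollary.
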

\begin{proof}
Please see Appendix \ref{app_lem_1}.
\end{proof}

\begin{lemma}[Energy constrained cluster size] \label{lem:CSconstrained}
 For a cluster of energy constrained UEs with $\bar{\Gamma}(\theta)=\max_i( \bar{\Gamma}_i(\theta))$, the largest feasible cluster size falls within the range of $K_{\min}(\epsilon, \theta) \leq K(\epsilon, \theta) \leq K_{\max}(\epsilon, \theta)$
%\begin{equation}
%\label{eq:K*const}
%K_{\min}^* \leq K^*\leq K_{\max}^* 
%\end{equation}
where $K_{\min}(\epsilon, \theta)=\left \lfloor 1+ \frac{\ln \left( \frac{\epsilon(1+\bar{\Gamma}(\theta))}{\epsilon-1}\right)-\ln \left( \frac{\bar{\Gamma}(\theta) \sigma^2}{\bar{P}_u g_K}-\frac{1+\epsilon\bar{\Gamma}(\theta)}{1-\epsilon}\right) }{\ln \left( \frac{1+\epsilon\bar{\Gamma}(\theta)}{1+ \bar{\Gamma}(\theta)}\right) } \right \rfloor$ and $K_{\max}(\epsilon, \theta)=\left \lceil 1+ \frac{\ln \left( \frac{\bar{\Gamma}(\theta) \sigma^2}{\bar{P}_u g_1}+  \frac{\epsilon(1+\bar{\Gamma}(\theta))}{1-\epsilon}\right) -  \ln \left( \frac{1+\epsilon\bar{\Gamma}(\theta)}{1-\epsilon} \right)}{\ln \left( \frac{1+\epsilon\bar{\Gamma}(\theta)}{1+ \bar{\Gamma}(\theta)}\right) } \right \rceil.$ Accordingly, $K^\star(\epsilon, \theta)=K_{\min}(\epsilon, \theta)$ is the largest feasible cluster size which is mainly determined by the user with the lowest channel gain. 
 \end{lemma}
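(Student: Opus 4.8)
\emph{Proof sketch.} The plan is to make the Pareto-optimal power profile $\vect{\mathrm{p}}^*=\left(\vect{\mathrm{I}}-\vect{\mathrm{\Gamma}}(\theta)\vect{\mathrm{H}}\right)^{-1}\vect{\bar{\mathrm{\Gamma}}}(\theta)\vect{\mathrm{\sigma}}$ of \eqref{eq:SINR-matrix} fully explicit for identical CSCs, and then overlay the per-UE uplink transmit-power budget. Reusing the recursion behind Lemmas \ref{lem:CS}--\ref{lem:CSspec}, put $\bar{\Gamma}=\bar{\Gamma}(\theta)=\max_i\bar{\Gamma}_i(\theta)$ and write the $K$ binding SINR equalities $\Gamma_{b,c}^i=\bar{\Gamma}$ [cf.\ \eqref{eq:sinr_c_dl}] in terms of the partial received-power sums $T_i=\sum_{j\le i}p^*_j$. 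Substituting $p^*_i=T_i-T_{i-1}$ reduces them to the affine recursion $T_i=aT_{i-1}+d$, $T_0=0$, with $a=\tfrac{1+\epsilon\bar{\Gamma}}{1+\bar{\Gamma}}<1$ (using $\epsilon<1$); solving it and imposing self-consistency $T_K=S:=\sum_i p^*_i$ gives $p^*_i=d\,a^{\,i-1}$ with $d=\tfrac{\bar{\Gamma}\sigma(1-\epsilon)}{(1+\bar{\Gamma})(a^K-\epsilon)}$, which along the way recovers the feasibility requirement $a^K>\epsilon$, i.e.\ $\lambda_F<1$, of Lemma \ref{lem:CSspec}. Two monotonicities do the rest: $p^*_i$ is geometrically \emph{decreasing} in $i$ (ratio $a<1$), so the top-ranked UE$_1$ gets the largest received power $p^*_1=d$ and the bottom-ranked UE$_K$ the smallest $p^*_K=d\,a^{K-1}$; and over the feasible range $a^K>\epsilon$, $d=p^*_1$ is \emph{increasing} in $K$ since $a^K-\epsilon\downarrow 0$ as $K$ grows.

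I would then impose the transmit-power limit, which in the uplink reads $p^*_i\le\bar{P}_u g_i$ (equivalently $\omega_{b,c}^i\le1$), the composite channel gains being ordered $g_1\ge\cdots\ge g_K$; a size-$K$ cluster is energy-feasible iff this holds for all $i$. For the \emph{upper} bound, feasibility necessarily forces the constraint of the largest received power, $p^*_1=d\le\bar{P}_u g_1$; substituting $d$, clearing denominators, and taking logarithms---dividing by $\ln a<0$ reverses the inequality---yields $K\le K_{\max}(\epsilon,\theta)$ in the stated form. For the \emph{lower} bound, I would instead use UE$_K$, whose small budget $\bar{P}_u g_K$ must still carry $p^*_K=d\,a^{K-1}$: imposing $d\,a^{K-1}\le\bar{P}_u g_K$, substituting $d$, writing $a^K=a\cdot a^{K-1}$ to factor out $a^{K-1}$, and taking logarithms gives $K\le K_{\min}(\epsilon,\theta)$. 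Since the channel-disparity--driven cluster formation (motivated by Section \ref{sec:SIC}) tends to make $p^*_i/g_i$ maximal at $i=K$, this single constraint dominates the rest, so every $K\le K_{\min}$ is feasible; hence $K_{\min}(\epsilon,\theta)\le K(\epsilon,\theta)\le K_{\max}(\epsilon,\theta)$ and $K^\star(\epsilon,\theta)=K_{\min}(\epsilon,\theta)$ is the operative largest feasible size, set by the lowest-channel-gain UE.

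The main obstacle is that the energy constraints couple the UE index $i$ with the cluster size $K$---via the geometric factor $a^{\,i-1}$ and via the $K$-dependent scale $d(K)$---so $K^\star$ has no exact closed form without knowing which UE's budget binds, which in turn depends on how the spread $g_1/g_K$ compares with $a^{K-1}$. The argument therefore has to reduce to the two extreme UEs (UE$_1$ and UE$_K$) to bracket $K^\star$, and the delicate step is to justify---using the geometric monotonicity of $p^*_i$, the ordering of the $g_i$, and the disparity regime that the CF stage induces---that these really are the extreme cases. A secondary nuisance is sign bookkeeping: for $\epsilon<1$ several intermediate quantities ($a-1$, $\epsilon-1$, and the bracketed terms inside the logarithms of $K_{\min}$ and $K_{\max}$) are negative, so the closed forms are to be read with magnitudes and every inequality reversed by $\ln a<0$ must be tracked carefully.
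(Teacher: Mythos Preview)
Your approach is essentially the same as the paper's: derive the Pareto-optimal received powers at the binding CSCs for identical $\bar{\Gamma}$, observe the geometric decay $p_k^*=d\,a^{k-1}$ with $a=\tfrac{1+\epsilon\bar{\Gamma}}{1+\bar{\Gamma}}$, and then overlay the per-UE transmit-power budget $p_k^*\le \bar{P}_u g_k$ at the two extreme UEs to bracket $K$. The only cosmetic difference is that the paper solves the linear system \eqref{eq:SINR-matrix} directly via the product form inherited from Appendix~\ref{app_lem_1} and then collapses it with a geometric-series sum, whereas you reach the same $p_k^*$ through the partial-sum recursion $T_i=aT_{i-1}+d$; your $d=\tfrac{\bar{\Gamma}\sigma(1-\epsilon)}{(1+\bar{\Gamma})(a^K-\epsilon)}$ coincides with the paper's $p_1$ in \eqref{eq:pk_special} after simplifying its denominator. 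Your assignment of constraints---$K_{\max}$ from $p_1^*\le \bar{P}_u g_1$ and $K_{\min}$ from $p_K^*\le \bar{P}_u g_K$---matches the formulas in the lemma (which carry $g_1$ and $g_K$ respectively); the final sentence of Appendix~\ref{app_lem_3} states this correspondence the other way around, which appears to be a typo there rather than a flaw in your argument.
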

\begin{proof}
Please see Appendix \ref{app_lem_3}.
\end{proof}

Once can draw the inference from these lemmas that the largest feasible cluster size increases as $\bar{\Gamma}(\theta)$ and $\epsilon$ decreases, that is, NOMA can serve more users with low rates as the SIC efficiency improves. Notice that cluster size analyses in Lemma \ref{lem:CS} and Lemma \ref{lem:CSspec} are only valid for UEs with unlimited transmission power as $\vect{p^*}$ is a solution over the feasible set of  $\vect{\mathrm{p}>0}$. However,  Lemma \ref{lem:CSconstrained} accounts for  power constrained users, where channel gain of the lowest cluster member plays an important role.

\subsection{Cluster Formation Design}
\label{sec:CF_design}
Unlike the basic NOMA clusters of size two, one can reap the full benefits of high-spectral efficiency offered by NOMA if the large cluster size is considered. In addition to the enhanced spectral efficiency, increasing the cluster size also reduces the total power consumption of UEs within a BS, that magnifies the efficiency of energy spent per bit. Therefore, our clustering strategy is to exploit the largest feasible cluster size obtained in the previous section. This strategy is especially important to provide the massive connectivity required by the ever-increasing number of devices. When each cluster is allocated to a single RB, for example, basic NOMA clustering can accommodate at most $2 \Theta$ UEs at a time. Notice that employing large clusters is eminently suitable for UL-NOMA scheme since UEs do not compete for the BS transmit power as in the DL case. However, a larger cluster size requires more computational power to compute optimum power levels and yields a longer decoding delay as the SIC latency linearly increase with the cluster size \cite{Andrews2005SIC}. Fortunately, BSs can be equipped with high computational power with more sophisticated receivers with desirable FEF and latency specifications. 

Based on the analytical findings in Section \ref{sec:SIC}, our strategy on assigning UEs to clusters focuses on maximizing the channel gain disparity among the cluster members to enhance the achievable NOMA gain. Accordingly, algorithmic implementation of these strategies  is given in Algorithm \ref{alg:CF} where the first line uses Lemma \ref{lem:CSconstrained} to determine the largest cluster size that is allowable by each UEs within BS$_b$, i.e., $\kappa_i=\min \left(\bar{K}_b, K_{\min} \right), i \in \mathcal{U}_b$, where $\bar{K}_b$ is a design parameter in order to prevent unnecessarily high delay and computational power due to the large cluster sizes. Accordingly, $\left(\kappa_i,\forall i \right)$ values are sorted in the ascending order to generate the vector $\vect{\kappa}=\left[ \kappa_i \vert \: i \in \mathcal{U}_b, \kappa_j > \kappa_{j+1}, 1 \leq j \leq U_b-1 \right]$. Starting from the larges
t cluster size, line 2 increases the number of clusters in BS$_b$ until total number of cluster sizes are no less than $\mathcal{U}_b$, i.e., 
\begin{equation}
\label{eq:Cb}
C_b = \argmin_\mathrm{I} \left \{\mathrm{I}\left \vert  \sum_{i=1}^{\mathrm{I}} \kappa_{i} \geq U_b \right .  \right \}
\end{equation} 
which provides the least number of clusters and thus the largest size of clusters. 

\begin{algorithm}
\footnotesize
 \caption{\small Cluster Formation for BS$_b, \: \forall b$.}
  \label{alg:CF}
\begin{algorithmic}[1]
 \renewcommand{\algorithmicrequire}{\textbf{Input:}}
 \renewcommand{\algorithmicensure}{\textbf{Output:}}
\REQUIRE$\vect{\bar{\Gamma}}(\theta)$, $\vect{h}$
    \STATE      $\vect{\kappa} \gets $ Sort UEs in ascending order as per Lemma \ref{lem:CSconstrained}.
	\STATE     $C_b \gets$ Determine the least number of clusters as per \eqref{eq:Cb}.
    \STATE     $\mathcal{K}_b^i \gets \vect{\kappa}[i]$, $1\leq i \leq C_b$ Predetermination of first cluster members. 
    \STATE     $\mathcal{U}_b' \gets$ Update the remaining set of UEs.
    	\FOR{$s=1: \left( \left \lceil \frac{U_b}{C_b} \right \rceil-1 \right)$}
		\STATE $\mathcal{E}_i^j(s) \gets $ \eqref{eq:Eij} Calculate edge weights 
		\STATE $\mathcal{K}_b^c \gets \underset{\vect{x}}{\min}  \sum_{i,j}  \mathcal{E}_i^j(s) x_i^j \text{ (s.t.) }  \sum_{i} x_i^j \leq 1 ,\sum_{j} x_i^j =1, x_i^j \in \{ 0,1\}, i \in [1,C_b], j \in [1, |\mathcal{U}_b'|]$
		\STATE     $\mathcal{U}_b' \gets$ Update the remaining set of UEs.
	\ENDFOR
\hspace{-16pt}\RETURN  $\mathcal{K}_b^c$, $\forall c$.
 \end{algorithmic}
 \end{algorithm}
 
\begin{figure}[!t]
\centering
  \includegraphics[width=0.45\textwidth]{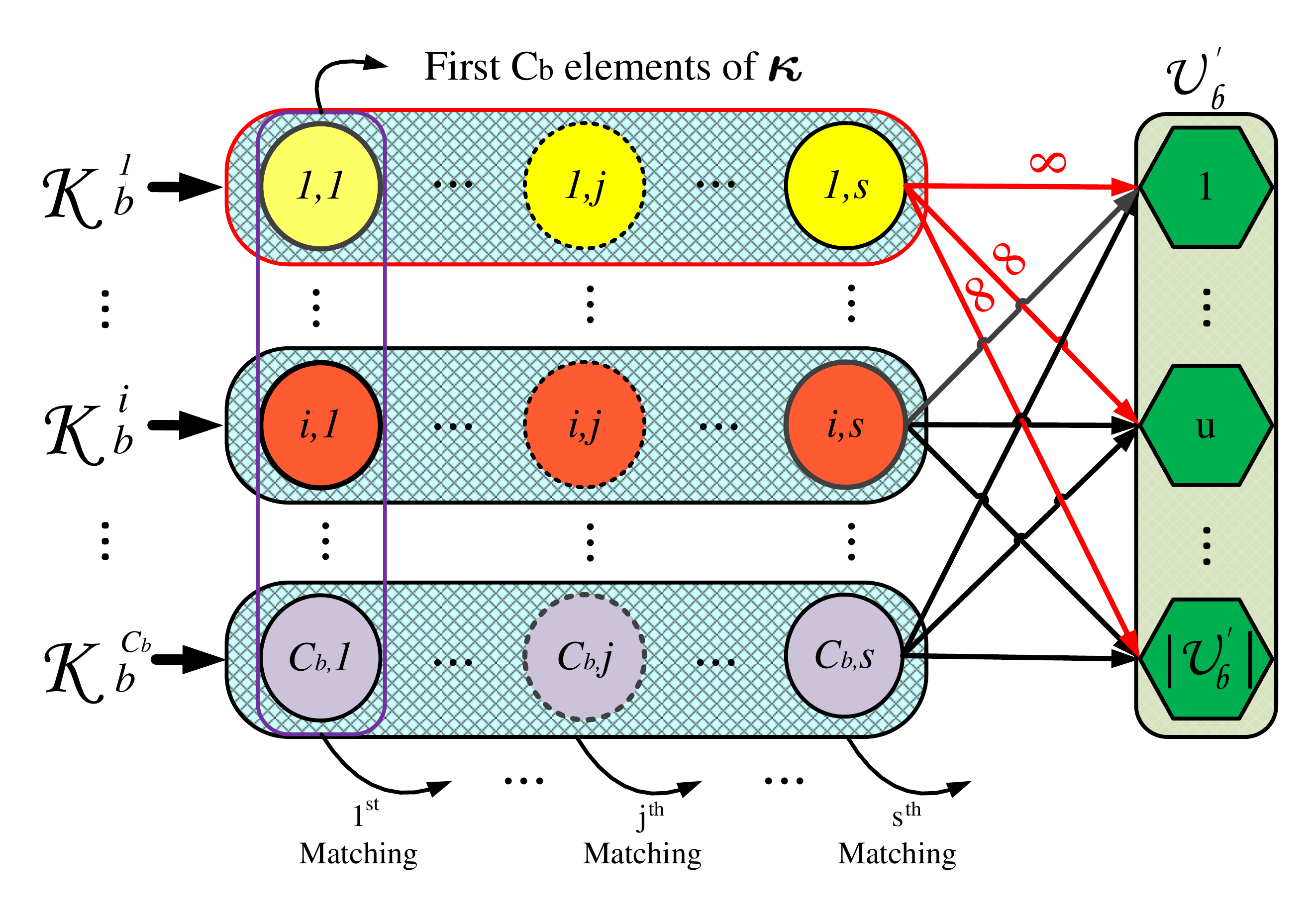}
      \caption{Illustration of the proposed CF method for sum-rate Maximization.}
        \label{fig:CF}
\end{figure}

As illustrated in Fig. \ref{fig:CF}, line 3 of Algorithm \ref{alg:CF} predetermines $i^{th}$, $1 \leq i \leq C_b$, element of $\vect{\kappa}$ as the first member of $i^{th}$ cluster $\mathcal{K}_b^i$ which has the size of $ \vect{\kappa}_{i}$. Thereafter, the while loop between lines 6 and 10 iteratively matches clusters with the remaining set of UEs, i.e., $\mathcal{U}_b'=\mathcal{U}-\bigcup_{i=1}^{C_b}\mathcal{K}_b^i$. In line 7 of iteration $s$, matching weight from $i^{th}$ cluster to $j^{th}$ element of $\mathcal{U}_b'$ is calculated as
\begin{equation}
\label{eq:Eij}
\mathcal{E}_i^j (s)=\begin{cases} \frac{ \inf \left(\widehat{\mathcal{H}}_i^j\right)}{h_j^b} + \frac{h_j^b}{\sup \left(\widecheck{\mathcal{H}}_i^j\right)}, i \in \mathcal{K}_b, j \in \mathcal{U}_b'&, \text{if } \vect{\kappa}_i > s \\
\infty &, \text{otherwise} \end{cases}
\end{equation}
 where $\widehat{\mathcal{H}}_i^j=\{h_k^b \vert h_k^b \geq h_j^b, k \in \mathcal{K}_b^i, j \in \mathcal{U}'  \}$ and $\widecheck{\mathcal{H}}_i^j=\{h_k^b \vert h_k^b \leq h_j^b, k \in \mathcal{K}_b^i, j \in \mathcal{U}'  \}$ are set of cluster members with higher and lower channel gain than the UE$_j \in \mathcal{U}_b'$, respectively\footnote{Notice that either $\widehat{\mathcal{H}}_i^j = \emptyset$ or $\widecheck{\mathcal{H}}_i^j=\emptyset$ happens for $s=1$. Since the order theory of the real analysis tells us that $\inf \left(\emptyset \right)= \infty$ and $\inf \left(\emptyset \right)=-\infty$, we ignore the first (second) term of \eqref{eq:Eij} if $\widehat{\mathcal{H}}_i^j = \emptyset$ ($\widecheck{\mathcal{H}}_i^j=\emptyset$) occurs.}. The first and second term of \eqref{eq:Eij} favors for new members who give a desirable channel gain disparity between UE$_j \in \mathcal{U}_b'$ and current cluster members with high and low channel gains. Notice that clusters that reached to its maximum affordable size are taken out of consideration by setting their edge weights to infinity. Line 8 executes maximum weighted bipartite matching,  $\mathcal{K}_b^c \gets \underset{\vect{x}}{\min}  \sum_{i,j}  \mathcal{E}_i^j(s) x_i^j \text{ (s.t.) }  \sum_{i} x_i^j \leq 1 ,\sum_{j} x_i^j =1, x_i^j \in \{ 0,1\}, i \in [1,C_b], j \in [1, |\mathcal{U}_b'|]$, which is in the form of a rectangular assignment problem and can be solved in cubic order.
 Algorithm \ref{alg:CF} is run by each BS independent from others and its overall complexity can be given as $\mathrm{O} \left( U_b \log U_b + \sum_{s=1}^{ \left \lceil \frac{U_b}{C_b} \right \rceil-1} \left(U_b-sC_b\right)^3 \right)$ where the first and second terms are due to sorting and matching operations in lines 1 and 8, respectively. Since the second term is more dominant, the proposed clustering solutions has cubic time complexity. On the other hand, exhaustively checking all clustering sizes and corresponding user combination $\sum_{k=2}^{Ub} \binom{U_b}{k} \approx 2^{U_b}$ which yields an exponential time complexity.

\section{$\alpha$-Fair Resource Allocation}
\label{sec:PBA}
In this section, we handle the RA problem by decoupling it into two stages: In the former, a slave problem is defined for each cluster such that optimal power allocations are obtained in closed-form for given cluster formations and bandwidths. In the latter, each slave problem reports its obtained utility which is exploited by a master problem to update cluster bandwidths.
\subsection{Slave Problems: Power Allocation}
\label{sec:slave}
Power allocation problem of clusters can be formulated as in \eqref{eq:Slave} where we omit BS and cluster indices for the sake of simplicity without loss of generality. 
\begin{equation*}
\label{eq:Slave}
\hspace*{0pt}
 \begin{aligned}
 & \hspace*{0pt} \vect{\mathrm{S}}: \underset{\vect{\omega}}{\max}
& & \hspace*{3 pt} \sum_{i=1}^{K} \pi_i\left( \vect{\omega} \right)  \\
& \hspace*{0pt} \mbox{$\mathrm{S_i^1}$:}\hspace*{8pt} \text{s.t.}
&& \hspace*{-4 pt} \omega_i \leq 1, \textbf{ } \hspace{0 pt} \forall i\\ 
&
  \hspace*{0 pt}\mbox{$\mathrm{S_i^2}$: } & & \hspace*{-4 pt}0 \leq  \omega_{i} h_i -   
   \bar{\Gamma}_i \left( \epsilon  \sum_{j =1}^{i-1}  \omega_{j} h_j+ \sum_{k =i+1}^{K} \omega_{k} h_k +\varrho \right),\forall i
\end{aligned}
\end{equation*}
which can be locally solved by each cluster member for given cluster bandwidth and channel gains of other cluster members. In order to derive the closed-form expressions for optimal power allocations, we first apply dual decomposition method to the slave problems. Accordingly, Lagrangian function of $\vect{\mathrm{S}}$ is given in (\ref{eq:Lag}) where $\lambda_i$ and $\mu_i, \forall i,$ are Lagrange multipliers. Taking derivatives of Lagrangian function with respect to  $\omega_i$, $\lambda_i$ and $\mu_i$, Karush-Kuhn-Tucker (KKT) conditions can be obtained as in (\ref{eq:Lag_derv_omg})-(\ref{eq:Lag_derv_lam}). 

KKT conditions are first-order necessary conditions for a nonlinear programming solution to be optimal, which is still subject to satisfaction of some regularity conditions. In particular, if all equality and inequality constraints are affine functions, i.e., linearity constraint qualification is held, no other regularity condition is needed. This is indeed the case for $\vect{\mathrm{S}}$ as all constraints are affine functions of $\vect{\omega}$. In the slave problem, there exists a total of $2K$ Lagrange multipliers that can be categorized into two subsets $\mathcal{S}_1=\{\lambda_i\vert 1 \leq i \leq K\}$ and $\mathcal{S}_2=\{\mu_i \vert 1 \leq i \leq K\}$. Therefore, each slave problem requires the KKT condition verification of $2^{2K}$ Lagrange multiplier combinations. Even though this  is computationally impractical, we fortunately need to check only $2^{K}$ combinations \cite{Ali2016,chong2013introduction} for the following reasons: Notice that each UE would transmit at the maximum transmission power in case of no interference, i.e, OMA. However, optimal power levels of NOMA can either be determined by CSCs or maximum transmission power according to SINR requirements and achievable capacity of UEs. That is, UE$_i$ can be active either on maximum transmission power or CSCs at the optimal point. Hence, we  need to consider the following solution set $\mathcal{S}=\{ \lambda_i \text{ or } \mu_i   \vert  i \in[1, K]\}$ in order to obtain a closed-form solution. For a basic NOMA cluster, combinations of solution set can be given as $\{ \lambda_1, \lambda_2 \}$, $\{ \lambda_1, \mu_2 \}$, $\{ \mu_1, \lambda_2 \}$, and $\{ \mu_1, \mu_2 \}$.  Furthermore, $\mathcal{S}_{\lambda}=\mathcal{S}-\mathcal{S}_2$ and $\mathcal{S}_{\mu}=\mathcal{S}-\mathcal{S}_1$ represents the subset of the solution set $\mathcal{S}$ which define cluster members active at $\lambda$ and $\mu$, respectively. Finally, $\mathcal{I}_{\lambda}$ and $\mathcal{I}_{\mu}$ denotes the index set of $\mathcal{S}_{\lambda}$ and $\mathcal{S}_{\mu}$, respectively. For example, for the solution set of $\mathcal{S}=\{\mu_{1},\lambda_{2},\lambda_{3},\mu_{4},\lambda_{5},\mu_{6}\}$, we have $\mathcal{S}_{\lambda}=\{\lambda_{2},\lambda_{3},\lambda_{5}\}$, $\mathcal{S}_{\mu}=\{\mu_{1},\mu_{4},\mu_{6}\}$, $\mathcal{I}_{\lambda}=\{2,3,5\}$ and $\mathcal{I}_{\mu}=\{1,4,6\}$.

   \begin{figure}[!t]
    \centering
        \begin{subfigure}[b]{0.24 \textwidth}
\includegraphics[width=\columnwidth]{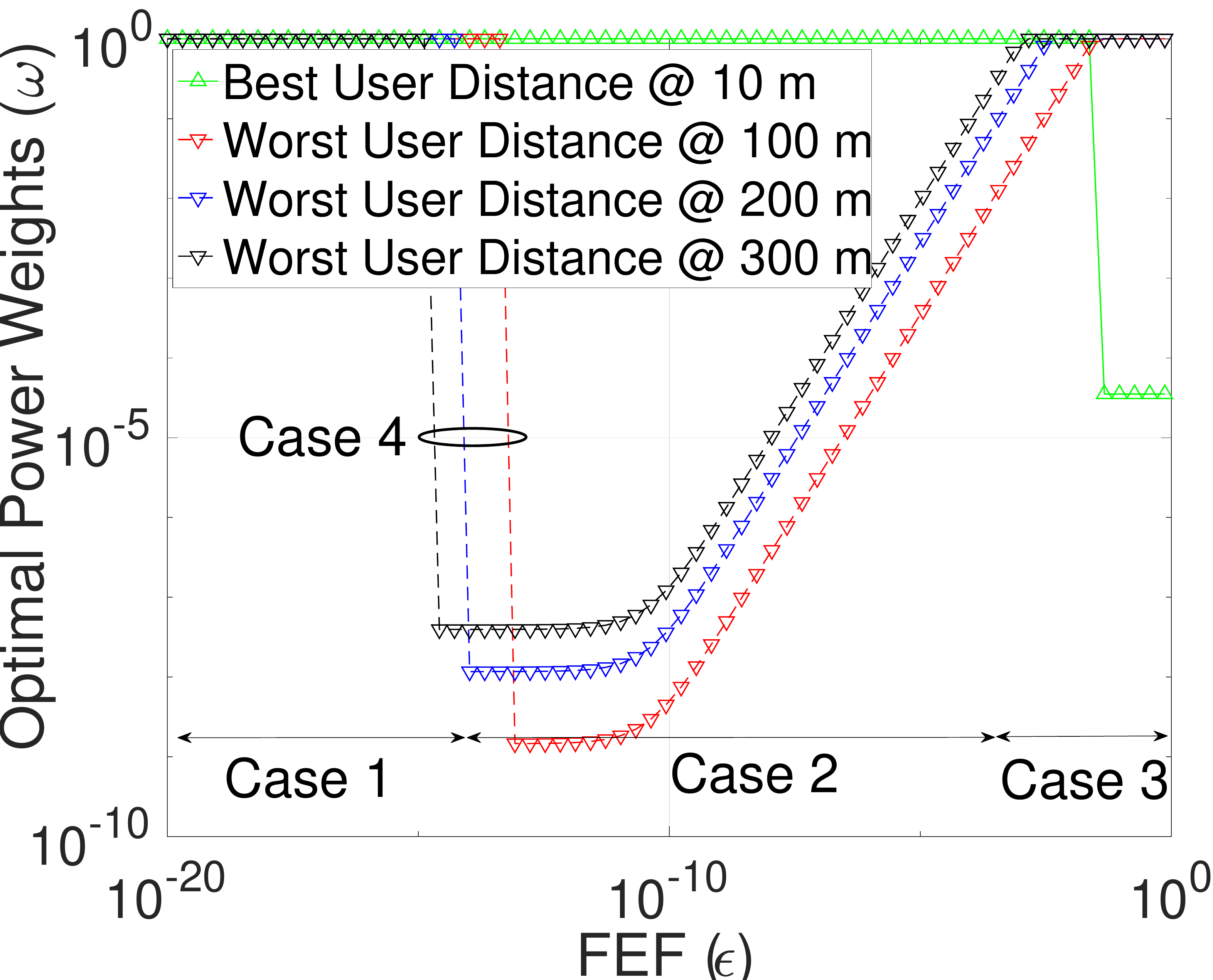}  
        \caption{}
 \label{fig:gain}
    \end{subfigure}
        \begin{subfigure}[b]{0.24\textwidth}
\includegraphics[width=\columnwidth]{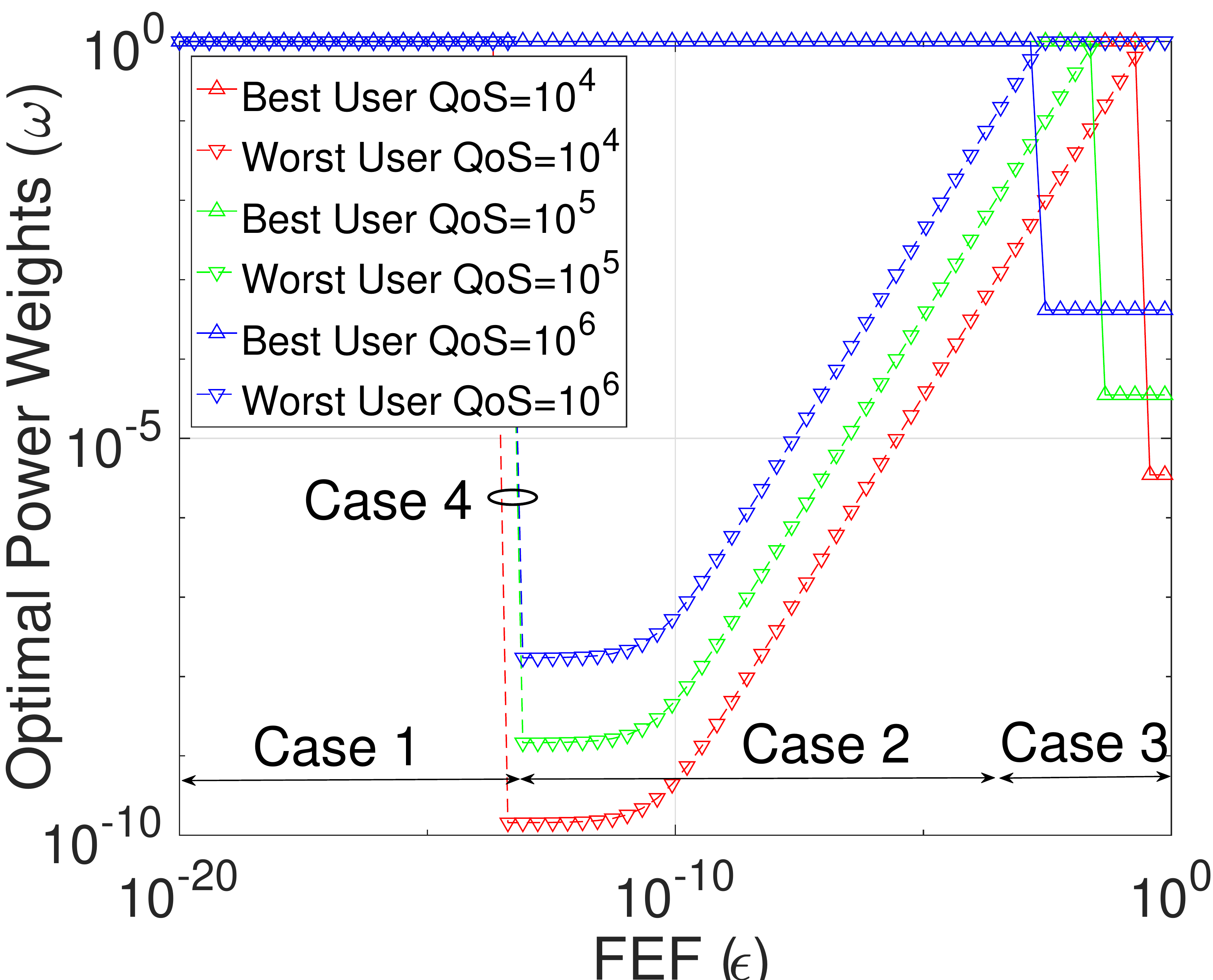}  
        \caption{ }
 \label{fig:qos}
    \end{subfigure}
   \caption{Optimal power allocations of a basic cluster vs. FEF levels; a) different channel gain disparity and b) QoS constraint scenarios.}
        \label{fig:cases}
\end{figure}
\begin{figure*}
\small
\begin{align}
 & L(\vect{\omega}, \vect{\lambda},\vect{\mu})=\frac{1}{1-\alpha}\sum_{i=1}^{K} \left( R_i^{1-\alpha}\left(\vect{\omega}\right)-1 \right)+\sum\limits_{i=1}^{K}\lambda_{i}(1-\omega_i) +\sum\limits_{i=1}^{K}\mu_{i} \left(\omega_{i} g_i- \bar{\Gamma} _i\left( \epsilon _i \sum\limits_{j = 1}^{i - 1} {\omega_j g_j}- \sum\limits_{k=i+1}^{K}\omega_{k} g_k- \varrho\right) \right) \label{eq:Lag} \\
\nonumber   & \frac{\partial \mathcal{L}}{\partial \omega_i^{\star}}  = W\theta \left\{ \frac{ R_i^{-\alpha}\left(\vect{\omega}\right)}{\sum _{k=1}^{i-1} \epsilon  h_k \omega_k+\sum _{l=i}^K \omega_l h_l+\varrho } \right. -\sum _{j=1}^{i-1} \frac{   \omega_j h_j R_j^{-\alpha}\left(\vect{\omega}\right) }{\left(\sum _{k=1}^{j-1} \epsilon  h_k \omega_k+\sum _{l=j}^K \omega_l h_l+\varrho\right)\left(\sum _{k=1}^{j-1} \epsilon  h_k \omega_k+\sum _{l=j}^K \omega_l h_l+\varrho\right) }\\
\nonumber &   \hspace{25pt}  \left. -\sum _{j=i+1}^{K} \frac{ \epsilon \omega_j h_j R_j^{-\alpha}\left(\vect{\omega}\right) }{\left(\sum _{k=1}^{j-1} \epsilon  h_k \omega_k+\sum _{l=j}^K \omega_l h_l+\varrho\right)\left(\sum _{k=1}^{j-1} \epsilon  h_k \omega_k+\sum _{l=j}^K \omega_l h_l+\varrho\right) } \right\} \\
&  \hspace{25pt}  -\lambda_{i}+\left(1-\Gamma _i\right) \mu _i-\sum _{j=1}^{i-1} \Gamma _j \mu _j-\epsilon  \sum _{j=i+1}^{\text{K}} \Gamma _j \mu _j \geq 0, \forall i,   \label{eq:Lag_derv_omg}
\\
 \label{eq:Lag_derv_lam}
  & \frac{\partial \mathcal{L}}{\partial \lambda_i^\star} = 1-\omega_i \geq 0, \text{ if } \lambda_i^\star  \geq 0,  \frac{\partial \mathcal{L}}{\partial \mu_{i}^{\star}} =  \omega_i- \left(\sum_{j=1}^{i-1} \omega_j +\epsilon_i \sum_{k=i+1}^{K_c^r} \omega_k + \rho_i \right)(q_i-1) \geq 0,\text{ if } \mu_{i}^{\star} \geq 0.
\end{align}
\hrule
\vspace{1pt}
\hrule
\end{figure*}
\renewcommand{\arraystretch}{1.5}
\begin{table*}[htbp!]
\tiny
\centering
\caption{Necessary conditions and closed-form power allocations (Please see Appendix \ref{app_lem_4} for the proof).}
\label{tab:pwr_cond}
\begin{tabular}{|l|l|l|}
\hline
\multicolumn{1}{|l|}{\small \textbf{K}} & \centering \textbf{\small Necessary Conditions} & \multicolumn{1}{c|}{\textbf{\small Power Allocations}} \\ \hline
\multirow{4}{*}{\vspace*{-65pt} \small \textbf{2}}      &  \hspace{-27pt} \parbox{6cm}{\begin{flalign*} & \mathcal{S}=\{\lambda_1, \lambda_2\}: \mathrm{S}_l^2 , \mu_l>0 ,l=1,2.
\end{flalign*}}                     &    \parbox{7cm}{ $\omega_1=\omega_2=1$                                        } \\ \cline{2-3} 
                        &     \parbox{6cm}{\begin{flalign*}& \mathcal{S}=\{\lambda_1, \mu_2\}: \mathrm{S_k^1} , \lambda_k >0, k=2. \: \vert \: \mathrm{S}_l^2, \mu_l>0, l=1.
\end{flalign*}}                  &                \parbox{6cm}{ $\omega_1=1  \: \left \vert \: \omega_2= \frac{\bar{\Gamma}_2 \left(h_1 \epsilon +\varrho \right)}{h_2} \right.$}                                    \\ \cline{2-3} 
                        &    \parbox{6cm}{\begin{align*}& \mathcal{S}=\{\mu_1, \lambda_2\}: \mathrm{S_k^1} , \lambda_k >0, k=1. \: \vert \:  \mathrm{S}_l^2, \mu_l>0, l=2
\end{align*}}                             &           \parbox{6cm}{ $\left. \omega_1= \frac{\bar{\Gamma}_1 \left(h_2+\varrho \right)}{h_1} \: \right \vert \: \omega_2=1$}                                                 \\ \cline{2-3} 
                        &        \hspace{-27pt} \parbox{6cm}{\begin{align*}& \mathcal{S}=\{\mu_1, \mu_2\} :  \mathrm{S}_k^1 , \lambda_k>0, k=1,2.
\end{align*}}                         &        \parbox{6cm}{ $\omega_1= \frac{\varrho }{\frac{h_1}{\bar{\Gamma}_1}-a_1 h_2} \: \left \vert \: \omega_2= \frac{a_1 \varrho}{\frac{h_1}{\bar{\Gamma}_1}-a_1 h_2} \right. $}                                                                                \\ \hline
\multirow{8}{*}{\vspace*{-150pt} \small \textbf{3}}      &    \hspace{-17pt}   \parbox{6cm}{\begin{align*}& \mathcal{S}=\{\lambda_1, \lambda_2, \lambda_3\} : \mathrm{S}_l^2 , \mu_l>0 ,l=1,2,3.
\end{align*}}                       &        \parbox{6cm}{   $\omega_1=\omega_2=\omega_3=1$     }       \\ \cline{2-3} 
                        &    \parbox{6cm}{\begin{align*}& \mathcal{S}=\{\lambda_1, \lambda_2, \mu_3\} : \mathrm{S_k^1} , \lambda_k >0, k=3. \: \vert \:  \mathrm{S}_l^2, \mu_l>0, l=1,2.
\end{align*}}                         &          $\omega_1=\omega_2=1 \: \left \vert \: \omega_3= \frac{\bar{\Gamma}_3 \left(h_1 \epsilon +h_2 \epsilon +\varrho \right)}{h_3}\right. $                                         \\ \cline{2-3} 
                        &       \parbox{6cm}{\begin{align*}& \mathcal{S}=\{\lambda_1, \mu_2, \lambda_3\} : \mathrm{S_k^1} , \lambda_k >0, k=2. \: \vert \:  \mathrm{S}_l^2, \mu_l>0, l=1,3.
\end{align*}}                   &              $\omega_1=\omega_3=1 \: \left \vert \: \omega_2= \frac{\bar{\Gamma}_2 \left(h_1 \epsilon +h_3+\varrho \right)}{h_2} \right.$                                                 \\ \cline{2-3} 
                        &        \parbox{6cm}{\begin{align*}& \mathcal{S}=\{\lambda_1, \mu_2, \mu_3\} : \mathrm{S_k^1} , \lambda_k >0, k=2,3. \: \vert \:  \mathrm{S}_l^2, \mu_l>0, l=1.
\end{align*}}                          &         $\omega_1=1 \: \left \vert \: \omega_2=  \frac{h_1 \epsilon +\varrho }{\frac{h_2}{\bar{\Gamma}_2}-a_1 h_3} \right. \: \left \vert \: \omega_3= \frac{a_1 \left(h_1 \epsilon +\varrho \right)}{\frac{h_2}{\bar{\Gamma}_2}-a_1 h_3} \right.$                                                                                      \\ \cline{2-3} 
                        &            \parbox{6cm}{\begin{align*}& \mathcal{S}=\{\mu_1, \lambda_2, \lambda_3\} : \mathrm{S_k^1} , \lambda_k >0, k=1. \: \vert \:  \mathrm{S}_l^2, \mu_l>0, l=2,3.
\end{align*}}                        &             $\left. \omega_1=  \frac{\bar{\Gamma}_1 \left(h_2+h_3+\varrho \right)}{h_1} \: \right \vert \: \omega_2=\omega_3=1 $                                                           \\ \cline{2-3} 
                        &          \parbox{6cm}{\begin{align*}& \mathcal{S}=\{\mu_1, \lambda_2, \mu_3\} : \mathrm{S_k^1} , \lambda_k >0, k=1,3. \: \vert \:  \mathrm{S}_l^2, \mu_l>0, l=2.
\end{align*}}                             &                          $\left. \omega_1=  \frac{c_2 h_3+h_2+\varrho }{\frac{h_1}{\bar{\Gamma}_1}-a_1 h_3} \: \right \vert \: \omega_2=1 \left \vert \: \omega_3= \frac{a_1\left(c_2 h_3+h_2+\varrho \right)}{\frac{h_1}{\Gamma _1}-a_1 h_3}+c_2\right. $                                                                               \\ \cline{2-3} 
                        &        \parbox{6cm}{\begin{align*}& \mathcal{S}=\{\mu_1, \mu_2,\lambda_3\} : \mathrm{S_k^1} , \lambda_k >0, k=1,2. \: \vert \:  \mathrm{S}_l^2, \mu_l>0, l=3.
\end{align*}}                              &                       $\left. \omega_1=  \frac{h_3+\varrho }{\frac{h_1}{\bar{\Gamma}_1}-a_1 h_2} \: \right \vert \: \left.  \omega_2= \frac{a_1 \left(h_3+\varrho \right)}{\frac{h_1}{\bar{\Gamma}_1}-a_1 h_2} \: \right\vert \: \omega_3=1 $                                                                     \\ \cline{2-3} 
                        &      \hspace{-17pt}   \parbox{6cm}{\begin{align*}& \mathcal{S}=\{\mu_1, \mu_2, \mu_3\} : \mathrm{S_k^1} , \lambda_k>0 ,k=1,2,3.
\end{align*}}                             &                 \parbox{6cm}{\begin{align*}              & \left. \omega_1= \frac{\varrho }{\frac{h_1}{\bar{\Gamma}_1}-a_1 h_2-a_1 a_2 h_3} \: \right \vert \: \omega_2=\frac{a_1 \varrho }{\frac{h_1}{\bar{\Gamma}_1}-a_1 h_2-a_1 a_2 h_3} \\ 
& \: \omega_3=\frac{a_1 a_2 \varrho }{\frac{h_1}{\bar{\Gamma}_1}-a_1 h_2-a_1 a_2 h_3}                                            \end{align*}}                          \\ 
\hline
\end{tabular}
\vspace{8pt}
\hrule
\vspace{1pt}
\hrule
\end{table*}

\begin{figure*}
\begin{lemma}
\label{lem:cfp}
Given that necessary conditions are satisfied, closed-form power allocations of cluster members is given as 
\begin{equation}
\label{eq:cfp}
\omega_{i}=\begin{cases}
1 &, \text{for } \forall i \in \mathcal{I}_\lambda,\\
\left(\prod_{\substack{1\leq j<i \\ j\in \mathcal{I}_{\mu}}}a_j\right)\omega_{\mathrm{mind}}+\sum_{\substack{1\leq j<i  \\ j \in \mathcal{I}_\mu}}\left(\prod_{\substack{j<k<i \\ k\in \mathcal{I}_{\mu}} } a_k \right)b_j&, \text{for } \forall i \in \mathcal{I}_\mu, i > \mathrm{mind}. \\
\end{cases}, 
\end{equation}
where $\omega_{\mathrm{mind}}=\frac{c_{\mathrm{mind}}+\sum_{\substack{\mathrm{mind}<k \leq K \\k\in \mathcal{I}_{\mu}}}h_{k}\left(\sum_{\substack{1 \leq j <k \\j\in \mathcal{I}_{\mu}}}\left(\prod_{\substack{ l>j \wedge j<k \\l\in \mathcal{I}_{\mu} }} a_l \right)c_j \right)}{\frac{h_{\mathrm{mind}}}{\bar{\Gamma}_{\mathrm{mind}}}-\sum_{\substack{\mathrm{mind}< k \leq K \\k\in \mathcal{I}_{\mu}} }h_k \left(\prod_{\substack{1 \leq j < k \\j\in I_{\mu}}}a_j \right)   }$,  $\mathrm{mind}\triangleq \argmin(\mathcal{I}_\mu)$ is the minimum index of UEs within $\mathcal{S}_\mu$, 
$c_{\mathrm{mind}}=\epsilon \sum_{\substack{1 \leq j < \mathrm{mind} \\j\in \mathcal{I}_{\lambda}}}h_{j}+ \sum_{\substack{\mathrm{mind} < j \leq K  \\j\in \mathcal{I}_{\lambda}}}h_{j} + \sum_{\substack{\mathrm{mind} < j \leq K  \\j\in \mathcal{I}_{\mu}}}\omega_j h_{j} $, $a_i=\frac{h_{\mathrm{max_i}}\left(\epsilon+\frac{1}{\bar{\Gamma}_{\mathrm{max_i}}} \right)}{h_{i}\left(1+\frac{1}{\Gamma_{i}} \right)}$, $b_i=\frac{\left(\epsilon-1\right)}{h_{i}\left(1+\frac{1}{\bar{\Gamma}_{i}}\right)}\sum_{\substack{\mathrm{max_i} <j<i \\ j\in \mathcal{I}_{\lambda} }}h_{j}$, and $\mathrm{max_i}=\argmax\{m \vert m \in \mathcal{I}_\mu , m<i\}$ is the maximum index of $\mathcal{S}_\mu$ among the indices less than $i$.
\end{lemma}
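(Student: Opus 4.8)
The plan is to read off the solution from the primal feasibility conditions once the necessary KKT conditions of Table~\ref{tab:pwr_cond} have fixed which constraint is active at each UE, rather than re-attacking the nonlinear program directly. Under those conditions the index set $\{1,\dots,K\}$ is partitioned into the complementary sets $\mathcal{I}_\lambda$ (UEs whose power bound $\mathrm{S}_i^1$ is active, i.e.\ $\lambda_i^\star>0$) and $\mathcal{I}_\mu$ (UEs whose composite SINR constraint $\mathrm{S}_i^2$ is active, i.e.\ $\mu_i^\star>0$). The first line of \eqref{eq:cfp} is then immediate, since complementary slackness with $\lambda_i^\star>0$ forces $\omega_i=1$ for every $i\in\mathcal{I}_\lambda$. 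Everything left is linear algebra: solve for $\{\omega_i:i\in\mathcal{I}_\mu\}$ from the equalities
\[
\tfrac{h_i}{\bar{\Gamma}_i}\,\omega_i=\epsilon\sum_{j=1}^{i-1}\omega_j h_j+\sum_{k=i+1}^{K}\omega_k h_k+\varrho,\qquad i\in\mathcal{I}_\mu,
\]
with every $\omega_j$, $j\in\mathcal{I}_\lambda$, replaced by $1$.

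The key step is to decouple this system by differencing consecutive active constraints. Fix $i\in\mathcal{I}_\mu$ with $i>\mathrm{mind}$ and let $m=\mathrm{max_i}$ be the largest element of $\mathcal{I}_\mu$ below $i$, so that every index strictly between $m$ and $i$ lies in $\mathcal{I}_\lambda$ and contributes weight $1$. Subtracting the $\mathrm{S}_m^2$ equality from the $\mathrm{S}_i^2$ equality cancels the sums they share and, after bringing the $\omega_i h_i$ term over, leaves
\[
h_i\Bigl(1+\tfrac{1}{\bar{\Gamma}_i}\Bigr)\omega_i=h_m\Bigl(\epsilon+\tfrac{1}{\bar{\Gamma}_m}\Bigr)\omega_m+(\epsilon-1)\!\!\!\sum_{\substack{m<j<i\\ j\in\mathcal{I}_\lambda}}\!\!\! h_j,
\]
which, divided through by $h_i(1+1/\bar{\Gamma}_i)$, is exactly the two-term recursion $\omega_i=a_i\omega_m+b_i$ with $a_i,b_i$ as defined in the statement. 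Unrolling this recursion along the ordered elements of $\mathcal{I}_\mu$, starting from $\mathrm{mind}$, expresses each $\omega_i$, $i\in\mathcal{I}_\mu$, as the product-times-$\omega_{\mathrm{mind}}$ plus telescoping sum of $b$-terms shown in the second line of \eqref{eq:cfp}; in particular every such $\omega_i$ becomes an affine function of the single scalar $\omega_{\mathrm{mind}}$.

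It remains to pin down $\omega_{\mathrm{mind}}$, which is where the one equation not consumed by the differencing — the $\mathrm{S}_{\mathrm{mind}}^2$ equality itself — comes in. All indices below $\mathrm{mind}$ lie in $\mathcal{I}_\lambda$, so its right-hand side is a constant plus $\sum_{k>\mathrm{mind},\,k\in\mathcal{I}_\mu}\omega_k h_k$; substituting the unrolled affine expressions for those $\omega_k$ collapses $\mathrm{S}_{\mathrm{mind}}^2$ to a single scalar linear equation in $\omega_{\mathrm{mind}}$. Moving its $\omega_{\mathrm{mind}}$-coefficients to the left yields the denominator $\tfrac{h_{\mathrm{mind}}}{\bar{\Gamma}_{\mathrm{mind}}}-\sum_{k>\mathrm{mind},\,k\in\mathcal{I}_\mu}h_k\prod_{j<k,\,j\in\mathcal{I}_\mu}a_j$, and gathering the residual constants ($\varrho$, the $h_j$ over $\mathcal{I}_\lambda$, and the $b$-contributions carried along by the unrolled weights, bundled into $c_{\mathrm{mind}}$ and the double sum) yields the numerator; division gives the stated formula for $\omega_{\mathrm{mind}}$, and back-substitution completes \eqref{eq:cfp}.

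I expect the only real difficulty to be the index bookkeeping: tracking the interleaving of $\mathcal{I}_\lambda$ and $\mathcal{I}_\mu$ while unrolling the recursion, and checking that the accumulated constants aggregate precisely into the $a/b/c$ bundles as written. Well-posedness needs no separate treatment, since the denominator of $\omega_{\mathrm{mind}}$ is nonzero and the resulting $\omega_i$ are positive exactly when a feasible power vector exists, i.e.\ under the spectral-radius condition $\lambda_F<1$ of Section~\ref{sec:CSanalysis}. As a final consistency check I would recover the $K=2$ and $K=3$ rows of Table~\ref{tab:pwr_cond} by specializing the recursion to those cluster sizes.
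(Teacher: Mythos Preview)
Your proposal is correct and follows essentially the same route as the paper: set $\omega_i=1$ for $i\in\mathcal{I}_\lambda$ by complementary slackness, write the active CSC equalities for $i\in\mathcal{I}_\mu$, difference the equality at $i$ against the one at $\mathrm{max}_i$ to obtain the first-order recurrence $\omega_i=a_i\omega_{\mathrm{max}_i}+b_i$, unroll it, and then close the system via the $\mathrm{mind}$ equation. The paper's Appendix~\ref{app_lem_4} does exactly this (citing a standard recurrence-solution result for the final unrolling), so your outline matches it step for step; your explicit explanation of how $\omega_{\mathrm{mind}}$ is recovered from the remaining undifferenced equation is in fact more transparent than the paper's terse reference.
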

\begin{proof}
Please see Appendix \ref{app_lem_4}.
\end{proof}
\hrule
\vspace{1pt}
\hrule
\end{figure*}

For example, let us consider $\mathcal{S}=\{ \lambda_1,  \mu_2, \lambda_3, \mu_4 \}$, then the power allocations can be derived from active primal constraints, i.e., $\{ \mathrm{S}_1^1, \mathrm{S}_2^2, \mathrm{S}_3^1, \mathrm{S}_4^2 \}$, which also requires the satisfaction of corresponding primal KKT conditions, i.e., $\{ \mathrm{S}_1^2, \mathrm{S}_2^1, \mathrm{S}_3^2, \mathrm{S}_4^1 \}$.  That is, active primal constraints form the KKT conditions while inactive constraints are used for calculating the corresponding power allocations. Accordingly, we tabulate power allocations and corresponding KKT conditions for cluster sizes 2 and 3 in Table \ref{tab:pwr_cond} where the first column indicates the cluster size $K$, the second column presents $2^K$ solution set cases and corresponding necessary conditions, and finally the last row provides the closed-form optimal power allocations. Excluding the first case, both power allocations and necessary conditions are functions of three parameters; $\epsilon$, $\bar{\Gamma}$, and channel gain disparity. Based on these parameters, while some cases can be infeasible due to the violation of the necessary conditions, there might me multiple cases which satisfy the constraint with different performance.

Before we explain how the optimal case is determined, we consider an exemplary basic NOMA cluster size of two in order to have a deeper insight into how the power allocation strategy changes with different parameters which have a direct impact on the constraints, i.e., necessary conditions. The optimal power weights versus different FEF levels under various channel gain disparity and QoS constraints are shown in Fig. \ref{fig:gain} and Fig. \ref{fig:qos}, respectively. As it is already tabulated in Table 1, there exist four cases: In case 1, both UEs transmit at maximum power. In case 2, the worst UE is active at the QoS constraint whereas the best user keeps transmitting at the maximum power, which is in the opposite direction for case 3. In the last case, both UEs have power levels that exactly and barely satisfy their QoS demands. As it is obvious in Fig. \ref{fig:cases}, power levels and case regions vary with FEF levels, channel gain disparity, and QoS demands. While we observe case 1 and case 3 in very low and very high FEF levels, respectively, intermediate FEF levels operate on case 2. On the other hand, case 4 is observed during the interval where optimal case is in transition from case 1 to case 2. Notice that channel gain disparity and QoS constraints have significant impacts on both optimal power levels and the points where cases start and end.

At this point, let us explain how Table \ref{tab:pwr_cond} can be used to decide on the optimal case: First, power allocations of each case are computed by expressions on the rightmost column, and then substituted into the corresponding constraints in the central column to verify if the corresponding KKT conditions are satisfied. Thereafter, optimal power allocations are determined by the case which gives the highest objective value among the cases who satisfy the KKT conditions. Therefore, the worst case complexity  is given as $\mathrm{O}(2^{K}+K\log K)$ where the first term is the cost of calculating and checking $2^K$ cases and the second term is the cost of sorting and selecting the best case. Since the complexity of the first term dominates that of the second, overall complexity can be approximated by $\mathrm{O}(2^{K})$. Generalizing Table \ref{tab:pwr_cond}, Lemma \ref{lem:cfp} provides the closed-form expression for optimal power allocations for an imperfect NOMA cluster of size $K$.

\subsection{Master Problem: Bandwidth Allocation}
\label{sec:master}
Following the optimal power allocation of the slave problems, BSs report the achieved SINR levels of cluster members to the MBS which then updates the bandwidth allocations as follows
\begin{equation*}
\label{eq:Master}
\hspace*{0pt}
 \begin{aligned}
 & \hspace*{0pt} \vect{\mathrm{M}}: \underset{\vect{\theta}}{\max}
& & \hspace*{3 pt} \frac{1}{1-\alpha}\sum_{\forall (b,c,i)}\left[ \left( \theta_b^c \vartheta_{b,c}^i\right)^{1-\alpha}-1  \right]\\
& \hspace*{0pt} \mbox{$\mathrm{M_i^1}$:}\hspace*{12pt} \text{s.t.}
&& \sum_{b,c} \theta_{b,c} \leq \Theta, \textbf{ } \hspace{0 pt} \forall i\\ 
&
  \hspace*{0 pt}\mbox{$\mathrm{M_i^2}$: } & & \bar{R}_i \leq \theta_b^c \vartheta_{b,c}^i, \textbf{ }  \hspace{0pt}\forall i \in \mathcal{K}_b^c, \forall b, \forall c.
\end{aligned}
\end{equation*}
where $\vartheta_{b,c}^i  \triangleq  W \log_2(1+\Gamma_{b,c}^i)$ is the achieved utility of clusters and given by the slave problems. 
An effective method of solving this problem is unintegerizing the integer valued optimization variable $\theta_b^c$. In this manner, $\vect{\mathrm{M}}$ reduces to a convex optimization problem and fractional part of the optimal bandwidth allocations can be handled by RB scheduling mechanisms. 

\begin{algorithm}
 \caption{\small Distributed $\alpha$-Fair Resource Allocation}
  \label{alg:HD}
\begin{algorithmic}[1]
\small
 \renewcommand{\algorithmicrequire}{\textbf{Input:}}
 \renewcommand{\algorithmicensure}{\textbf{Output:}}
\REQUIRE Channel gains
 \STATE $t \leftarrow 0$
     \STATE $\vect{\theta(k)} \leftarrow$   Initialize the bandwidth allocations, $\forall b, c$. 
   % \STATE $\varpi_c^r(t) \leftarrow$   Initialize the power fractions, $\forall c,	 r$.
   \WHILE {$t \in \mathcal{T}$}
     \STATE $\vect{\delta_b} \leftarrow$ BS$_b$ forms its clusters based on Algorithm \ref{alg:CF}.
  \STATE $\vect{\omega_b^c(t)} \leftarrow$ Check power levels \& conditions. \STATE $\vect{\omega_{b,c}^{\star} (t)} \leftarrow$ Select the best cases for optimal power allocation. 
       \STATE UE$_i \gets \vect{\omega_{b,c}^i}$; UE$_i$ receives its power level from BS$_b, \forall i \in \mathcal{U}_b$.
\STATE BS$_0 \gets \vect{\vartheta_{b}^c}$; The MBS receives the utilities from BS$_b, \forall b$ 
\STATE BS$_b \gets \vect{\theta(t+1)}$; The MBS updates and disseminates bandwidths to BS$_b, \forall b$.
\STATE  $t \leftarrow t+1$
\ENDWHILE
\hspace{-16pt}\RETURN  Power and bandwidth allocations
 \end{algorithmic}
 \end{algorithm}
 
Proposed distributed  $\alpha$-fair resource allocation framework is summarized in Algorithm \ref{alg:HD} which is indeed a detailed algorithmic version of Fig. \ref{fig:dist}.  In Algorithm \ref{alg:HD}, BSs are only required to know channel gains of their own UEs. Following the initialization of the cluster bandwidths in line 2, the while loop between lines 3 and 11 iteratively forms clusters, obtains power allocations and update bandwidths until a termination term is not reached. In line 4, each BS first forms its clusters based on the steps given in Algorithm \ref{alg:CF}. According to the CF outcome, BSs solve slave problems to calculate the optimal power levels as explained in the previous section in lines 5 and 6, then transmits optimal power allocations to UEs in line 7. Thereafter, BSs share observed utilities with the MBS in line 8, which is followed by a bandwidth allocation update and dissemination  in line 9. 

It is necessary to point out that the first step of next iteration starts with reclustering if there is a change in cluster size or a significant variation in channel gains\footnote{While channel gain variations can be caused by user mobility, cluster size varies  either with bandwidth or QoS updates.}. Since all steps between lines 4 and 8 are executed by BSs in a parallel fashion\footnote{ If a BS fails to implement the proposed scheme, it can switch to OMA scheme until it recovers from the failure.}, the computational complexity for each BS is mainly driven by clustering and power control steps whose time complexity are given in Section V-B and Section VI-A, respectively.  Although the MBS has an extra duty for solving the master problem $\mathbf{\mathrm{M}}$, complexity of solving a convex problem is negligible in comparison with clustering and power allocation. 

Notice that there are two types of message passing in Algorithm \ref{alg:HD}: The former occurs between BS$_b$ and its users to share optimal power allocations, which is in the order of $U_b$. The latter takes place between smallcells and the MBS to receive bandwidth updates and report obtained utilities (line 9), which is in order of the total number of clusters, $C_b$. Therefore, proposed distributed method has a low communication overhead. 

Algorithm \ref{alg:HD} starts with clustering and then proceed with the power allocation.  Even though reversing this order can be thought as an alternative method, it is challenging due to several practical reasons: First, initial cluster bandwidths are necessary to calculate the feasible cluster sizes, that is the first step of clustering. Second, initial cluster bandwidths are also necessary for the power allocation problem because QoS constraints depends on the available cluster bandwidth. Since it is challenging to solve these two main subproblems without an initial bandwidth allocation at the first iteration, we follow the former approach.

\renewcommand{\arraystretch}{1}
\begin{table}[t!]
\centering
\caption{Table of Parameters}
\label{tab:parameters}
\scriptsize 
\resizebox{0.45\textwidth}{!}{%
\begin{tabular}{|l|l|l|l|l|l|}
\hline
Par.         & Value       & Par.          & Value           & Par.          & Value         \\ \hline
$\eta_b^u$   & $3.76$   &     $\epsilon$      & $10^{-7}$  & $\beta$   & $0.025$       \\ \hline
   $N_0$         & $-174$ $dBm$ &    $\bar{K}_b$ & $10$ & $P_u$  & $23$ $dBm$        \\ \hline
$W$        & $180$ kHz  & $U$ & $100$   &  $P_s$  & $30$ $dBm$    \\ \hline
$\Theta$      & $100$  & $B$           & $10$             & $P_m$ & $46$ $dBm$      \\ \hline
\end{tabular}%
}
\end{table}

%%%%%%%%%%%%%%%%%%%%%%%%%%%%%%%%%%%%%%%%%%%%%%%%%%%%%%
\section{Numerical Results and Analysis}
\label{sec:res}

For the simulations, we consider $U$ UEs and $B$ SBSs uniformly distributed over a cell area of $500\:m \times 500\:m$ MBS. QoS requirements of UEs are randomly determined with a mean of 1 Mbps. All results are obtained by averaging over 200 network scenarios. The composite channel gain, $h_b^i$, between BS$_b$ and UE$_i$ is given as
\begin{equation}
\label{eq:gij}
h_b^i=A_b^i \delta_{b,i}^{-\eta_b^i} 10^{\xi_b^i/10} \E\{ | g_b^i|^2 \}
\end{equation}
where  $A_b^i $ is a constant related to antenna parameters, $\delta_{b,i}$ is the distance between the nodes, $\eta_b^i$ is the path loss exponent, $10^{\xi_b^i/10}$ represents the log-normally distributed shadowing, $\xi_b^i $ is a normal random variable representing the variation in received power with a variance of $\varsigma_b^i $, i.e., $\xi_b^i \sim \mathcal{N}(0,\varsigma_b^i )$, $\tilde{h}_b^i$ is the complex channel fading coefficient, $\E\{ \cdot \}$ is the expectation to average small scale fading out, and $ \E\{ | g_b^i|^2 \}$ is assumed to be unity. Unless it is stated explicitly otherwise, we use the default simulation parameters given in Table \ref{tab:parameters}. 
\begin{figure}[!htb]
    \centering
        \includegraphics[width=0.45 \textwidth]{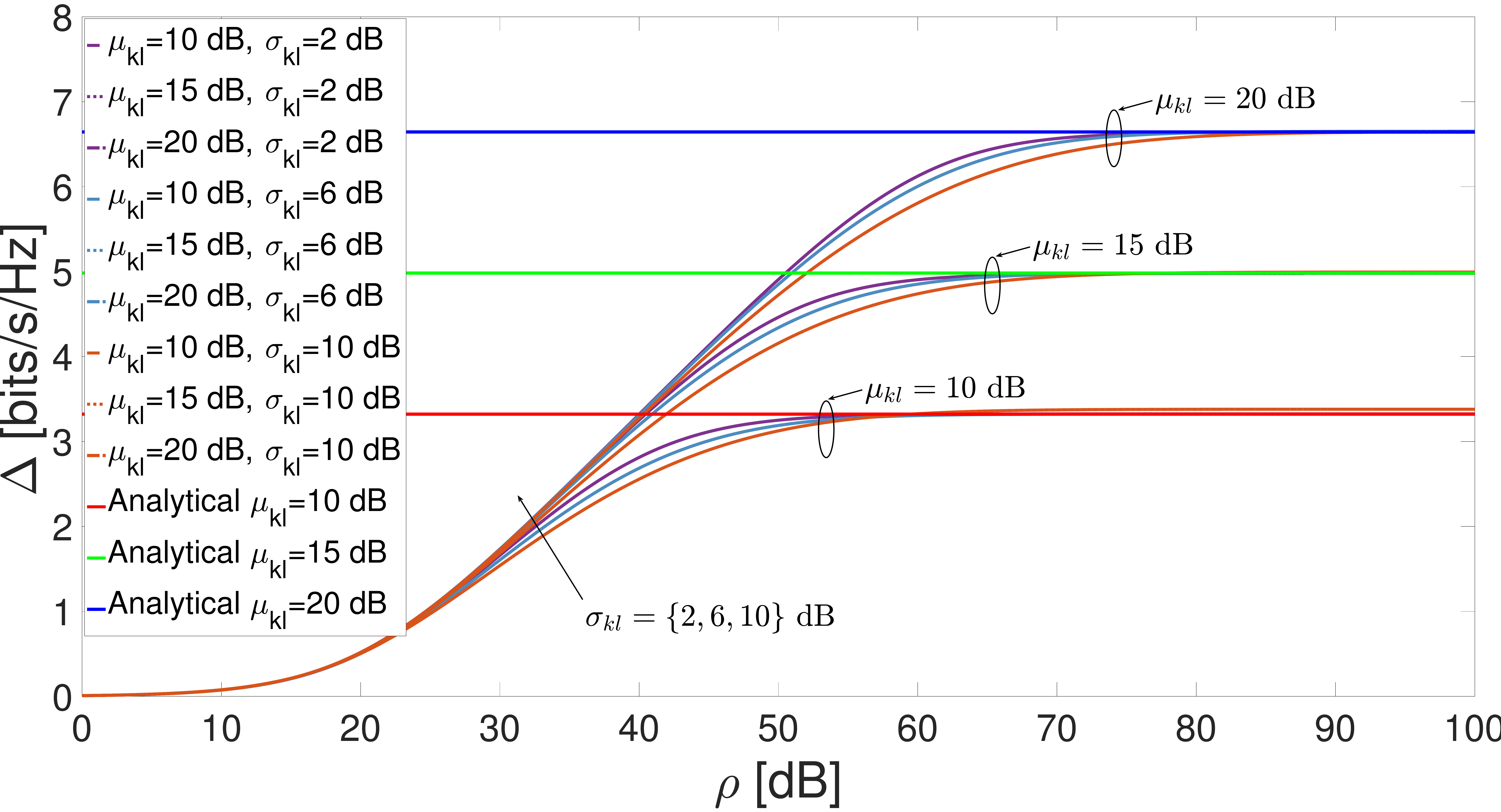}
        \caption{Impact of channel gain disparity on NOMA gain.}
        \label{fig:analy}
\end{figure}

\subsection{Impacts of Channel Gain Disparity and Decoding Order}
 Fig. \ref{fig:analy} compares the analytical findings obtained in Section \ref{sec:SIC} with the simulations where the reference user, UE$_k$,  placed 10 m away from the MBS with a deviation of 2 dB shadow fading. UE$_\ell$ is placed in (100, 300, 1000) m away with (0, 4, 8) dB deviation which results in $\mu_{kl}=\{10,15,20 \}$ dB and $\sigma_{kl}=\{2,6,10 \}$  dB, respectively.  As $\rho=\rfrac{\bar{P}_u}{N_0B}$ reaches up to 100 dB, simulation converges to the upper bound in (\ref{eq:delta_up_asymp2}). Please note that for  $P_u$ and $N_0$ given in Table \ref{tab:parameters}, practical values of $\rho$ ranges from 245 dB to 320 dB for bandwidths ranging from $1$ Hz to $20$ MHz. That is, the analytical upper bound is tight enough for practical values of $\rho$.

\subsection{The Largest Feasible Cluster Size Analysis}
Fig. \ref{fig:CS} shows the maximum feasible cluster size that can be handled by a single RB with respect to different $\bar{\Gamma}$ and $\epsilon$ values, where we do not use ceiling and floor functions in the lemmas for a better comparison. It is obvious that the cluster size increases as $\bar{\Gamma}$ and $\epsilon$ decrease, that is, NOMA can serve more users with low rates as the SIC efficiency improves. As a numerical example, a single RB with $\epsilon=10^{-5}$ can serve $3$ and $4$ UEs each with $0.5$ Mbps and $1$ Mbps, respectively. Fig. \ref{fig:CS} also compares the energy constrained cluster size with the unconstrained cluster size, where the weakest UE is located at the cell-edge. From numerical results, we observe that the cluster size difference between the two cases is negligible for practical channel gain values. Hence, changes in the largest cluster size are primarily affected by changes in cluster bandwidth and/or QoS demands. 

\begin{figure}[t!]
\centering
        \includegraphics[width=0.5\textwidth]{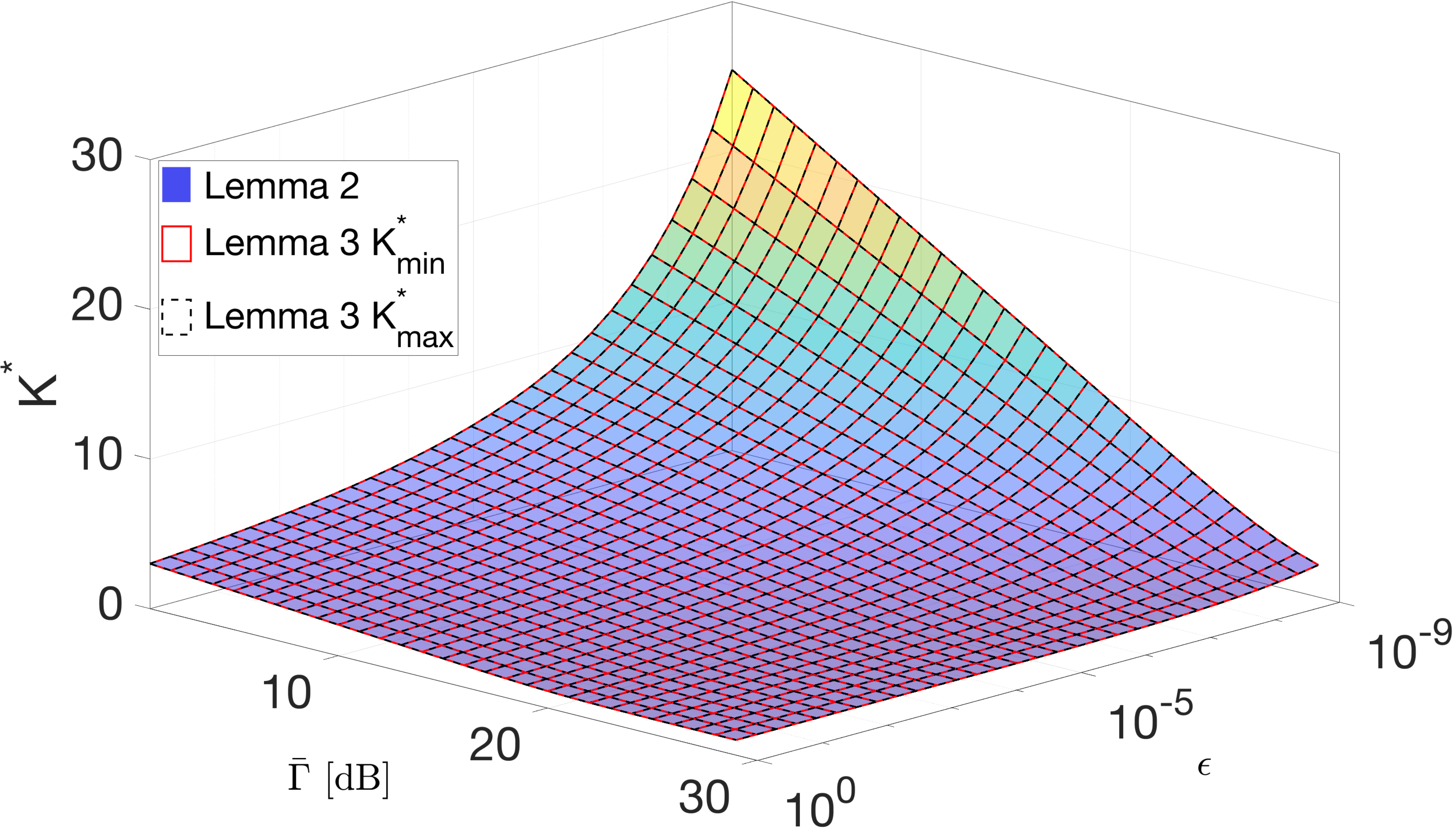}
        \caption{The largest feasible cluster size vs. different FEF and CSCs values.}
        \label{fig:CS}
\end{figure}

\subsection{Spectral and Energy Efficiency}

To investigate the impacts of cluster size on spectral and power efficiency, let us consider a single BS with 12 UEs which can be grouped into $\{6,4,3,2,1\}$ clusters with corresponding sizes of $\{2,3,4,6,12\}$. The normalized values for spectral efficiency, total power consumption, and energy efficiency are shown in Fig. \ref{fig:efficiency} where normalization is done for each curve individually (each has different units) using feature scaling, i.e.,  $x'=\frac{x-x_{\min}}{x_{\max}-x_{\min}}$, where $x'$ is the normalized value, $x$ is the actual value before the normalization, and $x_{\min}$ ($x_{\max}$) is the minimum (maximum) of all values of the curve before the normalization. The available bandwidth is uniformly divided between the cluster, thus, sumrate and spectral efficiency are both illustrated with red colored curve. Notice in Fig. \ref{fig:efficiency} that curves are not comparable to each other as a point in a curve is relative to other points in the same curve. Throughput the paper we plot figures with normalized values for two reasons; to reduce the number of figures by displaying different curves in different units and to provide a clear comparison in a 0-1 scale is intuitive to infer changes in percentage.  
\begin{figure}[!t]
    \centering
        \begin{subfigure}[b]{0.24 \textwidth}
\includegraphics[width=\columnwidth]{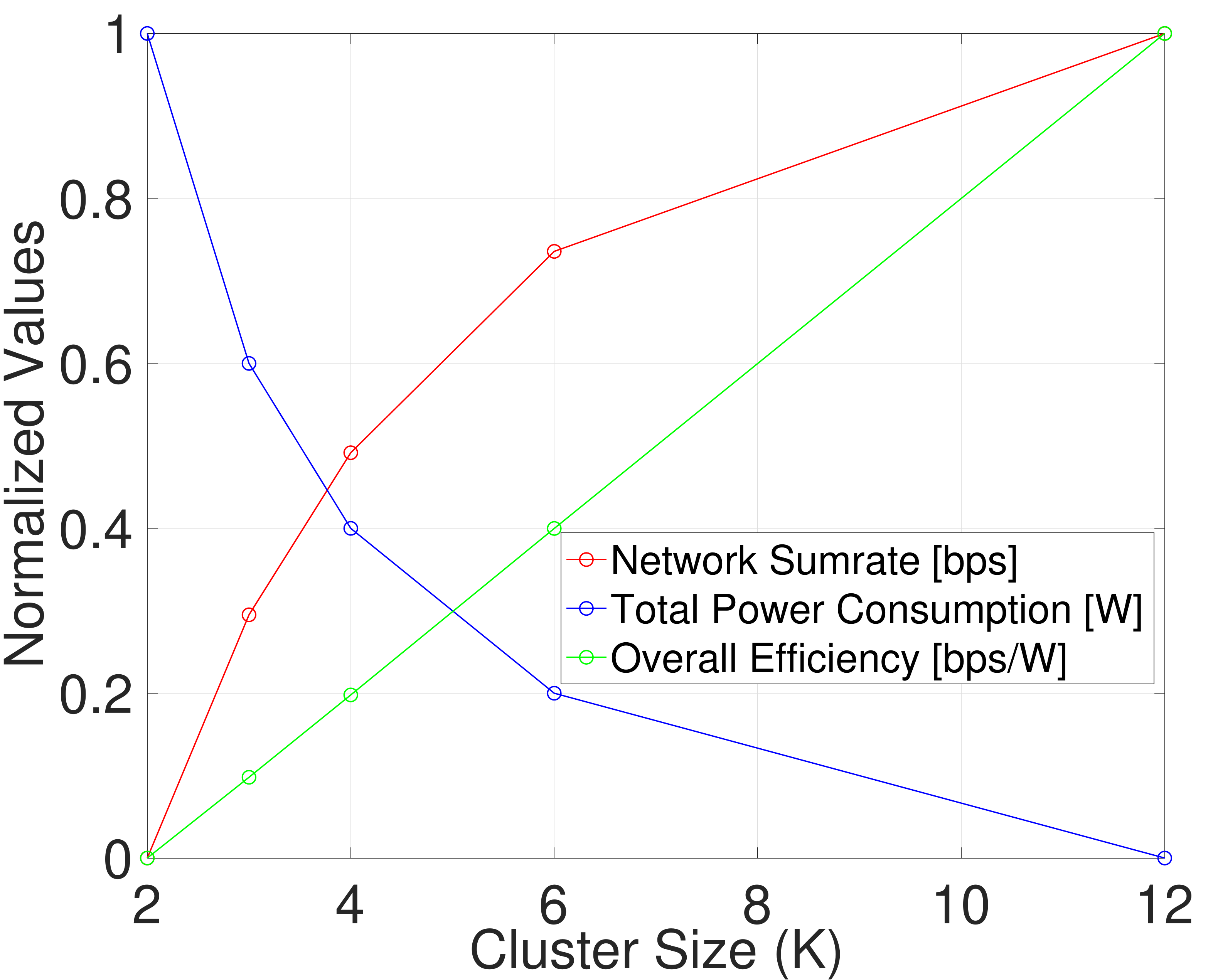}  
        \caption{}
 \label{fig:sum}
    \end{subfigure}
        \begin{subfigure}[b]{0.24\textwidth}
\includegraphics[width=\columnwidth]{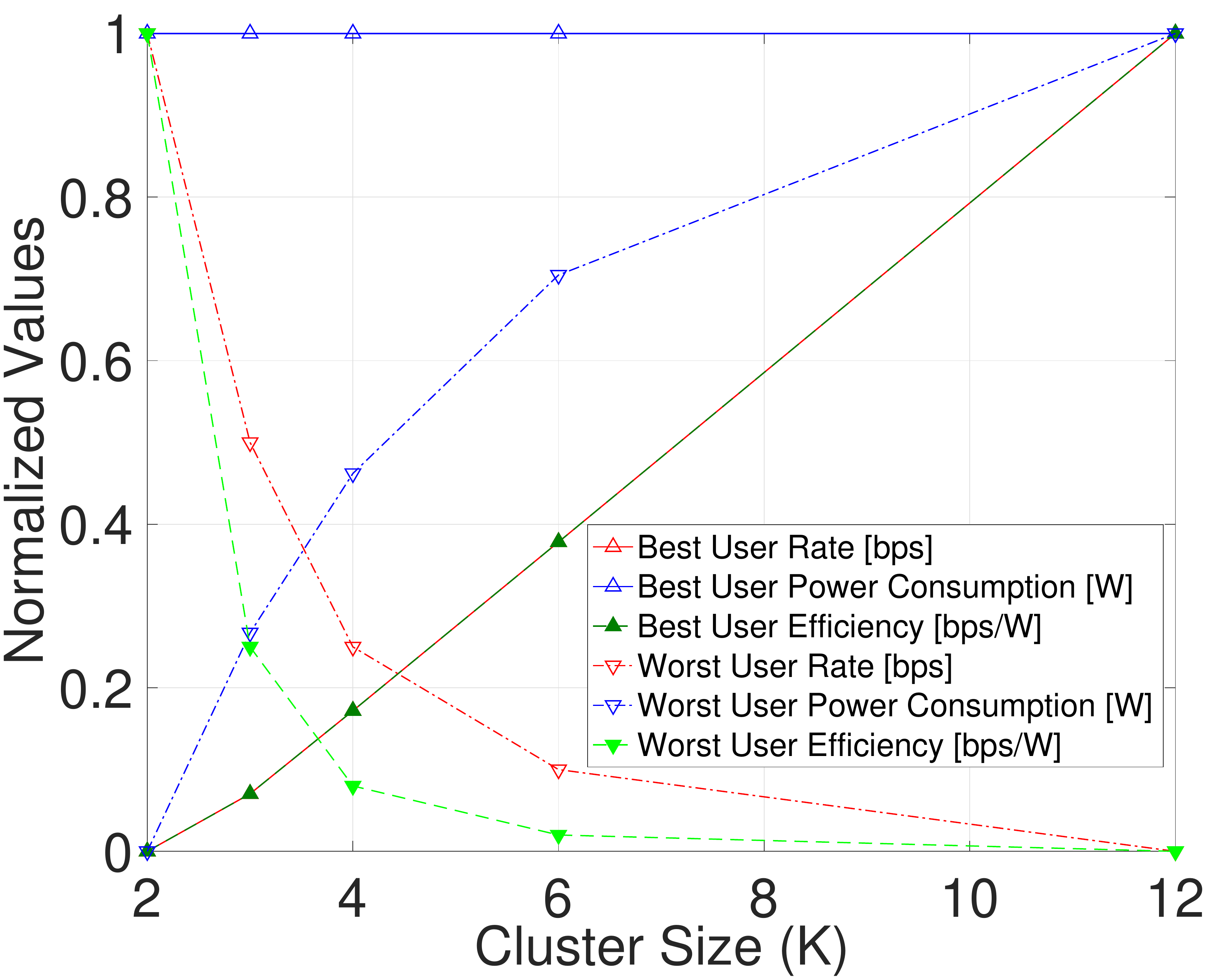}  
        \caption{ }
 \label{fig:worstbest}
    \end{subfigure}
   \caption{Impacts of cluster size on spectrum and energy efficiency.}
        \label{fig:efficiency}
\end{figure}

In Fig. \ref{fig:sum}, increasing the cluster size obviously enhances the overall spectral efficiency while it has a diminishing impact on the total power consumption of the clusters. For example, the number of UEs transmitting at the maximum power (i.e., the best UE) for cluster sizes of 2 and 3 is 6 and 4 (i.e., the number of clusters), respectively. As a result, cluster size 3 requires \%40 less power consumption while providing \%30 more sumrate than the basic NOMA, that yields a higher energy efficiency in units of $bps/W$. Let us now focus on Fig. \ref{fig:worstbest} where we depict the individual performance metrics of the best and worst UEs. While the best UEs keep transmitting at the same power level, their rates and efficiency increase with the cluster size since the bandwidth increases with decreasing number of clusters. However, this behavior follows an opposite direction for the worst UE case. Although the proposed solution allocates powers and bandwidths by taking the QoS constraints of all UEs into consideration, decreasing trend of the worst UE's data rate may cause coverage issues for large cluster sizes in large cells. In particular, providing the demanded QoS for cell-edge macro cell users may hinders the service coverage. At this point, decoupling the DL and UL user association can help in a great extend, which is already investigated and explained in Fig. \ref{fig:beta}. It is important to shed lights on the tradeoff between the worst and best case user performances. The worst case performance can be enhanced by setting a higher QoS requirement which naturally decreases the achievable rate of the best UE and thus the cluster sumrate. Nonetheless, this could yield a lower best case UE spectral efficiency than that is achievable by OMA, i.e., individual operation of the best case UE. Therefore, a good compromise must be forged to incentivize both UEs to enhance overall spectral efficiency of the network by using NOMA. To this end, cellular network operators can settle certain marketing policies to outline the rules for QoS setting which is satisfactory for both UEs.

\subsection{Impacts of Network Parameters on the NOMA Performance}

%%%%%%
% FEF eps  %
%%%%%%
\begin{figure}[!t]
    \centering
        \begin{subfigure}[b]{0.24 \textwidth}
\includegraphics[width=\columnwidth]{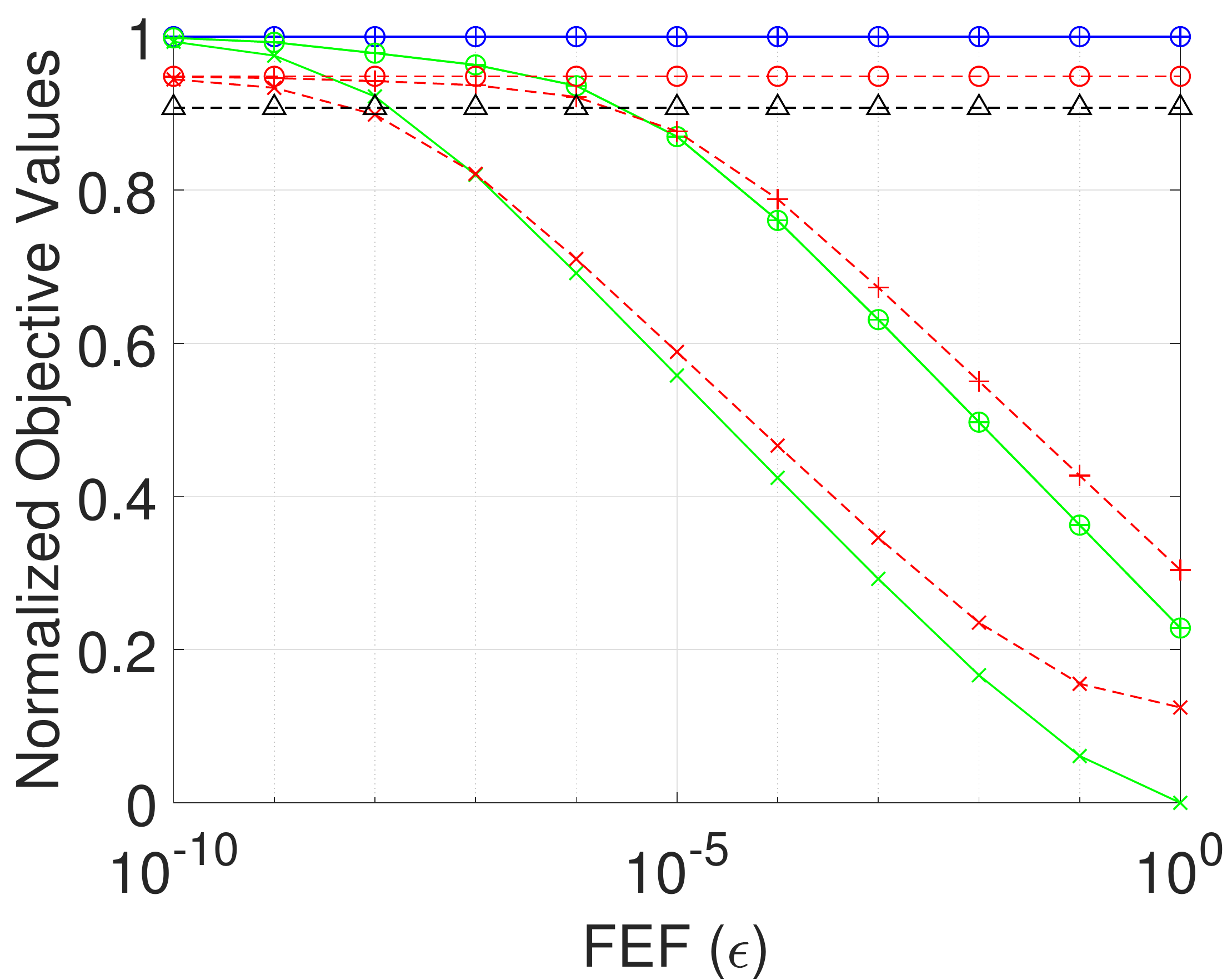}  
        \caption{$\alpha=0$ (Max. Throughput)}
 \label{fig:fef_0}
    \end{subfigure}
        \begin{subfigure}[b]{0.24\textwidth}
\includegraphics[width=\columnwidth]{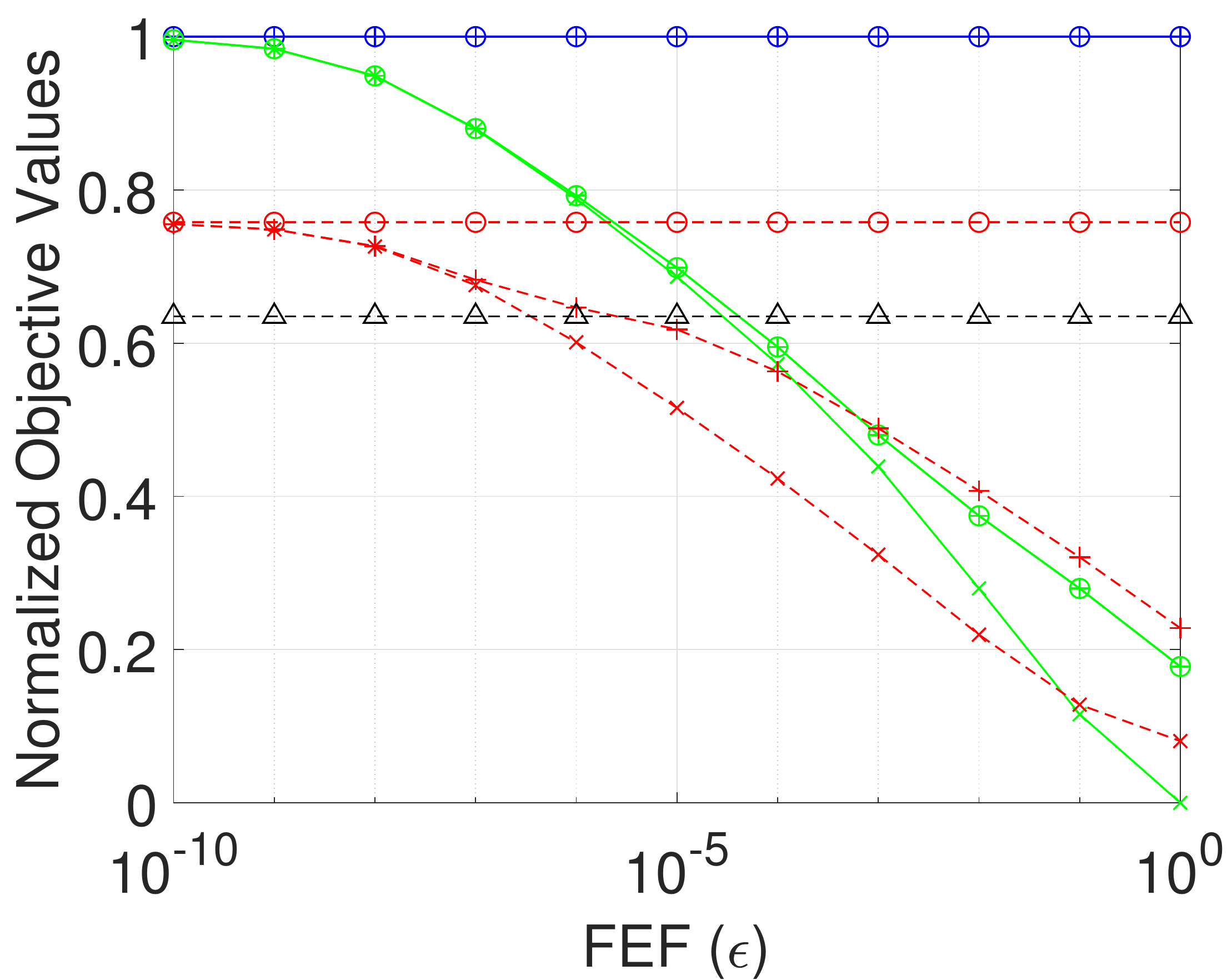}  
        \caption{$\alpha=0.25$}
 \label{fig:fef_25}
    \end{subfigure}
            \begin{subfigure}[b]{0.24 \textwidth}
\includegraphics[width=\columnwidth]{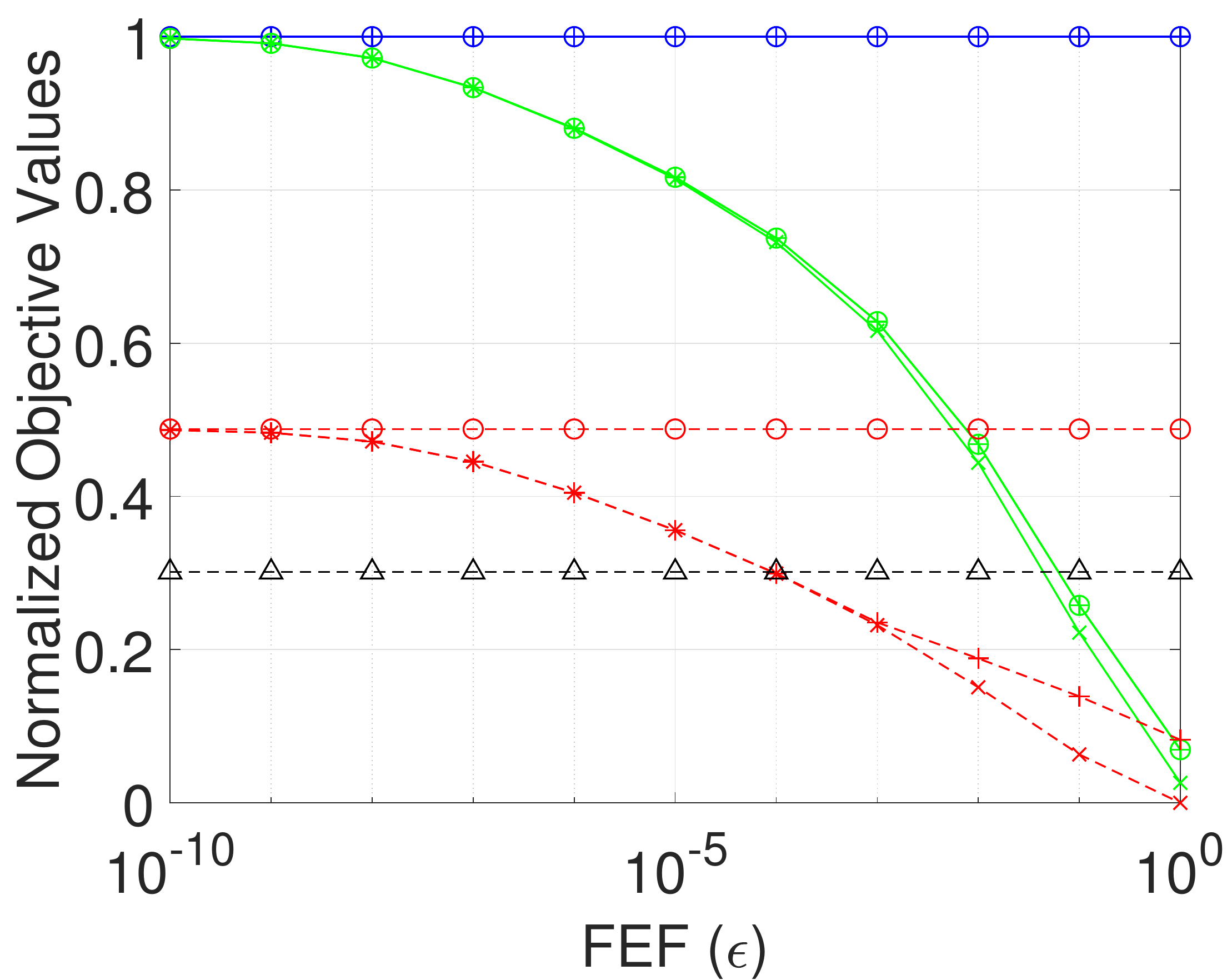}  
        \caption{$\alpha=0.5$}
 \label{fig:fef_50}
     \end{subfigure}
         \begin{subfigure}[b]{0.24 \textwidth}
\includegraphics[width=\columnwidth]{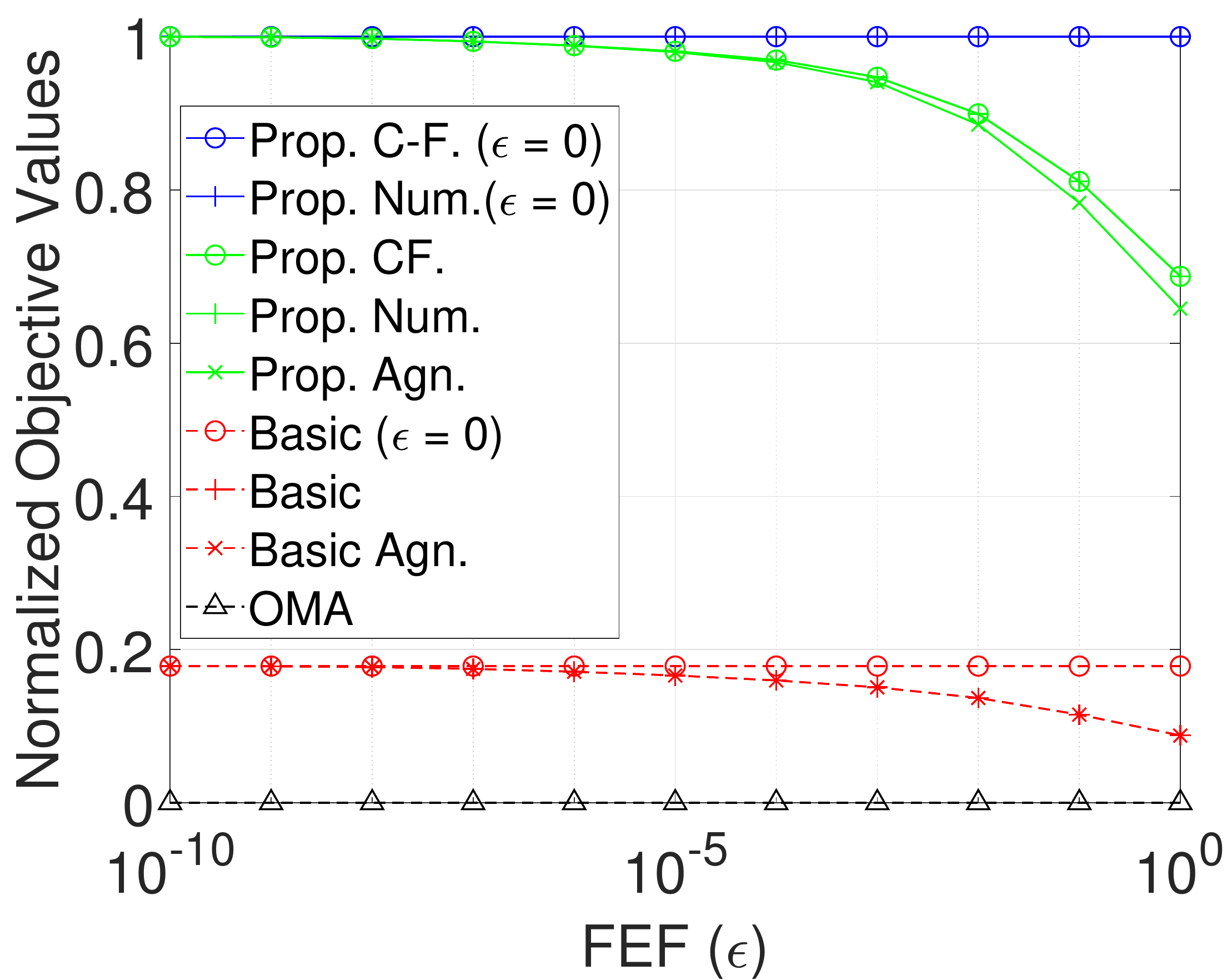}  
        \caption{$\alpha \to 1$ (Prop. Fairness)}
 \label{fig:fef_1}
            \end{subfigure}
    \caption{Normalized network sum-rate vs. FEF levels $\epsilon$.}
    \vspace{-0.5cm}
          \label{fig:fef}    
\end{figure}

For the sake of a better comparison, let us consider the following cases: 1) \textit{Prop. CF ($\epsilon=0$)}: Proposed CF algorithm under perfect NOMA scheme which is obtained by the closed-form expression given in Lemma \ref{lem:cfp} and drawn by blue colored {\color{blue}$-\hspace{-4pt}\ominus \hspace{-4pt}-$}, 2) \textit{Prop. Num. ($\epsilon=0$)}: This case is to check the validity of the previous case and drawn by blue colored {\color{blue}$- \hspace{-7pt} + \hspace{-7pt}-$}, 3) \textit{Prop. CF}: Proposed CF algorithm under imperfect NOMA scheme which is obtained by the closed-form expression given in Lemma \ref{lem:cfp} and drawn by green colored {\color{green}$-\hspace{-4pt}\ominus \hspace{-4pt}-$}, 4) \textit{Prop. Num.}: This case is to check the validity of the previous case and drawn by green colored {\color{green}$- \hspace{-7pt} + \hspace{-7pt}-$}, 5) \textit{Prop. Agn. }: The agnostic case is used to show the consequences of treating an imperfect NOMA as perfect by falsely assuming $\epsilon=0$ and  drawn by green colored {\color{green}$- \hspace{-7pt} \times \hspace{-7pt}-$}, 6)  \textit{Basic ($\epsilon=0$)}/\textit{Basic}/\textit{Basic Agn.}: This case compares the basic NOMA cluster of size two with the proposed case in 3/4/5 and drawn by red colored {\color{red}$-\hspace{-2pt}\circleddash \hspace{-12.5pt}- \hspace{2pt} -$}/{\color{red}$- \hspace{-2pt} + \hspace{-2pt}-$}/{\color{red}$- \hspace{-1pt} \times \hspace{-12pt}- -$}, and 7) \textit{OMA} corresponds to traditional OMA scheme where entire bandwidth is equally shared among the users and drawn by black colored $- \cdot \hspace{-9.5pt} \bigtriangleup \hspace{-2pt} -$. 

We demonstrate normalized network performance with respect to different network parameters in Fig. \ref{fig:fef} - Fig. \ref{fig:u} where normalized objective value is obtained via feature scaling, i.e., $x'=\frac{x-x_{\min}}{x_{\max}-x_{\min}}$, where $x'$ is the normalized value, $x$ is the value before the normalization, and $x_{\min}$ ($x_{\max}$) is the minimum (maximum) of all values within the figure before the normalization \footnote{
{Although Fig. \ref{fig:fef} - Fig. \ref{fig:u} show the network sumrate as a product of the optimized bandwidth and spectral efficiency, readers can also have an insight into the spectral efficiency trends under different network settings. }
}
It is common for all subfigures that the proposed solution provides a superior performance in comparison with the traditional basic NOMA and OMA schemes in all cases. This is mainly because of allowing a large number of cluster size, which enhances the spectral efficiency of the network.  On the other hand, the basic NOMA scheme delivers a performance in between the proposed solution and OMA scheme. Noting that obtained closed-form expressions perfectly match with numerical solutions, the agnostic approach deteriorates the network performance, which goes even below the OMA scheme in certain cases, especially in the maximum throughput case.

%%%%%%
% K Size %
%%%%%%
\begin{figure}[!t]
    \centering
        \begin{subfigure}[b]{0.24 \textwidth}
\includegraphics[width=\columnwidth]{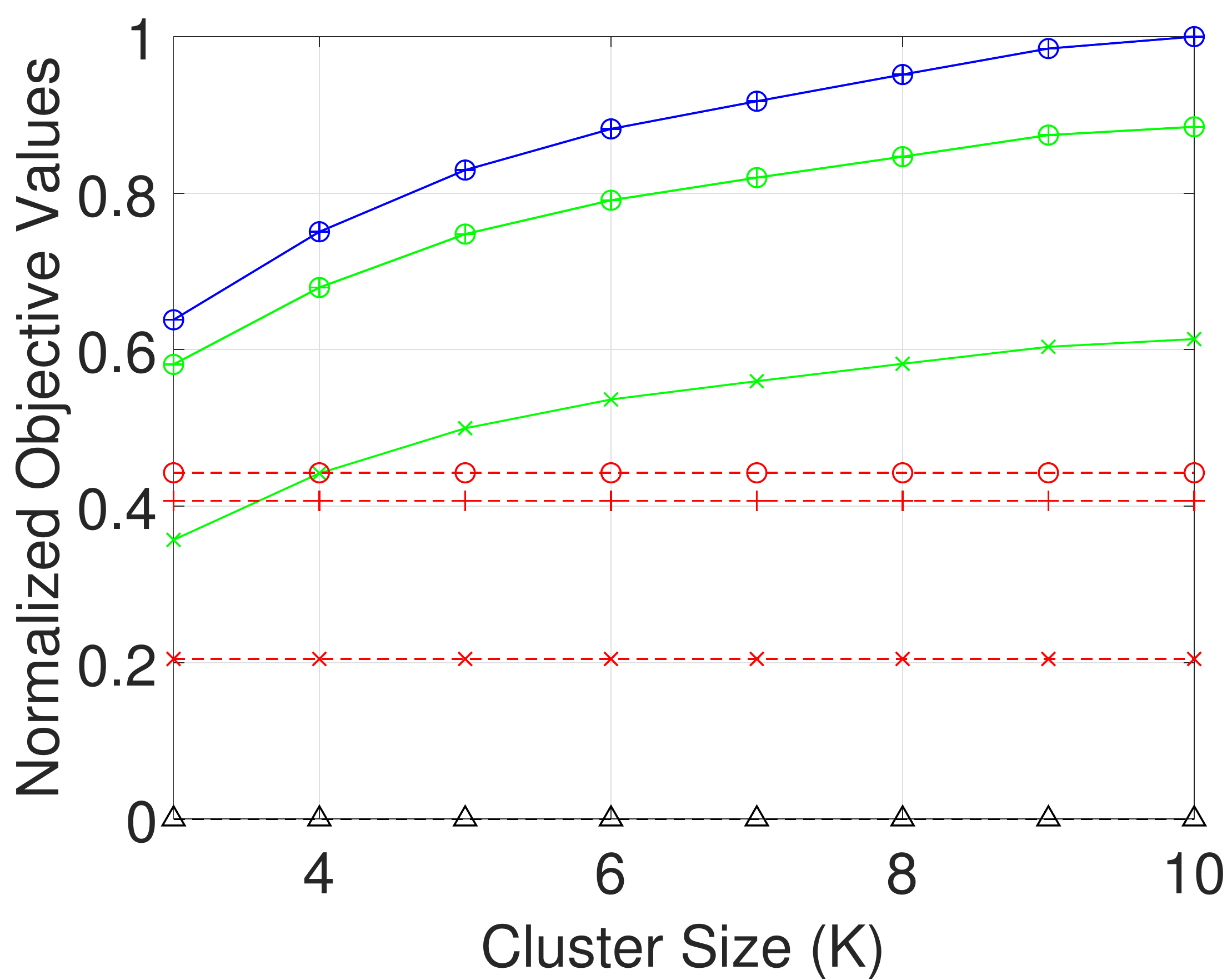}  
        \caption{$\alpha=0$ (Max. Throughput)}
 \label{fig:k_0}
    \end{subfigure}
        \begin{subfigure}[b]{0.24 \textwidth}
\includegraphics[width=\columnwidth]{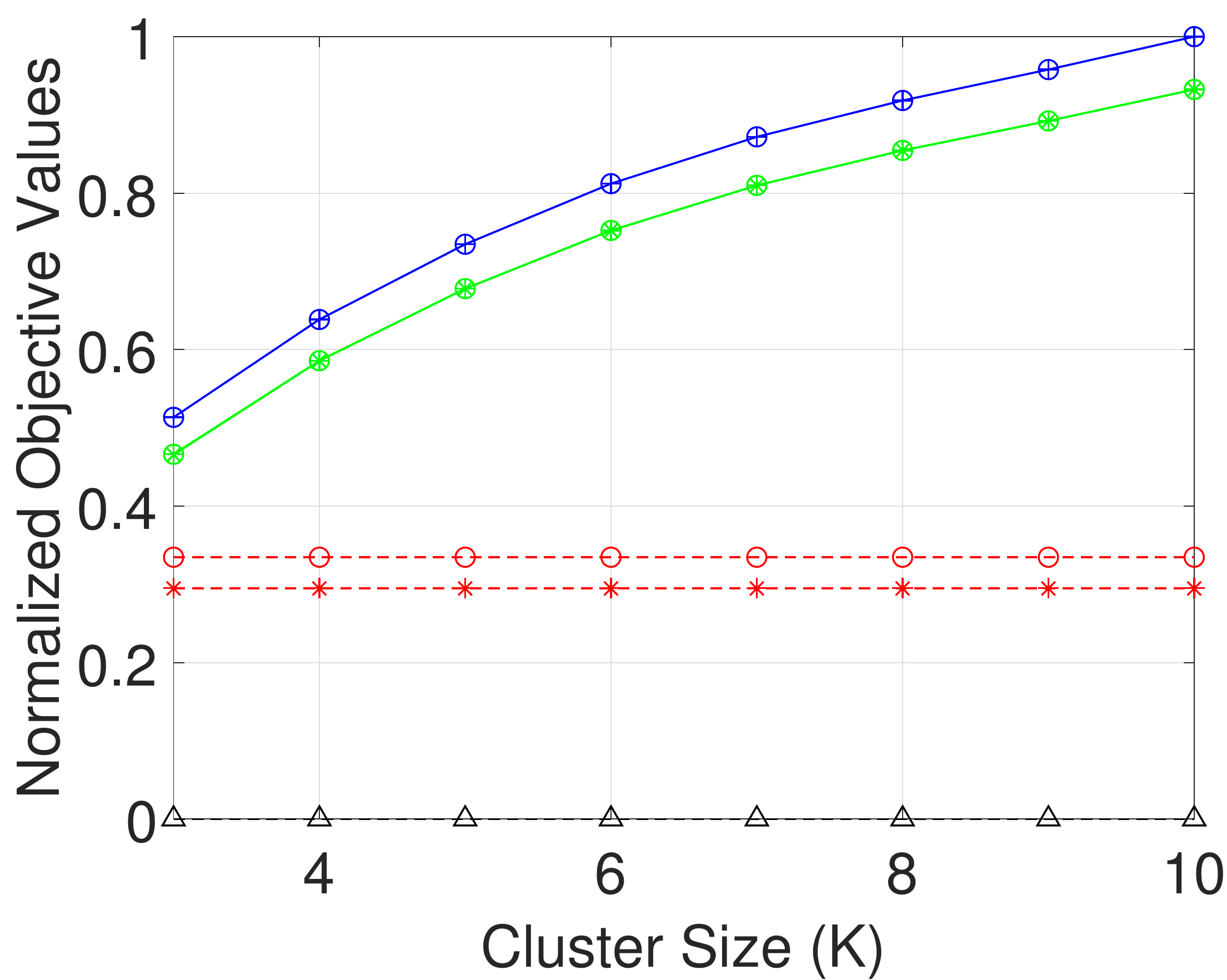}  
        \caption{$\alpha=0.25$}
 \label{fig:k_25}
    \end{subfigure}
            \begin{subfigure}[b]{0.24 \textwidth}
\includegraphics[width=\columnwidth]{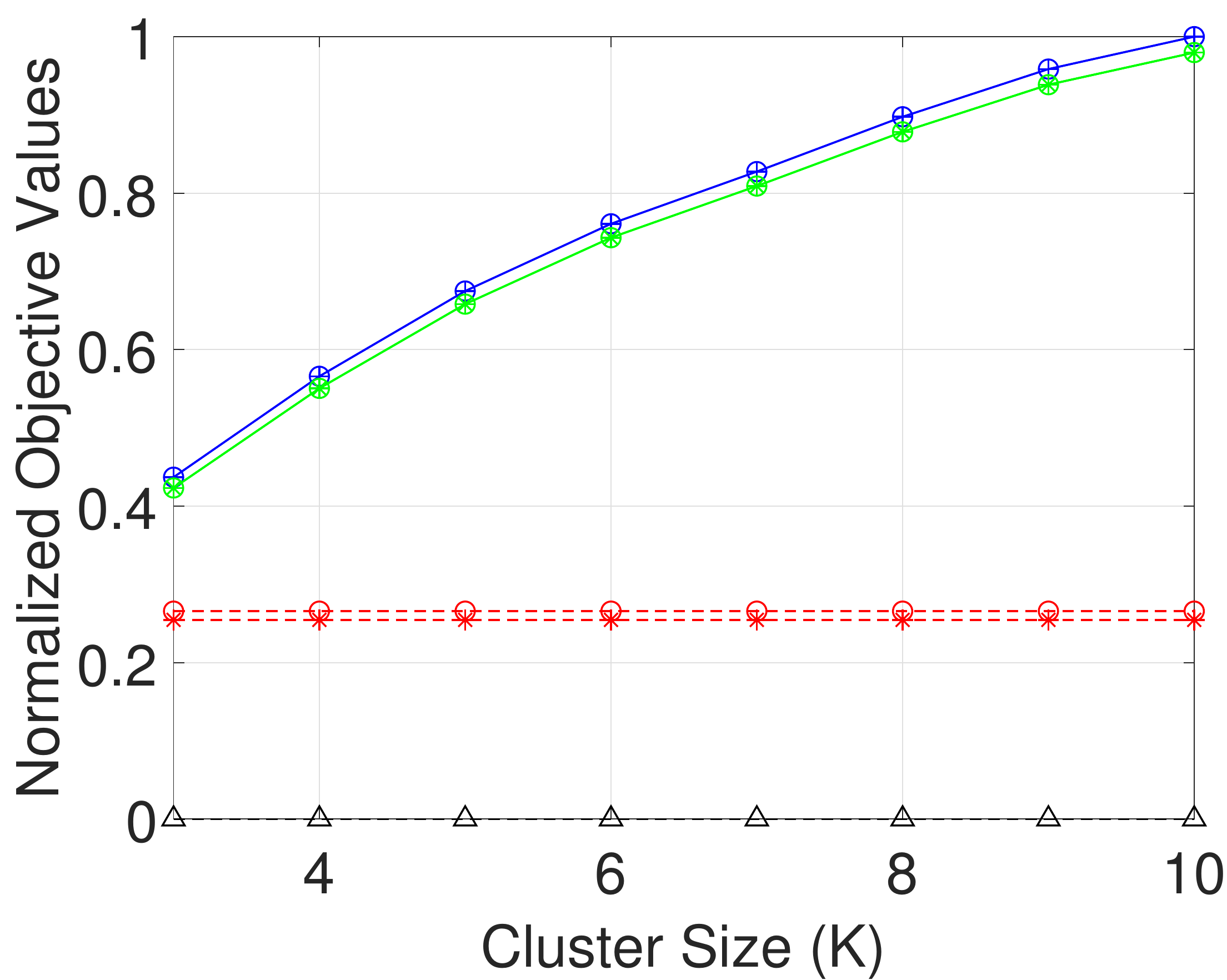}  
        \caption{$\alpha=0.5$}
 \label{fig:k_50}
     \end{subfigure}
         \begin{subfigure}[b]{0.24 \textwidth}
\includegraphics[width=\columnwidth]{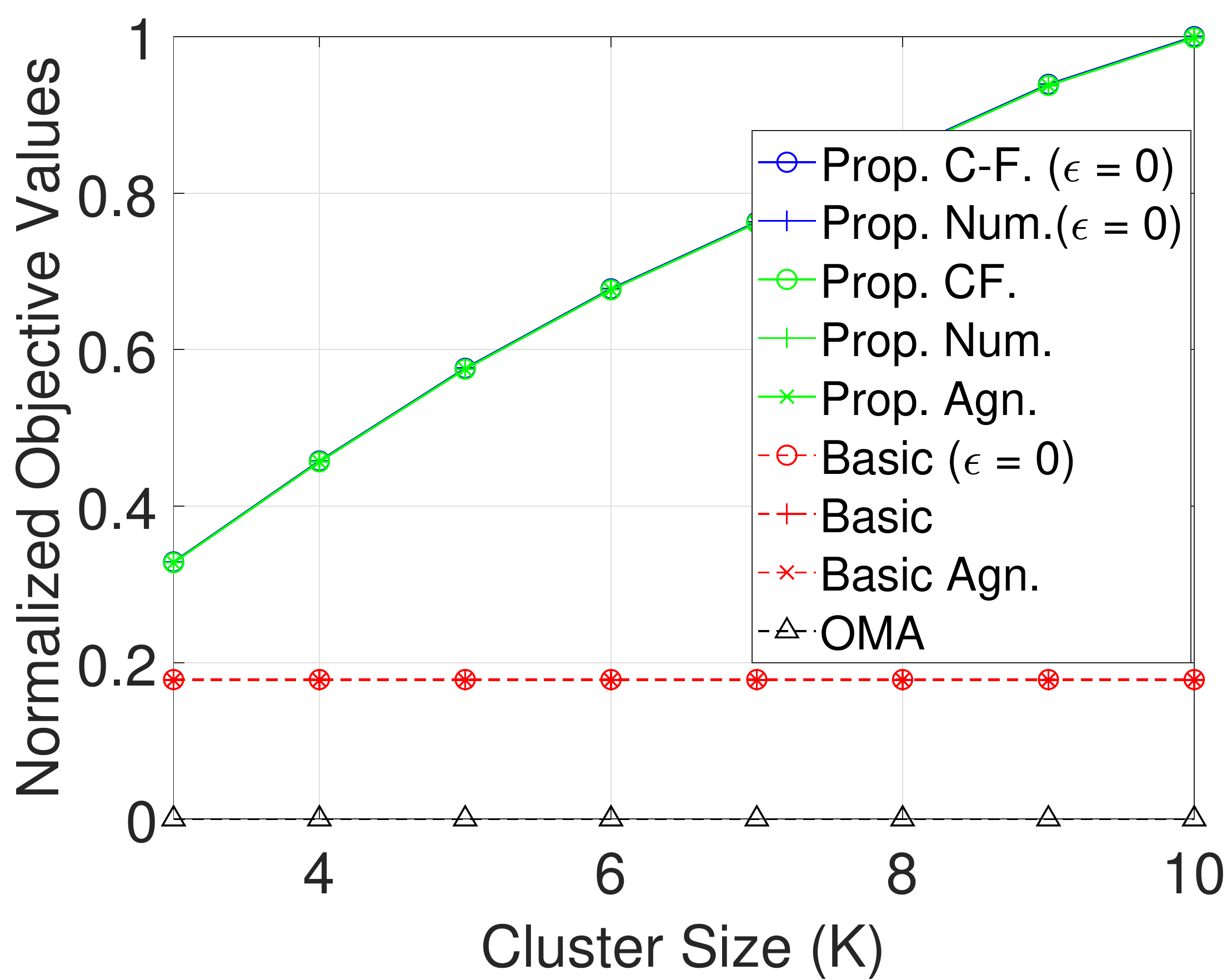}  
        \caption{$\alpha \to 1$ (Prop. Fairness)}
 \label{fig:k_1}
            \end{subfigure}
    \caption{Normalized network sum-rate vs. affordable cluster size $\bar{K}$.}
          \label{fig:k}    
\end{figure}
%%%%%%
% BETA %
%%%%%%
\begin{figure}[!t]
    \centering
        \begin{subfigure}[b]{0.24 \textwidth}
\includegraphics[width=\columnwidth]{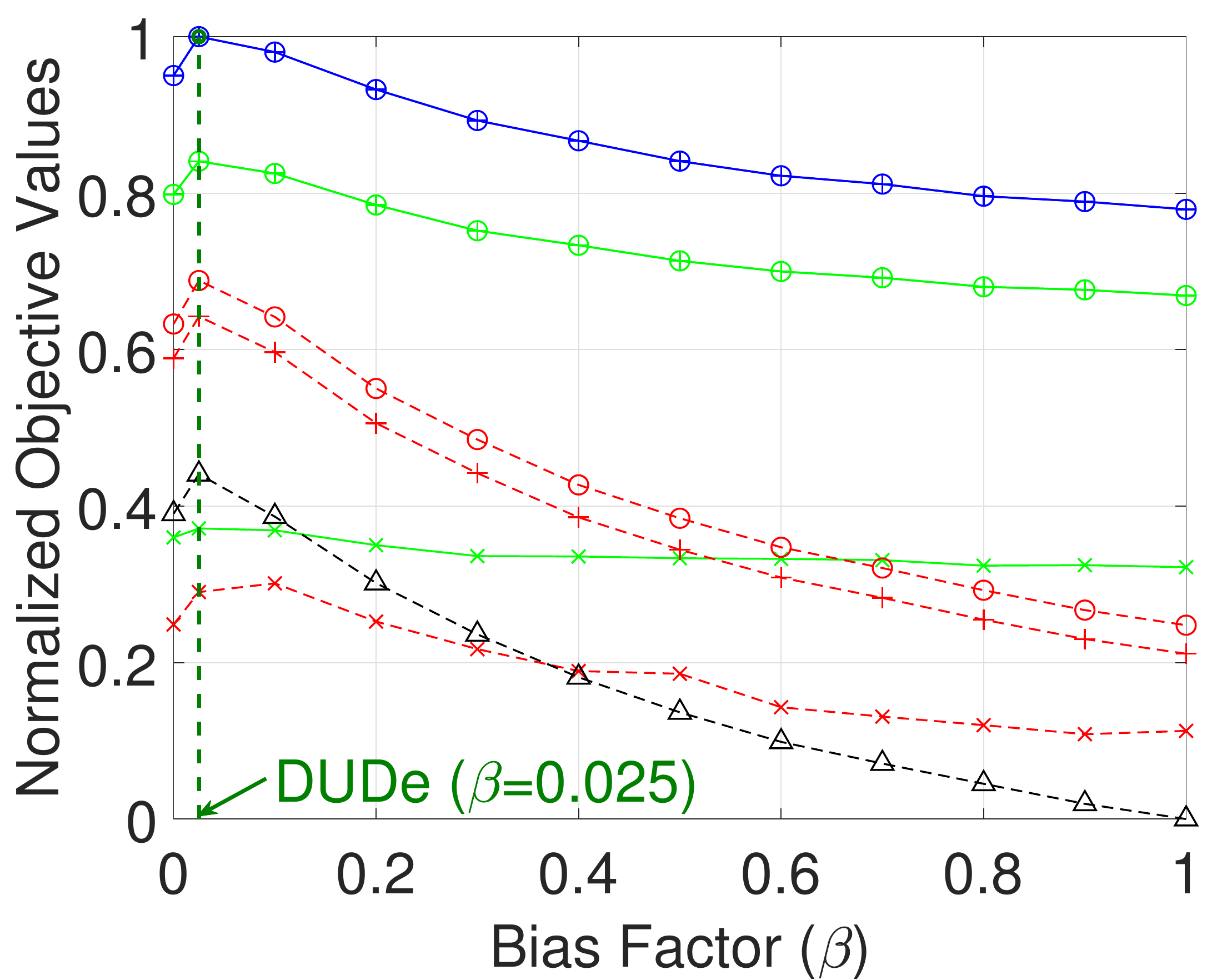}  
        \caption{$\alpha=0$ (Max. Throughput)}
 \label{fig:beta_0}
    \end{subfigure}
        \begin{subfigure}[b]{0.24 \textwidth}
\includegraphics[width=\columnwidth]{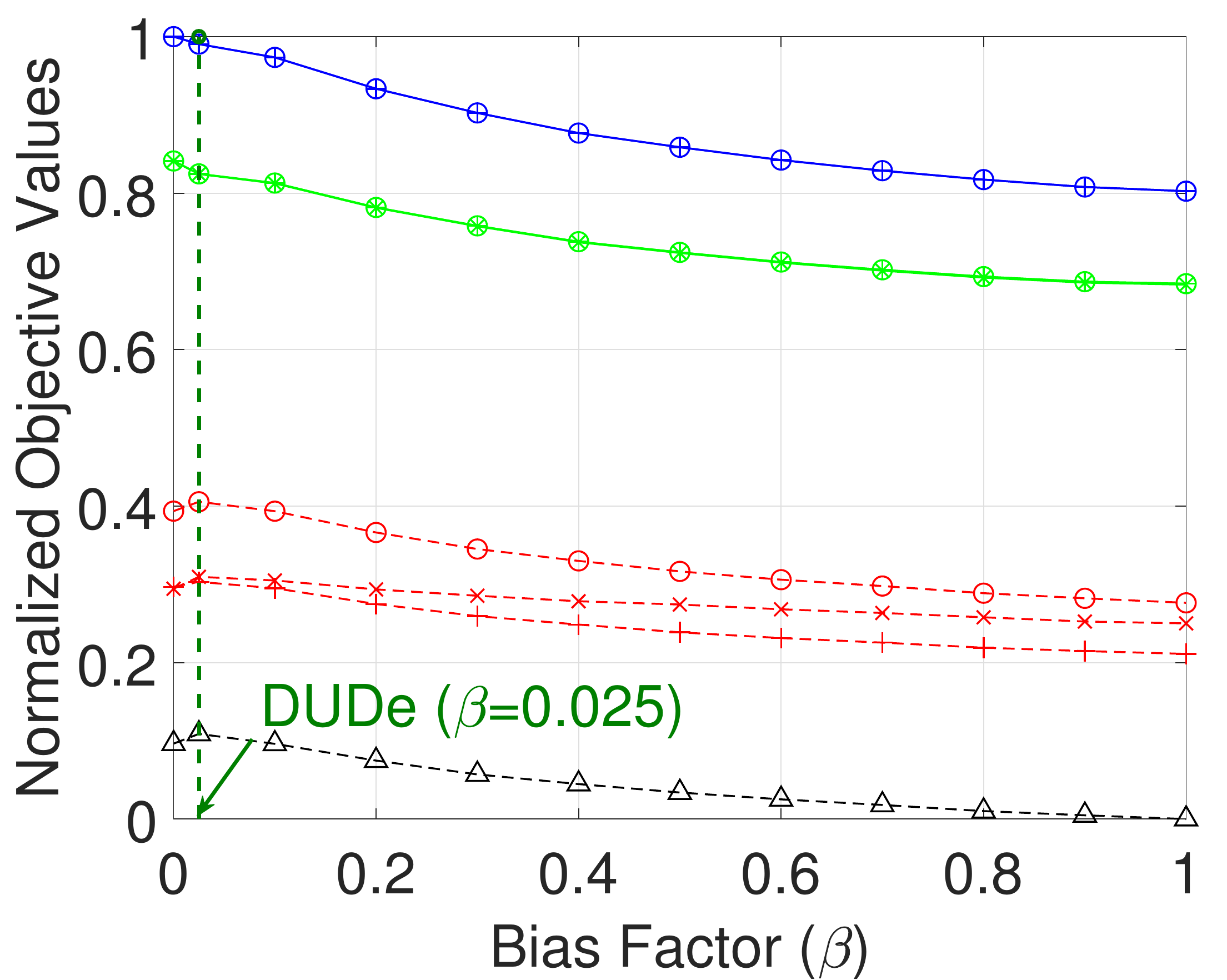}  
        \caption{$\alpha=0.25$}
 \label{fig:beta_25}
    \end{subfigure}
            \begin{subfigure}[b]{0.24 \textwidth}
\includegraphics[width=\columnwidth]{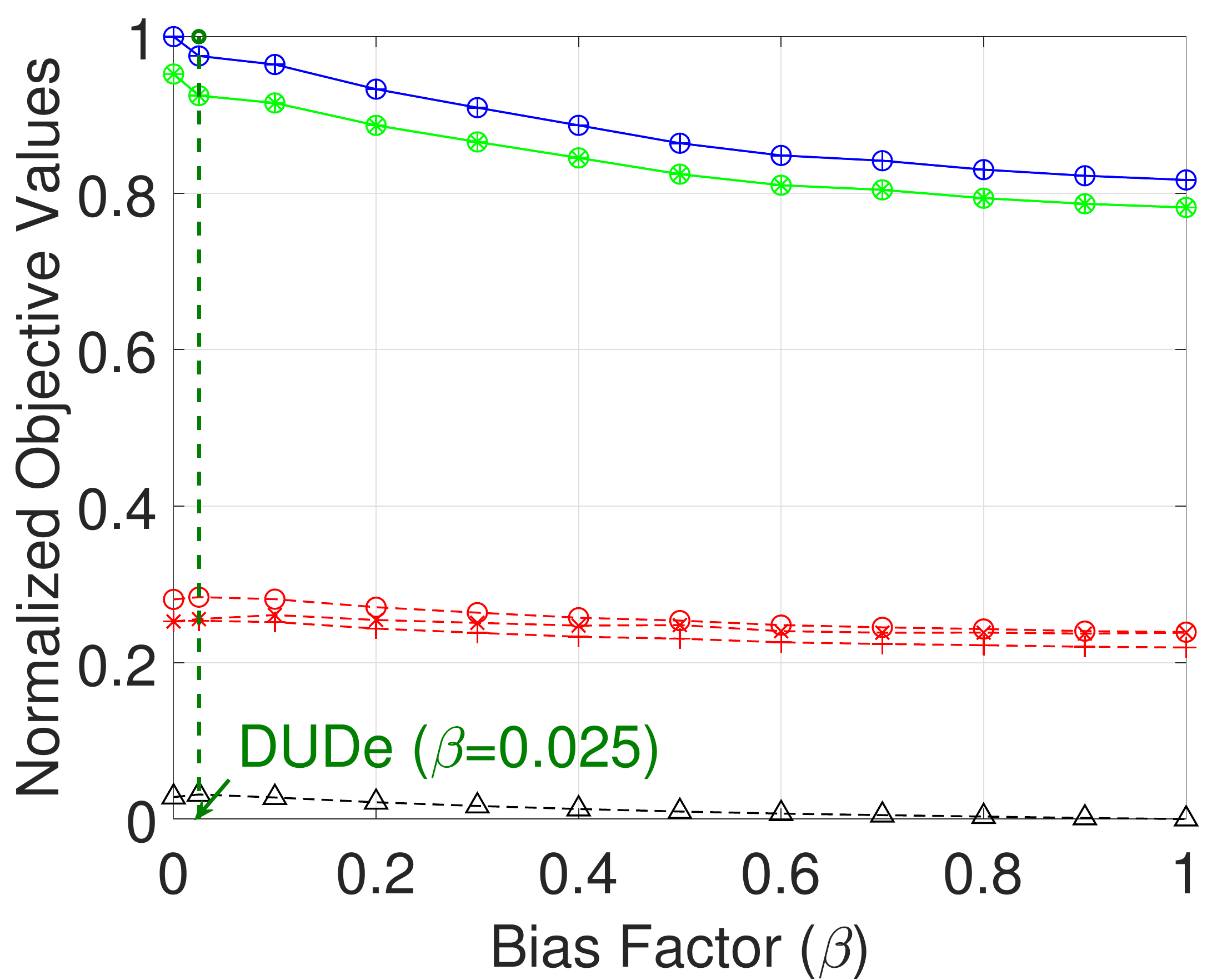}  
        \caption{$\alpha=0.5$}
 \label{fig:beta_50}
     \end{subfigure}
         \begin{subfigure}[b]{0.24 \textwidth}
\includegraphics[width=\columnwidth]{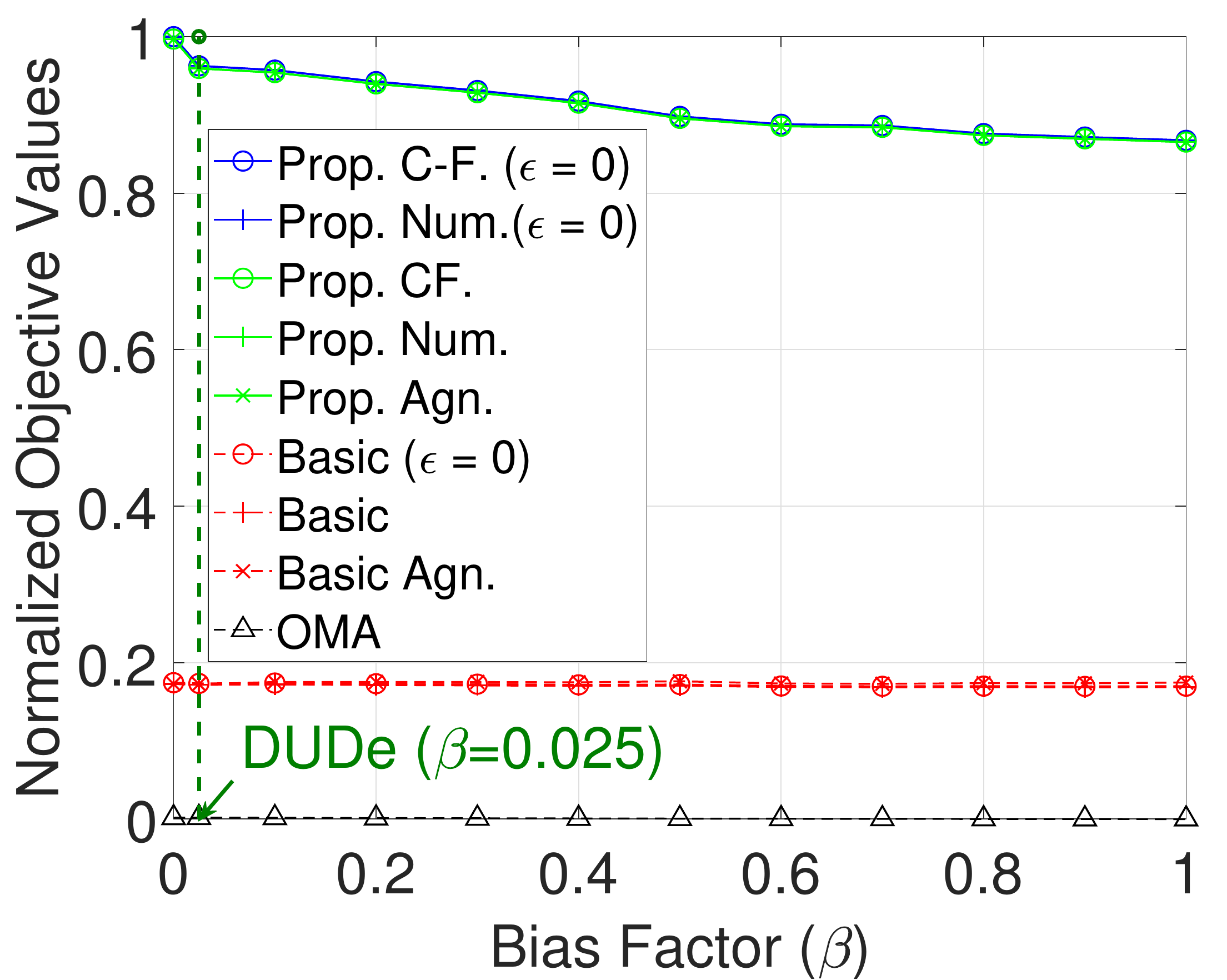}  
        \caption{$\alpha \to 1$ (Prop. Fairness)}
 \label{fig:beta_1}
            \end{subfigure}
    \caption{Normalized network sum-rate vs. Bias factor $\beta$}
          \label{fig:beta}    
\end{figure}
%%%%%%
% SBSs %
%%%%%%
\begin{figure}[!t]
    \centering
        \begin{subfigure}[b]{0.24 \textwidth}
\includegraphics[width=\columnwidth]{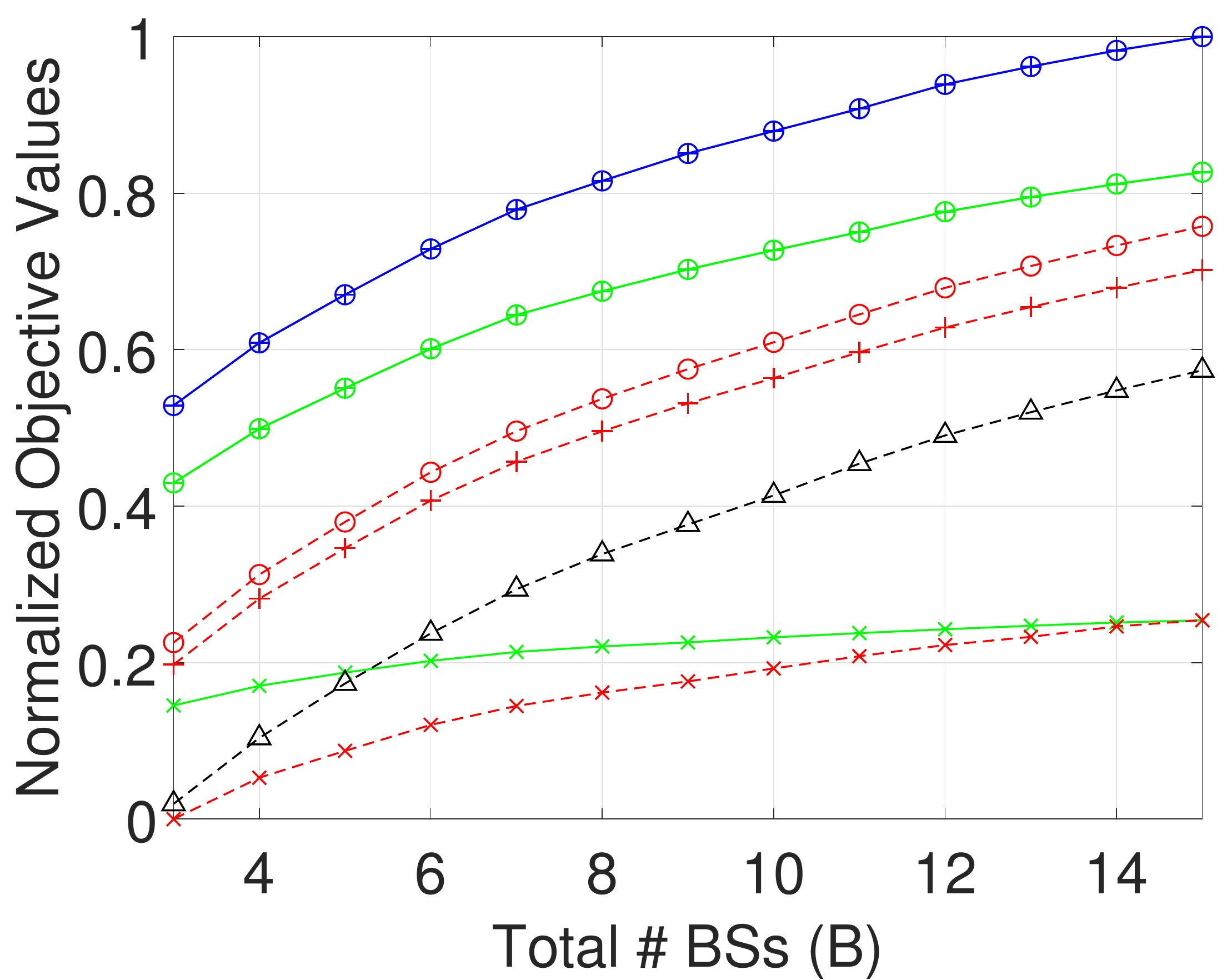}  
        \caption{$\alpha=0$ (Max. Throughput)}
 \label{fig:b_0}
    \end{subfigure}
        \begin{subfigure}[b]{0.24 \textwidth}
\includegraphics[width=\columnwidth]{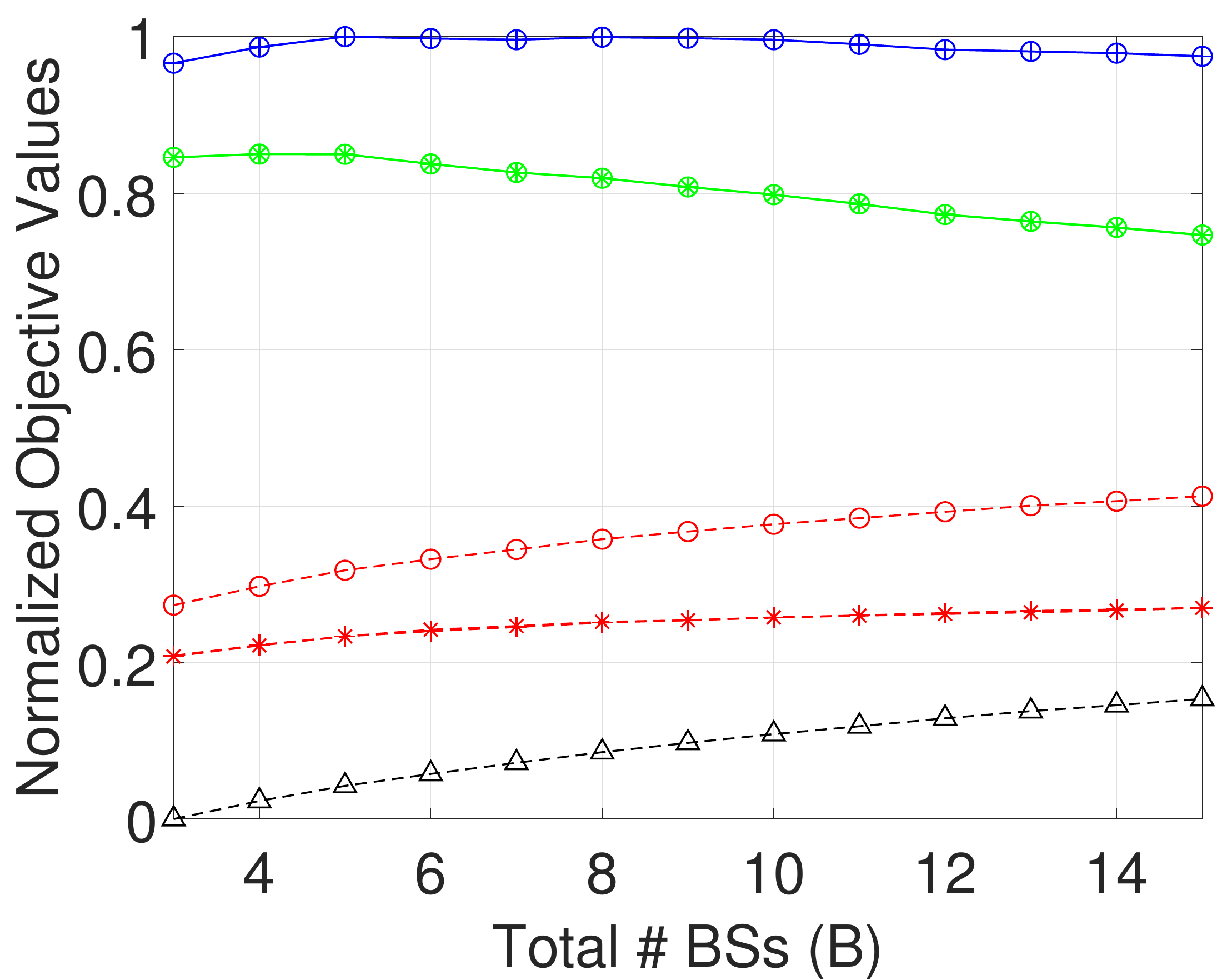}  
        \caption{$\alpha=0.25$}
 \label{fig:b_25}
    \end{subfigure}
            \begin{subfigure}[b]{0.24 \textwidth}
\includegraphics[width=\columnwidth]{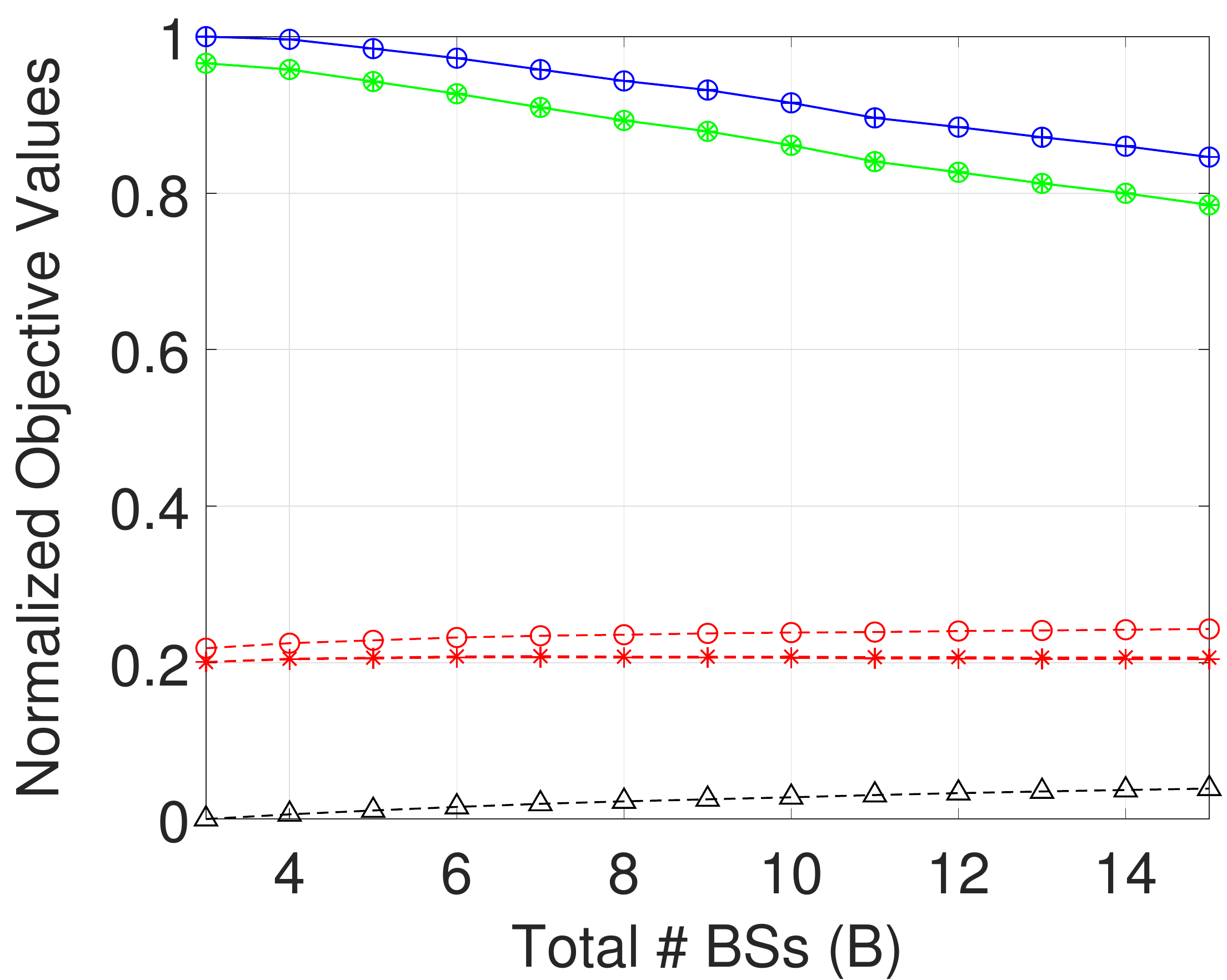}  
        \caption{$\alpha=0.5$}
 \label{fig:b_50}
     \end{subfigure}
         \begin{subfigure}[b]{0.24 \textwidth}
\includegraphics[width=\columnwidth]{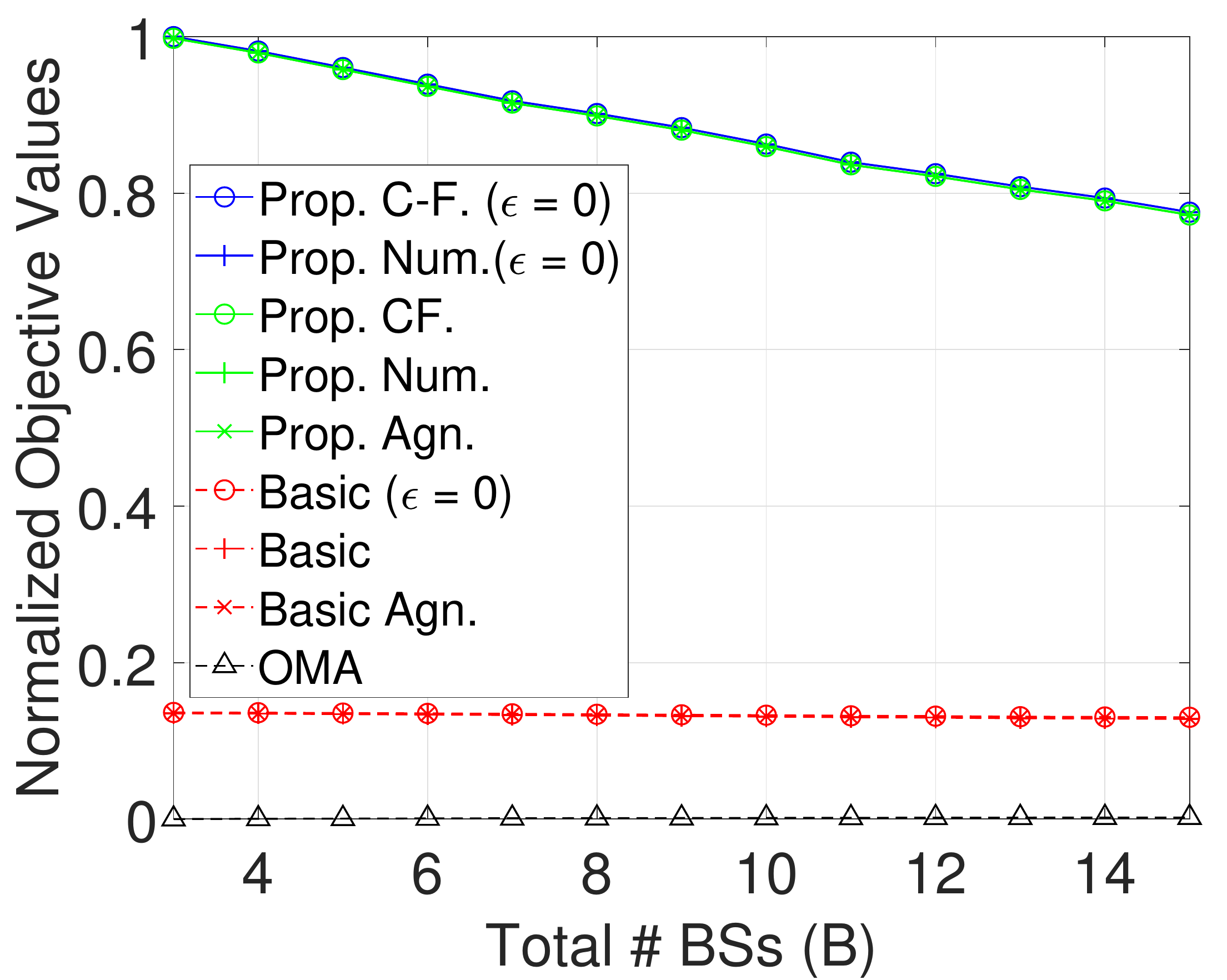}  
        \caption{$\alpha \to 1$ (Prop. Fairness)}
 \label{fig:b_1}
            \end{subfigure}
    \caption{Normalized network sum-rate vs. total number of SBSs $B$}
          \label{fig:b}    
\end{figure}
%%%%%%
% USERS %
%%%%%%
\begin{figure}[!t]
    \centering
        \begin{subfigure}[b]{0.24 \textwidth}
\includegraphics[width=\columnwidth]{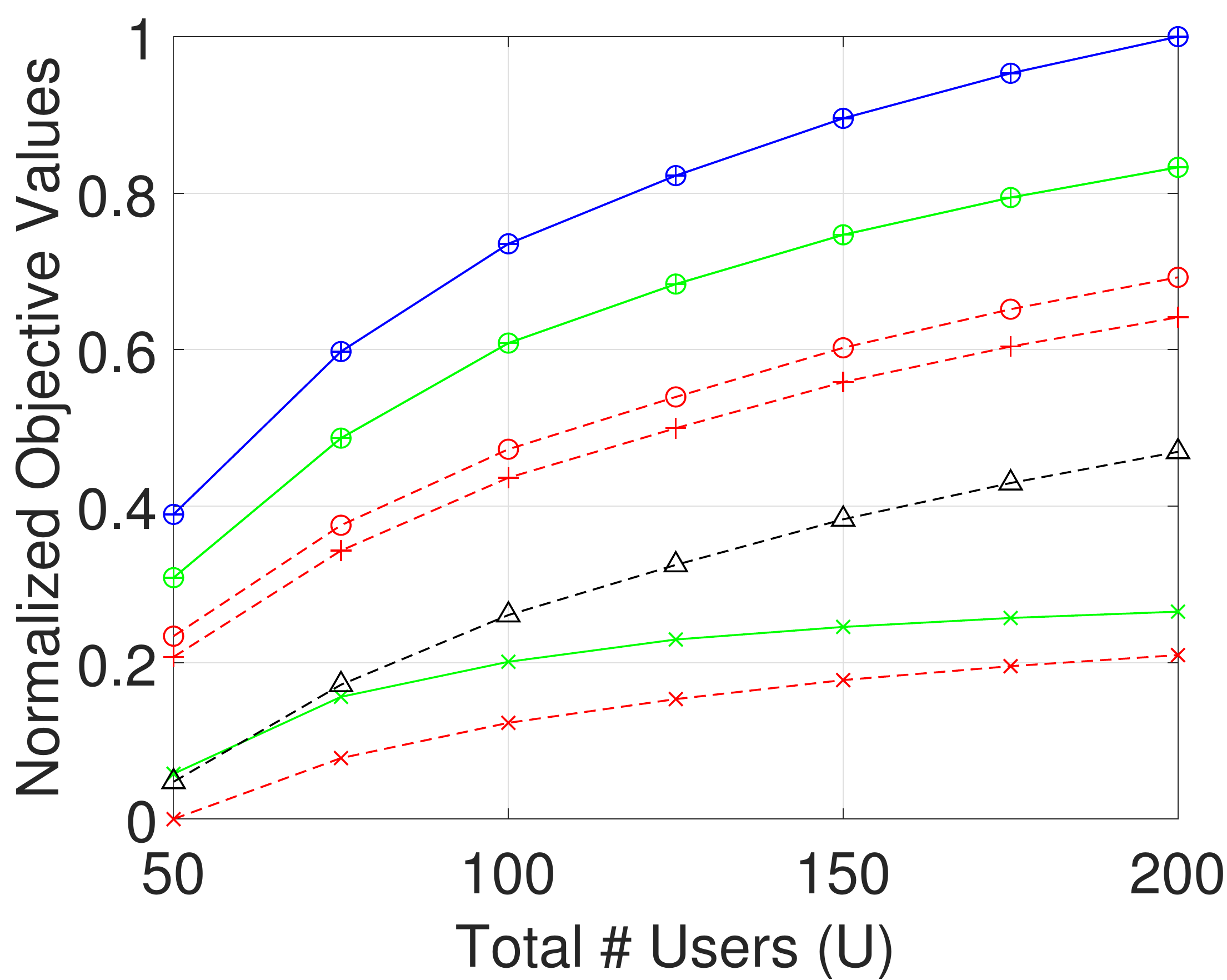}  
        \caption{$\alpha=0$ (Max. Throughput)}
 \label{fig:u_0}
    \end{subfigure}
        \begin{subfigure}[b]{0.24 \textwidth}
\includegraphics[width=\columnwidth]{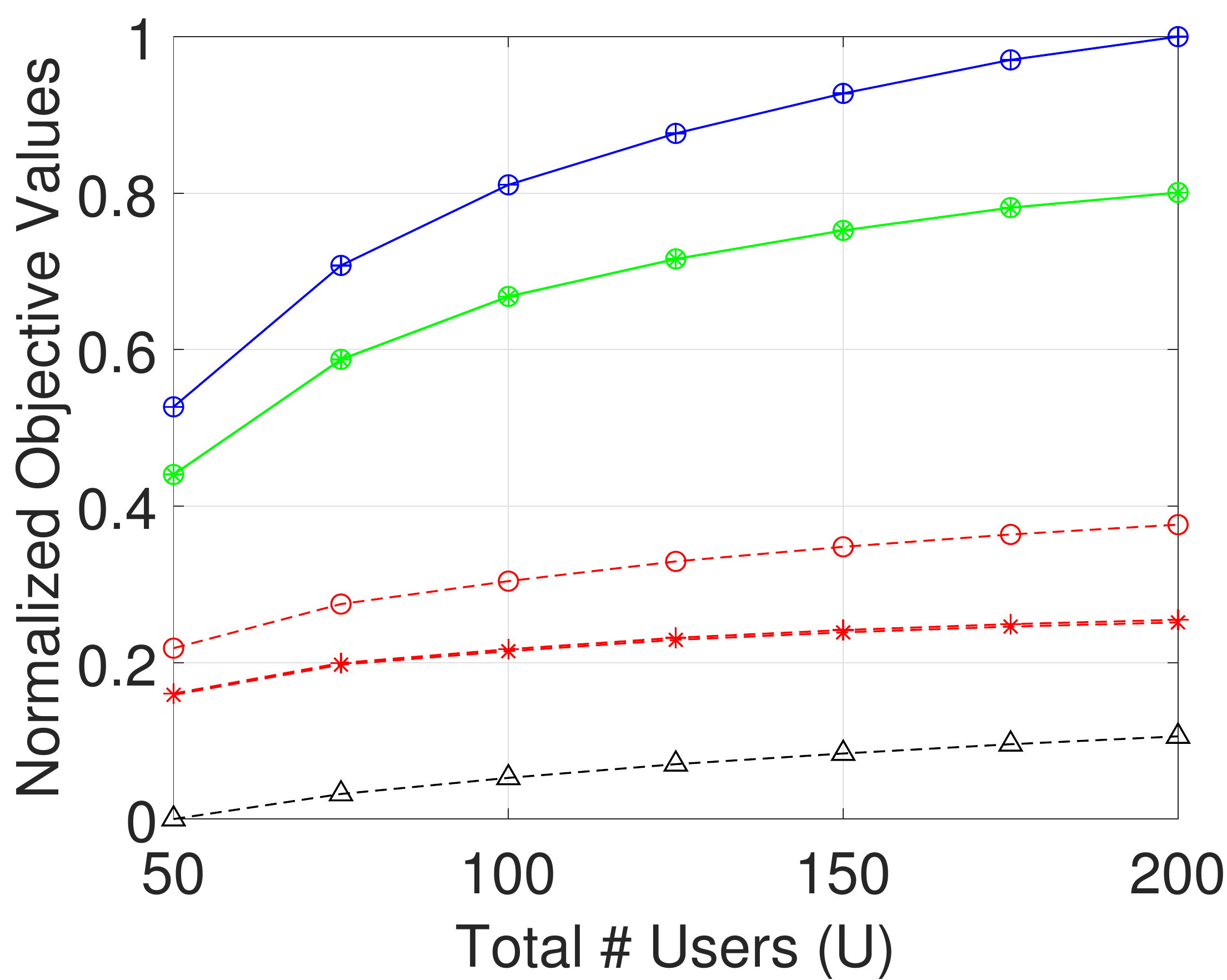}  
        \caption{$\alpha=0.25$}
 \label{fig:u_25}
    \end{subfigure}
            \begin{subfigure}[b]{0.24 \textwidth}
\includegraphics[width=\columnwidth]{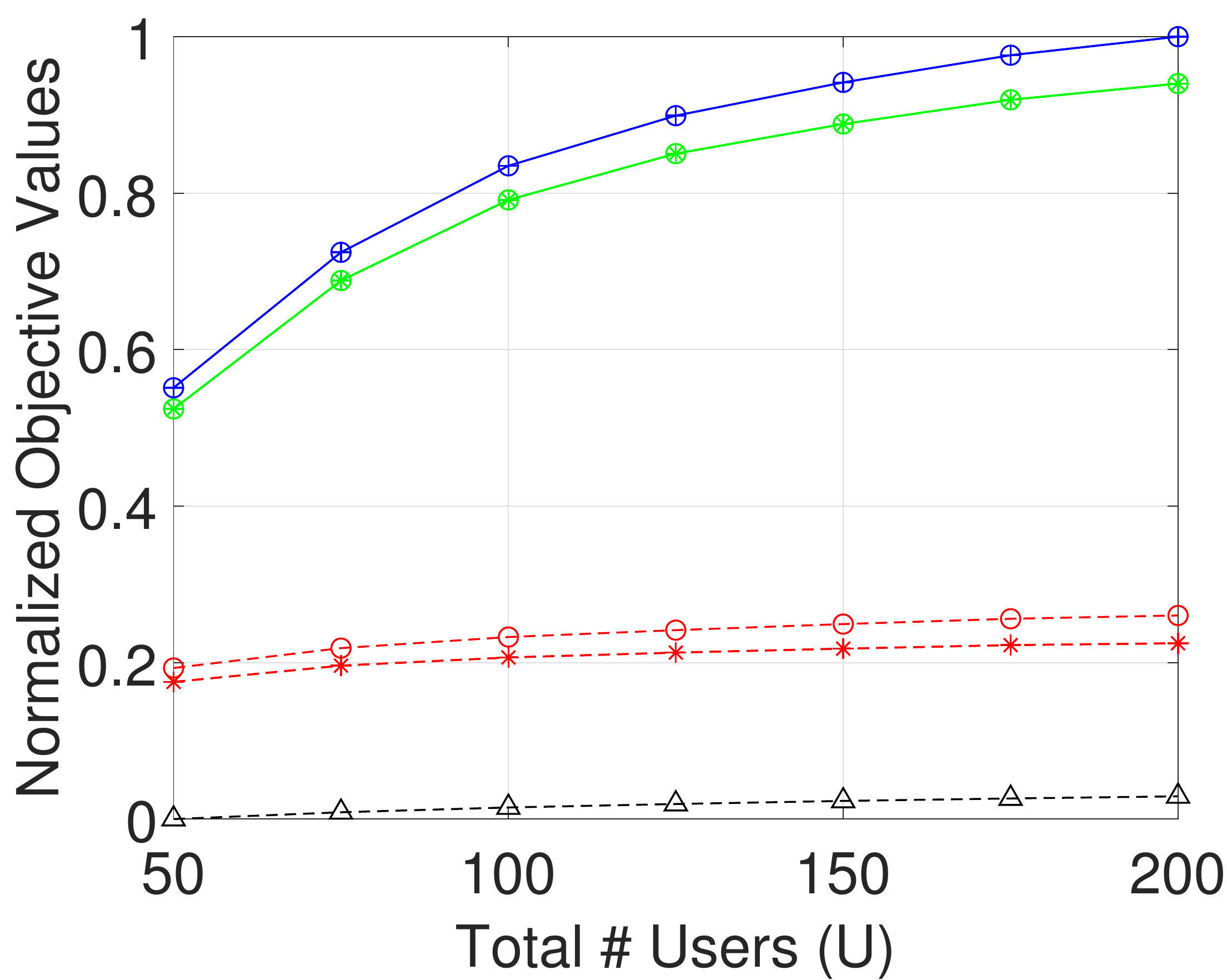}  
        \caption{$\alpha=0.5$}
 \label{fig:u_50}
     \end{subfigure}
         \begin{subfigure}[b]{0.24 \textwidth}
\includegraphics[width=\columnwidth]{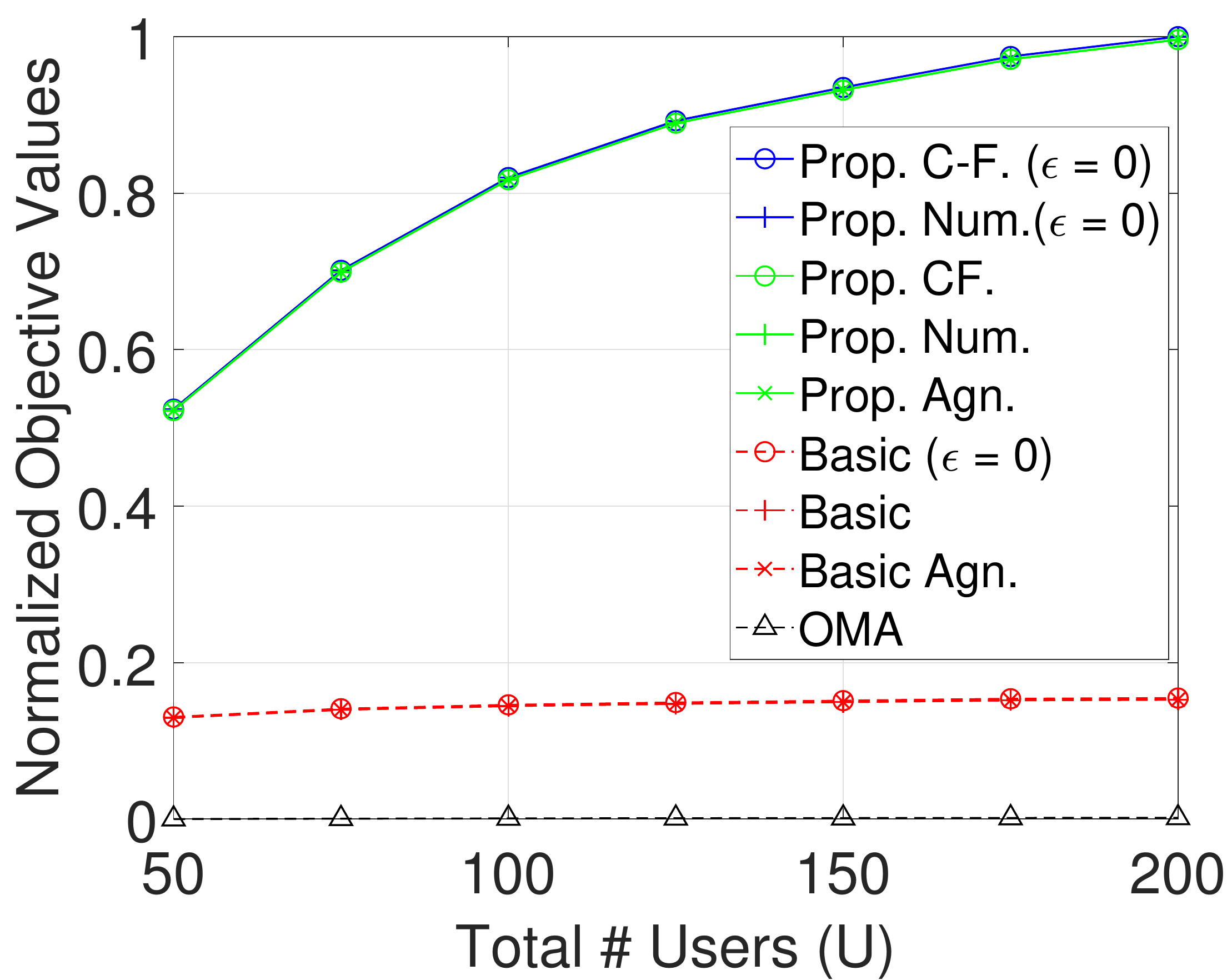}  
        \caption{$\alpha \to 1$ (Prop. Fairness)}
 \label{fig:u_1}
            \end{subfigure}
    \caption{Normalized network sum-rate vs. total number of UEs $U$.}
          \label{fig:u}    
\end{figure}

Let us start  our investigation with the influence of FEF levels on the network performance under different $\alpha$ scenarios as shown in Fig. \ref{fig:fef}. We involve ourselves in FEF effects since it is quite decisive on the pattern observed in the rest of the parameters. The severe performance degradation depicted in Fig. \ref{fig:fef_0} points out that NOMA cannot always deliver a better performance than OMA, thus, SIC receivers should have a desirable efficiency (i.e., $1-\epsilon$) in order to reduce the negative effects of the residual interference on the maximum throughput objective. We must also note for Fig. \ref{fig:fef_0} that a higher cluster size is not beneficial after a certain value of $\epsilon$ since putting more users on the same radio resource causes higher interference due to the increasing residual interference. As $\alpha \to 1$ in Fig. \ref{fig:fef_0}-\ref{fig:fef_1}, we observe the following behaviors: The performance gain between proposed and basic NOMA and that between basic NOMA and OMA increases monotonically. This can clearly be seen from perfect basic NOMA ({\color{red}$-\hspace{-2pt}\circleddash \hspace{-12.5pt}- \hspace{2pt} -$}) and OMA ($- \cdot \hspace{-9.5pt} \bigtriangleup \hspace{-2pt} -$) cases which are around $0.95$/$0.75$/$0.5$/$0.2$ and $0.9$/$0.65$/$0.3$/$0$ for $\alpha$ at $0$/$0.25$/$0.5$/$1$, respectively. This is indeed because of the combination of inherited NOMA fairness and proportional fairness enforced as $\alpha \to 1$.

Moreover, increasing the performance difference between proposed and basic NOMA curves points out that higher cluster sizes more favorable as $\alpha$ reaches to the proportional fairness. Another important pattern to observe is that the undesirable impacts of residual interference diminish since proposed ({\color{green}$-\hspace{-4pt}\ominus \hspace{-4pt}-$}) and basic({\color{red}$-\hspace{-2pt} \circleddash \hspace{-12.5pt}- \hspace{2pt} -$}) NOMA gets closer to corresponding perfect cases as $\alpha \to 1$. Because the UE that contributes the total cluster sumrate is protected no more against the negative impact of the FEF as $\alpha \to 1$ optimal scheme seeks for proportional fairness not only among the clusters but also among the members of a cluster. 

Fig. \ref{fig:k} clearly demonstrates the full benefit of allowing larger NOMA clusters. It is quite interesting that agnostic case of larger cluster sizes turns in a better performance than the perfect basic NOMA scheme of maximum throughput case in Fig. \ref{fig:k_0}. It is also clear that as $\alpha$ approaches the proportional fairness, the network enjoys a larger cluster size more than the maximum throughput. For instance, the ratio between the proposed and the basic NOMA is $1.5$ and $1.9$ for $\bar{K}=3$ and $\bar{K}=10$ under the maximum throughput case, respectively. On the other hand, the ratio between the proposed and the basic NOMA is $3$ and $5$ for $\bar{K}=3$ and $\bar{K}=10$ under the proportional fair objective, respectively.

The impact of UE association scheme on the network performance is demonstrated in Fig. \ref{fig:beta}. As shown in Fig. \ref{fig:beta_0}, network throughput hits a peak when users are associated as per DUDe $\left(\beta=\frac{P_s}{P_m}\right)$, that monotonically degrades as $\beta \rightarrow 0$ and $\beta \rightarrow 1$ in the DUCo scheme. In particular, $\beta = 1$ loads the MBS down with the entire traffic, thus, deliver the worst performance mainly because of the deteriorated cell-edge performance and its inevitable consequence of uncancelled or residual interference to other users. Except for the proposed case, this trend also applies for other $\alpha$ cases. Another important pattern to observe is that negative influence of $\beta$ in the network performance diminishes as $\alpha \to 1$.

Fig. \ref{fig:b} presents the performance trend for increasing number of SBS under different $\alpha$ cases. Increasing $B$ helps the maximum throughput case due to more desirable channel gains since DUDe has a better opportunity to associate UEs with nearby BSs. However, increasing $B$ does not show the same trend as $\alpha \to 1$ because a larger cluster size is more preferable for proportional fairness. Similarly, Fig. \ref{fig:u}  exhibits the increasing behavior of the performance as the total number of UEs increases. Apparently, increasing the total number of UEs provide less performance increase as $\alpha \to 1$.

\section{Conclusion}
\label{sec:conc}
In this paper, an $\alpha$-fair resource allocation and cluster formation  problem is studied for DUDe HetNets under the imperfection of NOMA scheme due to the residual interference and SIC constraints. Unlike the traditional basic NOMA cluster of size two, the largest feasible cluster size is derived in the closed-form as a function cluster bandwidth, SINR requirements, and the FEF levels. Numerical results have clearly shown that a larger cluster size provides a better performance thanks to improved spectral efficiency. Furthermore, we develop a distributed cluster formation and power-bandwidth allocation framework which iteratively updates clusters, power allocations, and bandwidths. For a given bandwidth and cluster formation, optimal power control policy is derived in closed form. By extensive simulation results, we have demonstrated that delivered network performance has different trends under various network parameters. 

\appendices
\section{Proofs for Lemma \ref{lem:CS} and Lemma \ref{lem:CSspec}}
\label{app_lem_1}
\begin{proof}[Proof of Lemma \ref{lem:CS}]
This proof follows from the discussion within the paragraph before Lemma \ref{lem:CS}. Exploiting the eigenvalue equation, $\vect{\mathrm{F}}\vect{\mathrm{\nu}}=\lambda_F \vect{\mathrm{\nu}}$, we have the following set of equations
\begin{align}
\label{eq:ev_func1}
 \frac{\lambda_F}{\bar{\Gamma}_1(\theta) }\nu_1 & = \sum_{i=1}^K \nu_i\\
\label{eq:ev_func2}
 \frac{\lambda_F}{\bar{\Gamma}_i(\theta) }\nu_i&=\epsilon \sum_{j=1}^{i-1} \nu_i + \sum_{k=i+1}^{K} \nu_i, i \geq 2,
\end{align}
where $\nu_i$ is the $i^{th}$ element of the eigenvector of $\vect{\mathrm{\nu}}$, which can be obtained recursively as follows
\begin{align}
\label{eq:eigen1}
&\nu_{2}=\nu_{1} \frac{\epsilon+\frac{\lambda_F}{\bar{\Gamma}_{1}(\theta) }}{1+\frac{\lambda_F}{\bar{\Gamma}_{2}(\theta) }}, \:
%\label{eq:eigen2}
\nu_{3}=\nu_{2} \frac{\epsilon+\frac{\lambda_F}{\bar{\Gamma}_{2}(\theta) }}{1+\frac{\lambda_F}{\bar{\Gamma}_{3}(\theta) }}= \\
\nonumber  &\nu_{1} \frac{ \left(\epsilon+\frac{\lambda_F}{\bar{\Gamma}_{1}(\theta) }\right) \left(\epsilon+\frac{\lambda_F}{\bar{\Gamma}_{2}(\theta) } \right) }{ \left(1+\frac{\lambda_F}{\bar{\Gamma}_{2}(\theta) }\right) \left(1+\frac{\lambda_F}{\bar{\Gamma}_{3}(\theta) } \right) }, \hdots,
%\label{eq:eigen3}&
\nu_{k}=\nu_{1}\prod^{k}_{i=2}\frac{\left(\epsilon+\frac{\lambda_F}{\bar{\Gamma}_{i-1}(\theta) }\right)}{\left(1+\frac{\lambda_F}{\bar{\Gamma}_{i}(\theta) }\right)}.
\end{align}
Assuming a non-ideal SIC receiver, $\epsilon>0$, $\vect{\mathrm{H}}$ becomes an irreducible positive matrix. Ensuring $\lambda_F<1$ in \eqref{eq:eigen1}, the \textit{Perron-Frobenius theorem} yields      
\begin{equation}
\label{eq:eigen4}
\sum^{K}_{i=2}\prod^{i}_{j=2}\frac{(\epsilon+\frac{\lambda_F}{\bar{\Gamma}_{j-1}(\theta) })}{(1+\frac{\lambda_F}{\bar{\Gamma}_{j}(\theta) })}=\frac{\lambda_F}{\bar{\Gamma}_{1}(\theta) }
\end{equation}
Accounting for the feasibility condition $\lambda_F < 1$, the largest feasible cluster size falls within the range of $K_{\min}\leq K\leq K_{\max}$ where the bounds can be obtained from \eqref{eq:eigen4} as
\begin{align}
\label{eq:Krange}
& K_{\min}= \left \lceil \frac{\ln(\epsilon)}{ \ln \left(\frac{1+\epsilon \max_i (\bar{\Gamma}_{i}(\theta))}{1+\max_i (\bar{\Gamma}_{i}(\theta))} \right)} \right \rceil \text{, } \\
&K_{\max}=
 \left \lfloor \frac{\ln(\epsilon)}{\ln \left(\frac{1+\epsilon \min_i (\bar{\Gamma}_{i}(\theta))}{1+\min_i (\bar{\Gamma}_{i}(\theta))} \right)} \right \rfloor. 
\end{align}
This range tightens as the composite SINR requirements tightens and finally reduces to an exact cluster size of 
\begin{equation}
\label{eq:K*}
K(\epsilon, \theta )=\left \lfloor \frac{\ln(\epsilon)}{\ln \left(\frac{1+\epsilon \bar{\Gamma}(\theta)}{1+\bar{\Gamma}(\theta)} \right)} \right \rfloor, \text{ if }\bar{\Gamma}_i(\theta)=\bar{\Gamma}(\theta), \forall i
\end{equation}
Reverse engineering of \eqref{eq:K*} yields the attainable feasible SINR for a given cluster size as 
\begin{equation}
\label{eq:Gamma*}
\bar{\Gamma}^*(K(\epsilon, \theta))=\frac{e^{\ln(\epsilon)/K}-1}{\epsilon-e^{\ln(\epsilon)/K}}, \text{ if }\bar{\Gamma}_i(\theta)=\bar{\Gamma}(\theta), \forall i.
\end{equation} 
\end{proof}

\section{Proof for Lemma \ref{lem:CSconstrained}}
\label{app_lem_3}

\begin{proof}
Building upon Appendix \ref{app_lem_1}, optimal power levels can be derived directly by solving \eqref{eq:SINR-matrix} as follows
\begin{align}
\label{eq:p1}
p_1&=\frac{\bar{\Gamma}_1(\theta) \sigma^2}{1-\sum_{i=1}^K \bar{\Gamma}_i (\theta) \prod_{j=2}^{i} \left( \frac{1+\epsilon \bar{\Gamma}_{j-1}(\theta) }{1+\bar{\Gamma}_{j}(\theta)} \right) }\\
\label{eq:pk}
p_k&=p_{1}\frac{\bar{\Gamma}_{2}(\theta)}{\bar{\Gamma}_{1}(\theta)}\prod^{k}_{j=2} \left(\frac{1+\epsilon\bar{\Gamma}_{j-1}(\theta) }{1+\bar{\Gamma}_{j}(\theta)} \right), \: k \geq 2.
\end{align}
which can be simplified for $\bar{\Gamma}_i(\theta)=\bar{\Gamma}(\theta), \forall i,$ or $\max_i \left(\bar{\Gamma}_i(\theta) \right)=\bar{\Gamma}(\theta)$ as  
\begin{align}
\label{eq:pk_special}
\nonumber p_k&=\left( \frac{1+\epsilon \bar{\Gamma}(\theta)}{1+\bar{\Gamma}(\theta)}\right)^{k-1} \times \\
&  \frac{\bar{\Gamma} (\theta) \sigma^2}{1-\left( \frac{1+\epsilon \bar{\Gamma} (\theta)}{1-\epsilon}\right)+\left( \frac{1+\epsilon \bar{\Gamma}(\theta)}{1-\epsilon}\right) \left( \frac{1+\epsilon \bar{\Gamma}(\theta)}{1+\bar{\Gamma}(\theta)}\right)^{K-1}}, \:k \geq 2,
\end{align} 
which follows from the fact that the second term of denominator of \eqref{eq:p1} becomes a geometric series sum by setting $ \bar{\Gamma}_i(\theta)=\bar{\Gamma}(\theta), \forall i.$ For  $\bar{\Gamma}_i(\theta)=\bar{\Gamma}(\theta), \forall i$ or $\max_i \left(\bar{\Gamma}_i (\theta)\right)=\bar{\Gamma}(\theta)$, the largest feasible cluster size range can be obtained from \eqref{eq:pk_special} as $K_{\min}\leq K\leq K_{\max}$ where 
\begin{align}
\label{eq:KminConst}
 K_{\min}&=\left \lfloor 1+ \frac{\ln \left( \frac{\epsilon(1+\bar{\Gamma}(\theta))}{\epsilon-1}\right)-\ln \left( \frac{\bar{\Gamma}(\theta) \sigma^2}{\bar{P}_u g_K}-\frac{1+\epsilon\bar{\Gamma}(\theta)}{1-\epsilon}\right) }{\ln \left( \frac{1+\epsilon\bar{\Gamma}(\theta)}{1+ \bar{\Gamma}(\theta)}\right) } \right \rfloor, \\
\label{eq:KmaxConst}
K_{\max}&=\left \lceil 1+ \frac{\ln \left( \frac{\bar{\Gamma}(\theta) \sigma^2}{\bar{P}_u g_1}+  \frac{\epsilon(1+\bar{\Gamma}(\theta))}{1-\epsilon}\right) -  \ln \left( \frac{1+\epsilon\bar{\Gamma}(\theta)}{1-\epsilon} \right)}{\ln \left( \frac{1+\epsilon\bar{\Gamma}(\theta)}{1+ \bar{\Gamma}(\theta)}\right) } \right \rceil.
\end{align}
Equations \eqref{eq:KminConst} and  \eqref{eq:KmaxConst} are obtained by substituting \eqref{eq:pk_special} into $\bar{P}_u g_1 \geq p_1$ and $\bar{P}_u g_K \geq p_K$, respectively, rewriting for $K$, and taking the natural logarithm from both sides. 
\end{proof}

\section{Proof of Lemma \ref{lem:cfp}}
\label{app_lem_4}

\begin{proof}
This appendix explains how Table \ref{tab:pwr_cond} created and Lemma \ref{lem:cfp} is obtained. First, let us consider the index set of UEs who are active at maximum transmission power constraint, i.e., $\forall i \in \mathcal{I}_\lambda$. Obviously, such UEs set their power weights to unity, i.e., $\omega_i=1, \forall i \in \mathcal{I}_\lambda$, as in the first case of \eqref{eq:cfp}. The optimal power weights of remaining UEs who are active at CSC, i.e., $\forall i \in \mathcal{I}_\mu$, can be directly obtained from CSCs as follows:
\begin{align}
\nonumber w_{i} & \frac{h_i}{\bar{\Gamma}_i} = \epsilon \sum_{ \substack{ 1\leq j< \mathrm{\max_i}\\ j \in \mathcal{I}_{\lambda} } }h_{j}+\sum_{\substack{ \mathrm{\max_i} <j \leq K \\ j \in \mathcal{I}_{\lambda}  }  }h_{j}\\
&\vspace{-5pt}+\epsilon \sum_{ \substack{1\leq j<\mathrm{\max_i}\\j \in \mathcal{I}_{\mu}  }  }w_{j}h_{j}+\sum_{\substack{ \mathrm{\max_i} <j \leq K\\j \in \mathcal{I}_{\mu}  }  }w_{j}h_{j}  + \varrho,  \forall i \in \mathcal{I}_\mu \label{eq:omega_i}
\end{align}
where $\mathrm{max_i}=\argmax\{m \vert m \in \mathcal{I}_\mu , m<i\}$. Since UE$_i, \: \forall i \in \mathcal{I}_\mu,$ are active at CSCs, \eqref{eq:omega_i} is obtained by substituting $\omega_j=1, \forall j \in \mathcal{I}_\lambda$, and rewriting $R_i=\bar{\Gamma}_i$ for $\omega_i, \forall i \in \mathcal{I}_\mu$. Exploiting \eqref{eq:omega_i}, $\omega_i-\omega_{\mathrm{\max_i}}$ can be written as
 \begin{align}
\nonumber w_{i}\frac{h_{i}}{\bar{\Gamma}_{i}}-& w_{\mathrm{\max_i}}\frac{h_{\mathrm{\max_i}}}{\bar{\Gamma}_{\mathrm{\max_i}}} =\epsilon w_{\mathrm{\max_i}}h_{\mathrm{\max_i}}-w_{i}h_{i}\\
& +(\epsilon-1)\sum_{\substack{\mathrm{\max_i}<j<i\\j\in \mathcal{I}_{\lambda}}}h_{j}, ,  \forall i \in \mathcal{I}_\mu  \label{eq:omega_i-omega_i-1}
\end{align}
After some algebraic manipulations on \ref{eq:omega_i-omega_i-1}, the first-order non-homogeneous recurrence relations with variable coefficients can be obtained as
\begin{align}
\label{eq:recur}
 w_{i}=&w_{\mathrm{\max_i}}\frac{h_{\mathrm{\max_i}}\left(\epsilon+\frac{1}{\bar{\Gamma}_{\mathrm{\max_i}}}\right)}{h_{i}\left(1+\frac{1}{\bar{\Gamma}_{i}}\right)} +\frac{(\epsilon-1)\sum_{\substack{\mathrm{\max_i}<j<i\\j\in \mathcal{I}_{\lambda}}}h_{j}}{h_{i}\left(1+\frac{1}{\bar{\Gamma}_{i}}\right)},
\end{align}
$ \forall i \in \mathcal{I}_\mu$, which is apparently in the form of $w_{i}=w_{\mathrm{\max_i}}a_{i} +b_{i}$ where $a_{i}=\frac{h_{\mathrm{\max_i}}\left(\epsilon+\frac{1}{\bar{\Gamma}_{\mathrm{\max_i}}}\right)}{h_{i}\left(1+\frac{1}{\bar{\Gamma}_{i}}\right)}$ and $b_{i}=\frac{(\epsilon-1)\sum_{\substack{\mathrm{\max_i}<j<i\\j\in \mathcal{I}_{\lambda}}}h_{j}}{h_{i}\left(1+\frac{1}{\bar{\Gamma}_{i}}\right)}$. Accordingly, the recurrent relation in \eqref{eq:recur} can be rewritten as in \eqref{eq:cfp}
%\begin{equation}
%\label{eq:recur2}
%\omega_i=\omega_{\mathrm{mind}} \prod^{i-1}_{\substack{j=1 \\  \forall j \in \mathcal{A}_\mu }}a_{j}+c_{i} \sum^{i-1}_{\substack{j=1 \\  \forall j \in \mathcal{A}_\mu }} \left(\prod^{i-1}_{\substack{k=j+1 \\  \forall k \in \mathcal{A}_\mu }}a_{k} \right), \text{for } \forall i \in \mathcal{A}_\mu, i > s. 
%\end{equation}
%where $\mathrm{mind}\triangleq \argmin(\mathcal{I}_\mu)$ is the minimum index of UEs within $\mathcal{S}_\mu$. 
solution of which can be obtained as in $\omega_{\rm{mind}}$ by following the standard procedure given in \cite[Theorem 4.2]{Arne11}.
%\begin{equation}
%\omega_s=\frac{\bar{\Gamma}_{s}}{h_{s}}\left(\frac{\sum^{K}_{{\substack{j=s+1 \\  \forall j \in \mathcal{A}_\mu }}} \left(\sum^{j-1}_{{\substack{k=s \\  \forall k \in \mathcal{A}_\mu }}}\left(\prod^{j-1}_{{\substack{l=k+1 \\  \forall l \in \mathcal{A}_\mu }}}a_{l} \right) c_{k}\right)+\varrho}{\sum^{K}_{{\substack{j=s+1 \\  \forall j \in \mathcal{A}_\mu }}}\left(\prod^{j-1}_{{\substack{k=s \\  \forall k \in \mathcal{A}_\mu }}}a_{k}\right)}\right), s \in \mathcal{A}_\mu, j \geq s, \forall j \in \mathcal{A}_\mu.
%\end{equation}
\end{proof}

% References
\bibliographystyle{IEEEtran}

\bibliography{Final}

\vspace*{-3\baselineskip}
\begin{IEEEbiography}[{\includegraphics[width=1.1in,height=1.25in]{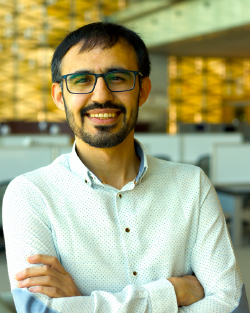}}]{Abdulkadir \c Celik}
(S'14-M'16) received the B.S. degree in electrical-electronics engineering from Sel\c cuk University, Konya, Turkey in 2009, the M.S. degree in electrical engineering in 2013, the M.S. degree in computer engineering in 2015, and the Ph.D. degree in co-majors of electrical engineering and computer engineering in 2016, all from Iowa State University, Ames, IA, USA. He is currently a postdoctoral research fellow at Communication Theory Laboratory of King Abdullah University of Science and Technology (KAUST). His current research interests include but not limited to 5G and beyond, wireless data centers, UAV assisted cellular and IoT networks, and underwater optical wireless communications, networking, and localization.
\end{IEEEbiography}
\vspace*{-3\baselineskip}
\begin{IEEEbiography}[{\includegraphics[width=1.1in,height=1.25in]{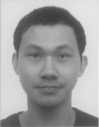}}]{Ming-Cheng Tsai}(S'18) was born in Fujian, China. He received his B.E. degree in electrical engineering from the National Taipei University of Technology, Taipei, Taiwan, in 2015. From 2015 to 2018, he was a student of Communication Engineering at the National Tsinghua University. He is currently pursuing his M.S./Ph.D. degree in Electrical Engineering at King Abdullah University of Science and Technology. His research interests include device to device communication, digital communication, and error correcting codes. 
\end{IEEEbiography}
\vspace*{-3\baselineskip}
\begin{IEEEbiography}[{\includegraphics[width=1.1in,height=1.25in]{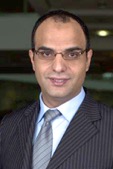}}]{Redha M. Radaydeh}(S'05-M'07-SM'13) was born in Irbid, Jordan, on November 12, 1978. He received the B.S. and M.S. degrees from Jordan University of Science and Technology (JUST), Irbid, in 2001 and 2003, respectively, and the Ph.D. degree from University of Mississippi, Oxford, MS, USA, in 2006, all in electrical engineering. He worked at JUST, King Abdullah University of Science and Technology (KAUST), Texas A\& M University at Qatar (TAMUQ), and Alfaisal University as an associate professor of electrical engineering. He also worked as a remote research scientist with KAUST and was a visiting researcher with Texas A\& M University (TAMU), College Station, TX. Currently, he is a faculty member with the electrical engineering program at Texas
A\& M University-Commerce (TAMUC), Commerce, TX. His research interests include broad topics on wireless communications, and design and performance analysis of wireless networks.
\end{IEEEbiography}
\vspace*{-3\baselineskip}
\begin{IEEEbiography}[{\includegraphics[width=1.1in,height=1.25in]{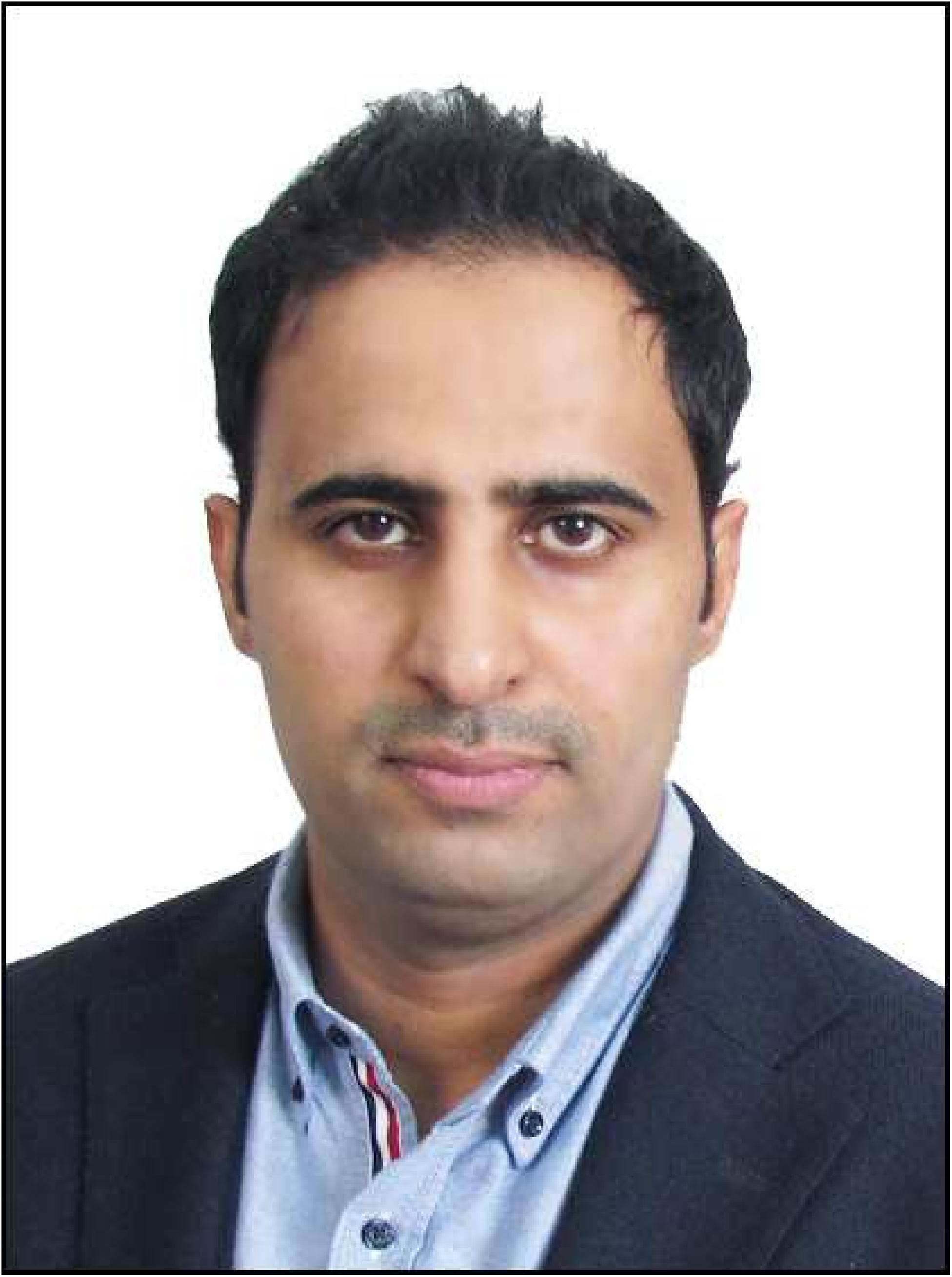}}]{Fawaz S. Al-Qahtani}
(M'10) received the B.Sc. in electrical engineering from King Fahad University of Petroleum and Minerals (KFUPM), Saudi Arabia in 2000 and M.Sc. in Digital Communication Systems from Monash University, Melbourne, Australia in 2005, and Ph.D. degree in Electrical and Computer Engineering, from RMIT University, Australia in December, 2009. He is currently working as IP commercialization manager for ICT portfolio at Research, Development, and Innovation (RDI) in Qatar Foundation, Doha, Qatar. From May 2010 to August 2017, he was research scientist with Texas A\& M University at Qatar, Education City, Doha, Qatar. His research has been sponsored by Qatar National Research Fund (QNRF). He was awarded of JSERP and NPRP projects. His current research interests include channel modeling, applied signal processing, MIMO communication systems, cooperative communications, cognitive radio systems, free space optical, physical layer security, and device-to-device communication. He is the author or co-author of 100 technical papers published in scientific journals and presented at international conferences.
\end{IEEEbiography}
\vspace*{-3\baselineskip}
\begin{IEEEbiography}[{\includegraphics[width=1.1in,height=1.25in]{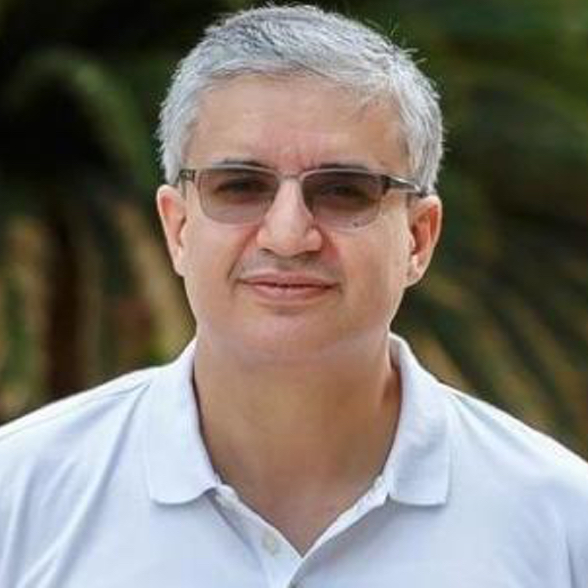}}]{Mohamed-Slim Alouini}
(S'94-M'98-SM'03-F'09) was born in Tunis, Tunisia. He received the Ph.D. degree in Electrical Engineering from the California Institute of Technology (Caltech), Pasadena, CA, USA, in 1998. He served as a faculty member in the University of Minnesota, Minneapolis, MN, USA, then in the Texas A\&M University at Qatar, Education City, Doha, Qatar before joining King Abdullah University of Science and Technology (KAUST), Thuwal, Makkah Province, Saudi Arabia as a Professor of Electrical Engineering in 2009. His current research interests include the modeling, design, and performance analysis of wireless communication systems.
\end{IEEEbiography}

\end{document}